\newtheorem*{corollary*}{Corollary}
\DeclareMathOperator*{\argmin}{arg\,min}
\def\F{\ensuremath{\mathcal{F}}}
\def\H{\ensuremath{\mathcal{H}}}
\def\R{\ensuremath{\mathcal{R}}}
\def\C{\ensuremath{\mathcal{C}}}
\def\Q{\ensuremath{\mathcal{Q}}}
\def\P{\ensuremath{\mathcal{P}}}
\def\S{\ensuremath{\mathcal{S}}}
\def\O{\ensuremath{\mathcal{O}}}
\def\I{\ensuremath{\mathcal{I}}}
\def\opt{{\textsc{Opt}}}
\def\approx{{\textsc{Approx}}}
\def\T{\ensuremath{\mathcal{T}}}
\def\etal{\emph{et al.}}
\newcounter{listcounter}
\renewcommand{\thelistcounter}{\roman{listcounter}}
\newcommand{\descr}{\begin{list}{(\thelistcounter)}
{\usecounter{listcounter}
\setlength{\rightmargin}{0mm}}}
\newtheorem{lemma}{Lemma}[section]
\newtheorem{theorem}[lemma]{Theorem}
\newtheorem{claim}[lemma]{Claim}
\newif\ifpfsymb
\newenvironment{defn}{\par\addvspace{13pt}\noindent{\bf Definition
		\addtocounter{lemma}{1} \hspace{-1mm}\thelemma
		\hspace{2mm}}}{\par\addvspace{10pt}}
\newcounter{algo}
\def\thealgo{\@arabic\c@algo}
\def\fps@algo{tbp}
\def\ftype@algo{1}
\def\ext@algo{loa}
\def\fnum@algo{Algorithm \thealgo}
\def\algo{\@float{algo}}
\def\endalgo{\end@float}
\def\remlab#1{\@bsphack\if@filesw {\let\thepage\relax
   \def\protect{\noexpand\noexpand\noexpand}%
\xdef\@gtempa{\write\@auxout{\string
	   \newlabel{rem:#1}{{\thelemma}{\thepage}}}}}\@gtempa
            \if@nobreak \ifvmode\nobreak\fi\fi\fi\@esphack}
\def\deflab#1{\write\@auxout{\string
	\newlabel{def:#1}{{\thelemma}{\thepage}}}}
\gdef\setft#1#2#3{%
\def\@oddfoot{%
{\setbox0=\hbox{#1}%
\setbox1=\hbox{#3}%
\ifdim\wd0>\wd1%
\dimen0=\wd0%
\box0\hfil#2\hfil\hbox to\dimen0{\hfil\hfil\box1}%
\else \dimen0=\wd1%
\hbox to\dimen0{\box0\hfil}\hfil#2\hfil\box1\fi%
}}}}
\gdef\sethd#1#2#3{%
\def\@oddhead{\vbox{\hbox to\hsize{{#1}\hfil{#2}\hfil{#3}}%
\vspace{0.06in}%
\hbox to \hsize{\hrulefill}\vspace*{-0.09in}}}
\def\@evenhead{\@oddhead}
	}
\def\mysecn#1{\setcounter{equation}{0}
\section*{#1}\mark{#1}}
\def\thebibliography#1{\mysecn{References}
\addcontentsline{toc}{section}{References}\list
{[\arabic{enumi}]}{\settowidth\labelwidth{[#1]}\leftmargin\labelwidth
 \advance\leftmargin\labelsep
 \usecounter{enumi}}
 \def\newblock{\hskip .11em plus .33em minus .07em}
 \sloppy\clubpenalty4000\widowpenalty4000
 \sfcode`\.=1000\relax
 \small}
\def\complaint#1{}
\def\withcomplaints{
%\addtolength{\oddsidemargin}{-1.4cm}
%\addtolength{\evensidemargin}{-1.4cm}
\newcounter{mycomplaints}
\def\complaint##1{\refstepcounter{mycomplaints}%
\ifhmode%
\unskip%
{\dimen1=\baselineskip \divide\dimen1 by 2 %
\raise\dimen1\llap{\tiny -\themycomplaints-}}\fi%
\marginpar{\tiny [\themycomplaints]: ##1}}%
}
\newcounter{printertype}
\def\figprint#1{
        \ifcase \theprintertype

		\begin{center}
                 \input{#1}
		\end{center}
              \or
                 \centerline{\psfig{figure=#1.ps}}
              \else
                 \vspace*{1in}
        \fi}
\long\def\@myfootnotetext#1{\insert\footins{\footnotesize
    \interlinepenalty\interfootnotelinepenalty 
    \splittopskip\footnotesep
    \splitmaxdepth \dp\strutbox \floatingpenalty \@MM
    \hsize\columnwidth \@parboxrestore
   \edef\@currentlabel{\csname p@footnote\endcsname\@thefnmark}\@makemyfntext
    {\rule{\z@}{\footnotesep}\ignorespaces
      #1\strut}}}
\def\myfootnotetext{\@ifnextchar
[{\@xfootnotenext}{\xdef\@thefnmark{\thempfn}\@myfootnotetext}}
\long\def\@makemyfntext#1{\parindent 5mm #1}
\newcounter{proof}
\def\@@meqncr{\let\@tempa\relax
    \ifcase\@eqcnt \def\@tempa{& & &}\or \def\@tempa{& &}
      \else \def\@tempa{&}\fi
     \@tempa $\Box$\addtocounter{proof}{-1}
     \global\@eqnswtrue\global\@eqcnt\z@\cr}
\def\mequation{$$\global\@ignoretrue}
\def\F{\ensuremath{\mathcal{F}}}
\def\H{\ensuremath{\mathcal{H}}}
\def\R{\ensuremath{\mathcal{R}}}
\def\C{\ensuremath{\mathcal{C}}}
\def\eps{\ensuremath{\epsilon}}
\def\d2{\ensuremath{\lfloor d/2 \rfloor}}
\newcommand{\remove}[1]{}
\newcommand{\hide}[1]{}
\newtheorem{observation}{Observation}[section]
\newtheorem{defi}{Definition}[section]
\renewcommand{\Re}{\mathbb{R}}
\newsavebox{\smallProofsym}                            % smallproofsymbol
\begin{document}
%\linenumbers

%\title{Settling the APX-hardness Status for Geometric Set Cover} 
\title{QPTAS for Geometric Set-Cover Problems via Optimal Separators} 
\date{}
 
\author{ 
        Nabil H. Mustafa \\ Universit\'e Paris-Est, \\ Laboratoire d'Informatique Gaspard-Monge, \\ Equipe A3SI, ESIEE Paris.\\ mustafan@esiee.fr \and
        Rajiv Raman \\ Dept. of Computer Science. \\  IIIT, Delhi.  \\ rajiv@iiitd.ac.in \and
        Saurabh Ray \\  Computer Science, New York University, Abu Dhabi. \\ saurabh.ray@nyu.edu}

%saurabh@cs.uni-sb.desaurabh@cs.uni-sb.de
%%
%saurabh@cs.uni-sb.de
%saurabh@mpi-inf.mpg.de 

\maketitle
\thispagestyle{empty}

\begin{abstract}
\hide{
\textbf{FIX SUBTLETY WITH INSIDE/OUTSIDE SEPARATOR}: no point on the separator
boundary\\
-LEMMAS IN THE MAIN THAT REFER TO THE LEMMAS IN THE APPENDIX, FIX THAT \\
}

Weighted geometric set-cover problems arise naturally in several   geometric
and non-geometric settings
(e.g. the breakthrough of Bansal-Pruhs (FOCS 2010) reduces a wide class of machine
scheduling problems to weighted geometric set-cover).
%, and therefore the problem of computing small geometric set covers has been studied extensively since the early 1980s. 
More than two decades of research has succeeded in settling
the $(1+\eps)$-approximability status for most geometric set-cover problems, except
for four basic scenarios which are still lacking. 
One is that of weighted
disks in the plane for which, after a series of papers, 
Varadarajan (STOC 2010) presented a clever
\emph{quasi-sampling} technique, which together
with improvements by Chan \etal~(SODA 2012), yielded
a $O(1)$-approximation algorithm.
Even for the unweighted case, a PTAS for a fundamental class
of objects called pseudodisks (which includes disks, unit-height rectangles,
translates of convex sets etc.) is currently unknown. 
Another fundamental case is weighted halfspaces in $\Re^3$, for which
a PTAS is currently lacking.
% for which the earlier
%technique of Varadarajan and Chan \etal{} also gives a $O(1)$-approximation factor.
In this paper, we present a QPTAS for all of these remaining problems. 
Our results are based
on the separator framework of Adamaszek-Wiese (FOCS 2013, SODA 2014), who recently obtained a QPTAS for weighted independent set of polygonal regions.
This rules out the possibility that these problems are APX-hard,
assuming $\textbf{NP} \nsubseteq \textbf{DTIME}(2^{polylog(n)})$.
Together with the recent work of Chan-Grant (CGTA 2014), this  settles the APX-hardness status  
for all natural geometric set-cover problems.

%For the fundamental class of objects called pseudodisks (disks, unit-height rectangles,
%translates of a convex set), even a PTAS
%for the unweighted case is a long-standing open problem.
%In this paper, our main result is presenting
%a QPTAS for both these problems based on the framework of
%Adamaszek and Wiese~\cite{AW13, AW14}, who recently obtained QPTAS for weighted independent set of polygonal regions.
%Their new result is the existence of small-complexity separators for geometric
%objects in the plane. We observe that their construction can be made optimal (and more general),
%thus implying improved running times for the algorithms of Adamaszek and Wiese. This also shows that their algorithms works
%for a broader category of geometric objects. We note that these separator statements imply directly
%a QPTAS for independent set (and, in general, packings problems); 
%however their relevance to set-cover was surprising to us.
\end{abstract}
 
%{\bf to do: We need to say somewhere that our proof is simpler and we get rid of Arora's separator theorem in the proof of AW.\\}
\newpage

\setcounter{page}{1}

\section{Introduction}
\label{sec:introduction}

One of the fundamental optimization problem is
 the set-cover problem:
 given
a range space $(X, \R)$ consisting of a set $X$ and 
a set $\R$ of subsets of $X$ called the \emph{ranges},   the
objective is to compute a minimum-sized subset of $\R$ that
covers all the points of $X$. 
Unfortunately in the general case, it is strongly NP-hard;
worse, it is NP-hard to approximate the minimum set-cover
within a factor of $c\log n$ of the optimal~\cite{RS97} for some constant $c$.

A natural extensively-studied occurrence of the set-cover problem is when
the ranges are derived from geometric objects. For example,
given a set $P$ of $n$ points in the plane and a set
$\R$ of disks, the set-cover
problem for disks asks to compute a minimum cardinality subset of disks 
whose union covers all the points of $P$. 
Unfortunately, computing the minimum cardinality set-cover
remains NP-hard even for basic geometric objects, such
as unit disks in the plane.  
Effort has therefore been devoted to devising approximation algorithms
for geometric set-cover problems (see~\cite{CKL07, NV06, GMWZ04,  CV07, AES12, CF13, CDDFLS09, HM84, HP2009, NV06, CG2014, ELJ08, DFLN11, CGKS12, MR10} for a few examples).
Nearly all the effort has been for the following natural and fundamental
categories of geometric objects: \emph{halfspaces}, \emph{balls} (and
 generally, \emph{pseudodisks}), 
\emph{axis-parallel rectangles}, \emph{triangles} and 
\emph{objects parameterized by their union-complexity} (a set of regions $\mathcal{R}$ has union complexity $\phi(\cdot)$ if the boundary of the union of any $r$ of the regions has at most $r\phi(r)$ intersection points).
An important version is the \emph{weighted} setting,
where one seeks to find the minimum-weight set-cover
(e.g., see the breakthrough of Bansal-Pruhs~\cite{BP10} who reduced a broad class of machine scheduling problems to weighted set-cover problems).

Research during the past three decades has, in fact, been
able to largely answer the question of the existence
of a PTAS, or provability of APX-hardness for these problems
for the uniform case, where one is minimizing the cardinality
of the set-cover.  For the more general weighted case, 
there has been considerable progress recently --
a $O(\log \phi(\opt))$-approximation as a function of the union-complexity
 is  possible via the quasi-uniform
sampling technique of Varadarajan~\cite{V10} and its improvement by Chan \emph{et al.}~\cite{CGKS12}.
On the other hand, recently Chan-Grant~\cite{CG2014} proved
APX-hardness results for the set-cover problem
for a large class of geometric objects.
We also point out (Appendix~\ref{appendix:lowerbounds}) that for any integer $s$, there exist $O(s)$-sided
polygons with union complexity $n 2^{\alpha(n)^{O(s)}}$ for which
set-cover is inapproximable within $\Omega(\log s)$. 
Also, as any set-system with sets of size at most $s$ can be realized
by halfspaces in $\Re^{2s}$, a $\Omega(\log d)$ lower-bound follows
for approximability of halfspaces set-cover in $\Re^d$ (Appendix~\ref{appendix:lowerbounds}). This lower-bound
requires $d \geq 4$, leaving open the interesting question
of approximation schemes for weighted halfspaces in $\Re^3$.

\begin{wrapfigure}{r}{0.42 \textwidth}
\vspace{-0.2in}
\begin{tabular}{|l|c|c|}
\hline
\textbf{Object} & \textbf{Uniform}  & \textbf{Weighted} \\ \hline
Halfspaces ($\Re^2$) & Exact & Exact \\ \hline
Halfspaces ($\Re^3$)& PTAS & \textcolor{red}{?} \\ \hline
Halfspaces ($\Re^4$)& APX-H & APX-H \\ \hline
Balls ($\Re^2$)& PTAS & \textcolor{red}{?} \\ \hline
Pseudodisks ($\Re^2$)& \textcolor{red}{?} & \textcolor{red}{?} \\ \hline
Balls ($\Re^3$) & APX-H & APX-H \\ \hline
A-P Rects ($\Re^2$)& APX-H & APX-H \\ \hline
A-P Rects ($\Re^d$)& APX-H & APX-H \\ \hline
Triangles ($\Re^2$)& APX-H & APX-H \\ \hline
Linear U-C ($\Re^2$) & APX-H & APX-H \\ \hline
\end{tabular}
\vspace{-0.2in}
\end{wrapfigure}

See the table for the current status of geometric set-cover.
The four open cases present a challenge as the current state-of-the-art methods
hit some basic obstacles:
the approximation algorithms
for weighted halfspaces, balls and pseudodisks use
LP-rounding with $\epsilon$-nets~\cite{V10,CGKS12}, 
and so provably cannot give better than $O(1)$-approximation
algorithms. LP rounding was avoided by 
the use of local-search technique~\cite{MR10} to give a
 PTAS for halfspaces ($\Re^3$) and disks ($\Re^2$); however $i)$
for fundamental reasons it is currently limited to the unweighted case,
and $ii)$ does not extend to pseudodisks.

In this paper, we make progress on the approximability
status of the remaining four open cases by presenting
a quasi-polynomial time approximation scheme (QPTAS) for all these problems. 
This rules out the possibility that these problems are APX-hard,
assuming $\textbf{NP} \nsubseteq \textbf{DTIME}(2^{polylog(n)})$.
Together with the previous work  showing
hardness results or PTAS, this   settles the APX-hardness status  
for all natural geometric set-cover problems.

%, there have been
%a number of algorithms proposed to compute good approximations to the optimal 
%geometric hitting-set/set-cover for a variety of geometric objects
%
%
%
%
%
%The current-best general bound~\cite{AES10, V09} gives polynomial-time algorithms
%to compute a $O(\log \phi(\opt))$-approximation for geometric set-cover,
% and 
%$\opt$ is the size of the optimal set cover. This only gives
%a constant-factor approximation for objects with linear union complexity (e.g.,
%disks in the plane); otherwise non-constant approximation ratios when the union complexity is super-linear (e.g., for fat triangles in the plane). 
%Improvements are possible for specific geometric objects; 
%for the unweighted set-cover problem for disks and halfspaces
%in $\Re^3$, a $(1+\eps)$-approximation
%is known using local-search~\cite{MR10}. 
%When one seeks to compute a minimum weight set-cover,  

%Unfortunately, for the weighted versions, the local-search technique
%does not work, and 

The motivation of our work is the recent progress on approximation
algorithms for another fundamental geometric optimization problem, maximum
independent sets in the intersection graphs of geometric objects,
where  $(1+\eps)$-approximation algorithms (or even constant factor approximation algorithms)
are not known for many objects.
In a recent breakthough, Adamaszek-Wiese~\cite{AW13, AW14} presented a 
QPTAS
for computing weighted maximum independent set for a variety of geometric
objects (e.g., axis-parallel rectangles, line-segments, polygons  with
polylogarithmically many sides) in the plane (the algorithm runs in time $2^{poly(\log n/\eps)}$).
We now sketch their main idea for approximating
the maximum independent set for weighted line-segments
in the plane, for which let $\opt$ be the optimal solution.
The key tool is the existence
of a closed polygonal curve $\C$ (with few vertices) that intersects segments in $\opt$ with small total weight, and at least a
constant fraction of the total weight of $\opt$ lies in the
two regions created by $\C$. Hence one can
guess the curve $\C$ (which does not require knowing $\opt$)~\footnote{The guessing is actually done by enumerating all possible curves. The fact that $\C$ has a small number of vertices allows efficient enumeration.}, and then return  the union
of solution of the  two sub-problems  (which are solved recursively).
With appropriate parameters, the loss incurred by throwing away the segments
intersecting $\C$ is at most $\eps$-th fraction of the optimal solution, yielding
a $(1-\eps)$-approximation in quasi-polynomial time.

Let us consider how the above technique can be made to work for the set cover problem. Assume that
we are given a set $P$ of $n$ points and a set $\R$ of $m$ weighted disks and our goal is to pick
a minimum-weight set-cover from $\R$.
We can again consider the optimal solution \opt{} and hope to find a curve $\C$ which intersects objects 
in \opt{} with small total weight and has a constant fraction of the weight of \opt{} in the interior as well
as the exterior. However such a curve does not always exist -- consider, e.g., a case where the optimal solution consists of a set of disks that share a common point (not necessarily in $P$).
Crucially, unlike the independent set problem, the objects in the optimal set-cover are not disjoint.
This dooms any separator-based approaches for the set-cover problem.

Surprisingly, we show that nevertheless there still exists a curve $\C$ 
(which may, in fact, intersect all the disks in $\R$!) such   that solving the induced
sub-problems in the interior and exterior of $\C$ and combining them  leads 
to a near-optimal solution (Theorem~\ref{lem:rajnikant}).
The problem is further complicated by arbitrary weights on the disks. 
As a result, several promising approaches (including the quasi-sampling
technique of Varadarajan~\cite{V10}) fail. 
Fortunately, generalizing the problem to pseudodisks and then using structural properties of pseudodisks melded 
with randomized ordering and probabilistic re-sampling techniques works out.

Such separator based techniques do not work in three dimensions 
(even for the independent set problem, one can show that there exists 
a set of disjoint segments in $\Re^3$ so that there is no compactly-represented polyhedral separator). 
In fact, even for unit balls in $\Re^3$ all containing a common point, 
the set cover problem is APX-hard~\cite{CG2014}! 
However, when the objects are halfspaces in $\Re^3$, we prove the existence of a polyhedral separator that 
allows us to get a QPTAS. This shows that the set-cover
problem for halfspaces is the only
natural problem in $\Re^3$ that is not APX-hard.
 
%%%
%%%-consider set-cover for weighted disks
%%%-obstacle that divide and conquer not clear, cannot throw away intersecting objects
%%%soln: include in both
%%%-obstacle is that objects not disjoint
%%%soln1: cores via varadarajans' idea
%%%-obstacle not connected
%%%soln2: cores via eps-nets
%%%-obstacle changes object complexity, so weighted cores not possible
%%%soln3: randomized construction
%%%-fourth is 3D planar graph separator does not hold. 
%%%in fact balls are apx-hard. so does not hold for general objects
%%%soln4: luckily for halfspaces connected property implies
%%%that that a 3D polyhedron separtor exists.

\section{Preliminaries}
%{\bf $\alpha$-simple regions}. 

Let $\R = \{R_1, \ldots, R_n\}$ be a set of weighted $\alpha$-simple regions in the plane, where 
a bounded and connected region in the plane is called $\alpha$-simple if its boundary can be decomposed into at most $\alpha$ $x$-monotone arcs~\footnote{Note that whether a region is $\alpha$-simple depends on choice of axes.}.
For any $\alpha$-simple region $R$, we denote by ${\Gamma}(R)$ a set of at most $\alpha$ $x$-monotone curves that its boundary can be 
decomposed into. For a set $\mathcal{R}$ of $\alpha$-simple regions, we define $\Gamma(\mathcal{R})$ to be the set $\bigcup_{R\in \mathcal{R}} \Gamma(R)$. 
Let $w_i$ denote the weight of the region $R_i \in \R$,
and $w(\S)$ be the total weight of the regions in $\S$ (set $W = w(\R)$). The regions in $\R$ need not
be disjoint~\footnote{We assume $\R$ to be in general position, so no
three regions boundaries intersect at the same point.}. 
A collection of compact simply connected regions in the plane is said to form a family of \emph{pseudodisks} if the boundaries of any two of the regions intersect at most twice. 
The union complexity of a set of pseudodisks in linear~\cite{PS09}.
For technical reasons we will assume that pseudodisks in this paper are $\alpha$-simple for some constant $\alpha$. This restriction is not crucial,
and can be removed~\cite{SHP14}. 
A collection of regions $\R$ is said to be {\em cover-free} if no region $R \in \R$ is covered by the union of the regions
in $\R \setminus R$.
For any closed Jordan curve $\C$, 
we denote the closed region
bounded by it as $interior(\C)$ and the closed 
unbounded region defined by it as 
$exterior(\C)$. Given $\R$, we denote by $\R_{in}(\C)$ ($\R_{ext}(\C)$) the subset of the regions that lie
in $interior(\C)$ ($exterior(\C)$). Similarly if $P$ is a set of points, we denote by $P_{in}(\C)$ ($P_{ext}(\C)$) the subset of points lying in $interior(\C)$ ($exterior(\C)$).\\
%{\bf Pseudodisks.} 

%{\bf Cover-Free collection.} 

{\bf VC-dimension and $\eps$-nets~\cite{M02}.} Given a range space $(X, \F)$, a set $X' \subseteq X$ is \emph{shattered} if every subset
of $X'$ can be obtained by intersecting $X'$ with a member of the family $\F$. The VC-dimension of $(X,\F)$ 
is the size of the largest set that can be shattered. 
Given a set system $(X, \F)$ where each element of $X$ has a positive weight associated with it, and a paramter $0 <\epsilon <1$, an $\epsilon$-net is a subset $Y \subseteq X$ s.t. for any $F \in \F$ with weight at least an $\epsilon$ fraction of the total weight, $Y \cap F \neq \emptyset$. The $\eps$-net 
theorem (Haussler-Welzl~\cite{HW87})
states that there exists an $\eps$-net of size $O(d/\eps \log 1/\eps)$ for any range space with VC-dimension $d$.  \\

%{\bf Curves in the plane.}

{\bf QPT-partitionable problems.} Given an optimization problem $\O$,
let $\opt_{\O}(I)$ denote the optimal solution
of $\O$ on the instance $I$, and let 
$w(\opt_{\O}(I))$ be the weight of this optimal solution. 
We assume $\O$ is a minimization problem; similar statements
hold for the maximization case.

\begin{defi}
A problem $\O$ is \emph{quasi-polynomial time partitionable} 
(\emph{QPT-partitionable})
if, given any input $I$ and a parameter $\delta > 0$, there exist a constant $c < 1$, 
$k = O\left(n^{(\log{n}/\delta)^{O(1)}}\right)$, and
instance pairs $(I^1_l, I^1_r), \ldots, (I^k_l, I^k_r)$ (computable in time polynomial in $k$),
and an index $j$, $1 \leq j \leq k$, such that 
$i)$ $\max \{ w(\opt_{\O} (I^j_l)), w(\opt_{\O}(I^j_r)) \} \leq c \cdot w(\opt_{\O}(I))$,
$ii)$ $\opt_{\O}(I^j_l) \cup \opt_{\O}(I^j_r)$ is a feasible solution, and
$iii)$ $w(\opt_{\O}(I^j_l))+w(\opt_{\O}(I^j_r)) \leq (1+\delta)w(\opt_{\O}(I))$.\end{defi}
% %\hide{\bf WHY DO WE need the the time to be polynomial in $k$?\\}
The next lemma  follows immediately from recursive divide-and-conquer:
% %\hide{\bf We should say that the proof is standard and refer to Arora}
\begin{lemma}
\label{thm:alg}
If a problem $\O$ is QPT-partitionable, and 
if for any instance $I$, $w(\opt_{\O}(I)) \geq 1$,  then
one can compute a $(1+\eps)$-approximate solution for
$\O$ in time $O\left(n^{(\frac{1}{\eps} \cdot \log w(\opt_{\O}(I)) \cdot \log n)^{O(1)}}\right)$.
\end{lemma}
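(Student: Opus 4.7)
The plan is to run the natural recursive divide-and-conquer suggested by the definition, and analyze it by induction on depth.

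\textbf{Algorithm.} Fix a parameter $\delta' > 0$ to be chosen below. On input instance $I$: if $I$ is trivial (empty optimum), return $\emptyset$; otherwise enumerate the $k$ pairs $(I^1_l, I^1_r), \ldots, (I^k_l, I^k_r)$ from the definition, recursively compute $S^j_l$ and $S^j_r$ for each $j$, and output $S^j_l \cup S^j_r$ of minimum total weight among those $j$ for which this union is feasible for $I$. Property (ii) guarantees at least one $j$ yields a feasible union.

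\textbf{Depth.} Let $I_0$ denote the top-level input and $W_0 := w(\opt_{\O}(I_0))$. Property (i), applied along the ``good'' index $j^\star$ given by the definition, shrinks the optimum by a factor $c < 1$ per level. Combined with the assumption that every non-trivial instance has optimum at least $1$, the good chain reaches a trivial instance within $D := \lceil \log_{1/c} W_0 \rceil = O(\log W_0)$ levels. I therefore cap recursion depth at $D$; any branch still non-trivial at depth $D$ is discarded, which is harmless because the good chain always terminates within the budget.

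\textbf{Approximation.} Induct on $D$ to show that on any input $I$ with $w(\opt_{\O}(I)) \leq (1/c)^D$, the algorithm returns a solution of weight at most $(1+\delta')^D \cdot w(\opt_{\O}(I))$. The base case $D = 0$ forces $w(\opt_{\O}(I)) < 1$, hence $I$ is trivial and the algorithm returns $\emptyset$. For the inductive step, let $j^\star$ be the good index for $I$: both $I^{j^\star}_l$ and $I^{j^\star}_r$ have optimum at most $c \cdot w(\opt_{\O}(I)) \leq (1/c)^{D-1}$, so the inductive hypothesis applies, and property (iii) gives
\[
w(S^{j^\star}_l) + w(S^{j^\star}_r) \leq (1+\delta')^{D-1}\bigl(w(\opt_{\O}(I^{j^\star}_l)) + w(\opt_{\O}(I^{j^\star}_r))\bigr) \leq (1+\delta')^D \cdot w(\opt_{\O}(I)).
\]
The output weight is at most this, since the algorithm minimizes over all feasible $j$. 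Taking $\delta' = \eps/(2D)$ makes $(1+\delta')^D \leq 1+\eps$.

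\textbf{Running time and obstacle.} The recursion tree has fan-out $2k$ and depth $D$, so at most $(2k)^D$ leaves with $\mathrm{poly}(k)$ work per internal node. Substituting $k = n^{(\log n / \delta')^{O(1)}}$ and our choice of $\delta'$ yields total running time $n^{(\frac{1}{\eps} \log W_0 \log n)^{O(1)}}$, as claimed. The only genuine subtlety is that the algorithm does not know $W_0$ in advance; this is handled in the standard way by trying $D = 1, 2, 4, \ldots$ and returning the best solution, which costs only an extra logarithmic factor absorbed into the exponent. No harder step arises in the proof of this lemma itself — the technical substance of the paper lies entirely in verifying the QPT-partitionable property for the geometric problems at hand.
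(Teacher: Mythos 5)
Your proof is correct and takes essentially the same route as the paper's: recursive divide-and-conquer with depth capped at $O(\log W_0)$ because the good chain shrinks the optimum geometrically, the induction showing the error compounds to $(1+\delta')^D$, and $\delta' = \Theta(\eps/\log W_0)$ chosen so this stays below $1+\eps$. Your closing remark about not knowing $W_0$ in advance (handled by doubling $D$) addresses a detail the paper's proof glosses over, but it does not change the approach.
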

\begin{proof}
The algorithm will return an approximate solution $\approx_{\O}(I)$ as follows.
Let $T = w(\opt_{\O}(I))$, and set $\delta = \Theta( \eps / \log T )$.
Construct the $k$ instance pairs $(I^1_l, I^1_r), \ldots, (I^k_l, I^k_r)$,
where $k = O\left(n^{(1/\eps \cdot \log T \cdot \log n)^{O(1)}}\right)$.
For each $i = 1 \ldots k$, 
compute $\approx_{\O}(I^i_l)$ and $\approx_{\O}(I^i_r)$ recursively and return
the solution $\approx_{\O}(I^j_l) \cup \approx_{\O}(I^j_r)$,
where  $j = \argmin_i w(\approx_{\O}(I^i_l))+w(\approx_{\O}(I^i_r))$.
We can prune the recursion tree at the level 
$l = O(\log T)$ since 
for the  right  choice of $i$ at each recursion, the weight of the optimal solution
falls by a constant factor with every recursive call. The size of the tree is at most 
$(2k)^l = O\left(n^{(1/\eps \cdot \log T \cdot \log n)^{O(1)}}\right)$.  It can be shown inductively that the approximation factor of a sub-problem $t$ levels away from the lowest level is $(1+\delta)^t$. Thus the approximation factor at the root is $(1+\delta)^l \leq (1+\epsilon)$, with appropriate
constants in the definition of $\delta$. The time taken by the algorithm is $O\left(n^{(1/\eps \cdot \log T \cdot \log n)^{O(1)}}\right)$. 
\end{proof}

% gives
%$O(n^{(1/\delta)^{O(1)}})$ such curves.
%The running time of the final algorithm is directly a function of this
%separator size:
% $n^{O((\log n/\eps)^{8})}$ for the case of axis-parallel rectangles,
%and higher for line segments in the plane. 

\paragraph{Geometric separators.} 
A $\delta$-separator for $\R$, given $\delta > 0$, is a simple closed curve $\C$ 
in the plane such that the number of regions of $\R$ completely 
inside (and outside) $\C$ is at most $2W/3$ (such a curve is called \emph{balanced}),
and the total weight of the regions in $\R$ intersecting $\C$ is 
at most $\delta W$. The goal
is to show the existence, given $\R$ and $\delta > 0$, of separators 
of small combinatorial complexity as a function of $n$, $m$ (number of intersections
in $\R$), $\alpha$ and $\delta$. 
The existence of small $\delta$-separators was the core of the result of~\cite{AW13, AW14}; later 
several authors noted~\cite{MRR13, SHP14} that the construction in~\cite{AW13}
can be made optimal using the 
techniques of constructing cuttings and $\eps$-nets (i.e, the probabilistic
re-sampling technique)~\cite{CF90, M02, CV07, AES10}.

We state two separator results that we will be using in our algorithm.
For completeness we outline their proof in Appendix~\ref{appendix:separators}.

\begin{theorem}[~\cite{MRR13,SHP14}]
\label{thm:weightedseparator}
Given a set $\R$ of $n$ weighted 
 $\alpha$-simple regions (with total weight $W$, and no curve
having weight more than $W/3$) with disjoint interiors, and a parameter $\delta>0$, 
there exists a simple closed curve $\C$ such that $i)$ the total weight of the regions intersecting
$\C$ is at most $\delta W$, and $ii)$ the total weight of the regions
completely inside or outside $\C$ is at most $2W/3$. Furthermore the complexity of 
$\C$ is $T =  O(\alpha/\delta)$. 
 That is $\C$ can be completely described by a sequence of at most $T$ curves of $\Gamma(\mathcal{R})$
and additional at most $T$ bits.
Furthermore this is optimal; even when $\R$ is set
of disjoint line segments, any $\C$
satisfying these two properties must have $\Omega(1/\delta)$ bends.
\end{theorem}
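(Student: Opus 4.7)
The plan is to combine a classical weighted planar separator theorem with the probabilistic re-sampling technique from cuttings and $\eps$-nets, as in the cited references~\cite{CF90,M02,AES10}. Since the regions in $\R$ have disjoint interiors and each is $\alpha$-simple, the boundaries form a planar subdivision of complexity $O(\alpha n)$, and the natural adjacency graph $G$ of $\R$ (vertices are regions, edges record shared boundary arcs from $\Gamma(\R)$) is planar and weighted by $w_i$. A weighted Gazit--Miller style simple-cycle planar separator applied to $G$ gives a closed curve $\C_0$ traced along arcs of $\Gamma(\R)$ that splits the regions inside/outside within $2W/3$; on its own $\C_0$ has complexity $O(\sqrt{n})$, which is too large.

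To reduce the complexity from $O(\sqrt{n})$ down to $O(\alpha/\delta)$, I would apply weighted random sampling. Draw a random sample $\R' \subseteq \R$ proportional to weight with $|\R'| = \Theta(\alpha/\delta)$, and define $\C$ via the simple-cycle planar separator of $\R'$. The complexity of $\C$ is $O(|\R'|) = O(\alpha/\delta)$ arcs of $\Gamma(\R')$, plus $O(T)$ extra bits recording the turn direction at each vertex of the cycle (yielding the claimed description length). Balance with respect to the original $\R$ is preserved because the sample is weight-proportional in expectation. For the cut-weight bound, invoke the $\eps$-net theorem on the range space whose ranges are ``regions of $\R$ crossed by some simple closed curve made of $T$ arcs of $\Gamma(\R)$''; this range space has VC-dimension $O(\alpha)$ (each arc crosses a single region in $O(\alpha)$ places, and disjoint interiors cap the combinatorial complexity), so with high probability every such curve whose cut weight on $\R'$ is near-zero has cut weight at most $\delta W$ on $\R$. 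Applying the Chazelle--Friedman moment-based tightening then shaves off the logarithmic overhead in the net size.

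The main technical obstacle is guaranteeing balance and small cut weight simultaneously, because balance is a global property of the sample while cut weight is a sum over the much larger set $\R$; I would handle this by iterated re-sampling and a union bound over the polynomially-many distinct combinatorial types of low-complexity separators, exactly as in~\cite{CF90,CV07,AES10}. For the matching lower bound, take $n$ disjoint horizontal unit segments of uniform weight arranged in a zig-zag pattern of $\Theta(1/\delta)$ alternating columns of $\Theta(\delta n)$ segments each; any balanced simple closed curve that crosses at most $\delta n$ segments must, by the Jordan curve theorem applied column by column, traverse the gaps between consecutive columns $\Omega(1/\delta)$ times, forcing $\Omega(1/\delta)$ bends on $\C$.
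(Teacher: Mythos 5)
Your high-level plan (planar separator plus weighted sampling plus probabilistic re-sampling) is in the right neighborhood, but the first concrete step breaks. You propose applying a simple-cycle planar separator to the ``adjacency graph'' $G$ of $\R$ whose edges record shared boundary arcs. The theorem's hypotheses only require the regions to have \emph{disjoint interiors}, not to tile the plane, so for generic inputs --- in particular for disjoint line segments, which is exactly the case highlighted in the theorem's optimality claim --- $G$ has no edges at all, and there is no cycle separator to extract. Even when $G$ does have edges, a simple cycle in that abstract graph does not automatically realize as a simple closed curve of bounded complexity in the plane: the cycle must be traced through the embedding and the number of arcs of $\Gamma(\R)$ it uses is not controlled by the number of graph vertices it visits. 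The paper sidesteps this entirely by never forming a region-adjacency graph. It instead replaces each $R_i$ by $\Theta(\alpha w_i)$ unit-weight copies of its $x$-monotone boundary pieces, applies the cutting lemma (Lemma~\ref{thm:mpartition}) to obtain a \emph{trapezoidal decomposition} of the plane with $O(r)$ cells each crossed by $O(\alpha W/r)$ unit curves, and only then runs the weighted simple-cycle planar separator on the $1$-skeleton of that decomposition. There the cycle is, by construction, a simple closed curve built from a bounded number of boundary arcs and vertical segments, and the cut-weight bound is the clean product $O(\sqrt{r})\cdot O(\alpha W/r)$; setting $r = \alpha^2/\delta^2$ gives both the $O(\alpha/\delta)$ complexity and the $\delta W$ cut simultaneously. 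Your proposal has no analogue of the trapezoidal decomposition, and this is the missing load-bearing step.

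Two smaller issues. First, ``balance is preserved because the sample is weight-proportional in expectation'' is not a proof: you need the separator to be balanced \emph{deterministically} with respect to the full weighted family $\R$, and expectation over a sample of size $\Theta(\alpha/\delta)$ cannot give that directly (the paper gets determinism for free by weighting the faces of the triangulation and invoking the weighted cycle-separator theorem once). Second, your appeal to an $\eps$-net over the range space of ``regions crossed by some $T$-arc closed curve'' to bound cut weight is plausible in spirit but its VC-dimension bound is asserted, not argued, and the interplay with the union bound over ``polynomially many combinatorial types'' is exactly where the log factors you want to shave reside; the paper instead handles the sharpening once, inside Lemma~\ref{thm:mpartition}, via a Clarkson--Shor refinement rather than at the level of separator candidates. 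Your lower-bound sketch (zig-zag columns of segments) is a legitimate alternative to the paper's nested regular polygons, and both give $\Omega(1/\delta)$; that part is fine.
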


In the case when the regions have uniform
weights (say each region has weight one) but are not necessarily disjoint:

\begin{theorem}[~\cite{MRR13}]
\label{thm:separator}
Given a set $\R$ of $n$ 
$\alpha$-simple regions in the plane with $m$ intersections,
and a parameter $r$,
there exists a simple closed curve $\C$ such that
$i$) the number of regions in $\R$ intersecting 
$\C$ are $O ( (m + \frac{\alpha^2 n^2}{r})^{1/2} )$, and
$ii$) the total number of regions
completely inside or outside $\C$ is at most $2n/3$.
Furthermore, complexity of $\C$ is $T = O( (r + \frac{mr^2}{\alpha^2 n^2})^{1/2} )$. 
That is,
$\C$ can be completely described by a sequence
of at most $T$ curves
of $\Gamma(\R)$ and at most $T$ additional bits.
\end{theorem}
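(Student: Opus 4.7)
The strategy is to reduce to the disjoint-interior setting of Theorem~\ref{thm:weightedseparator} via a random-sampling cutting, with a weighting scheme that converts ``balanced cell weight'' into ``balanced count'' of regions of $\R$. First I would build a $(1/r)$-cutting $\Xi$ of the arrangement of the $O(\alpha n)$ $x$-monotone arcs in $\Gamma(\R)$ (which have $O(m)$ pairwise intersections in total). By the standard Clarkson--Shor / Chazelle--Friedman resampling technique, there exists such a $\Xi$ consisting of $O\!\left(r + m r^2/(\alpha^2 n^2)\right)$ pseudo-trapezoidal cells, each bounded by $O(1)$ arcs of $\Gamma(\R)$ together with at most two vertical segments, and with every cell crossed by at most $O(\alpha n / r)$ curves of $\Gamma(\R)$. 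The cells of $\Xi$ have disjoint interiors and are $O(1)$-simple, so they form a valid input for Theorem~\ref{thm:weightedseparator}.

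Next, I would pick a representative point $p_R$ in the interior of each $R \in \R$, perturbed to lie strictly in the interior of some cell of $\Xi$, and weight each cell $c \in \Xi$ by the number of representative points it contains; the total weight is $W = n$. Any cell of weight exceeding $n/3$ can be split into $O(1)$-simple subcells of weight at most $n/3$ via a few additional horizontal cuts, which affects neither the overall size of $\Xi$ nor the per-cell crossing bound beyond constants. I then apply Theorem~\ref{thm:weightedseparator} with parameter $\delta = 1/T$ for $T = \Theta\!\left(\sqrt{r + m r^2 / (\alpha^2 n^2)}\right)$, obtaining a simple closed curve $\C$ of complexity $O(1/\delta) = O(T)$ that is assembled from pieces of cell boundaries of $\Xi$.

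Finally I would verify the two desired properties. For balance: any region $R$ completely contained in the interior of $\C$ has $p_R$ lying in a cell that is itself entirely inside $\C$, so the number of such regions is at most the total weight of the interior cells, which is $\le 2n/3$ by Theorem~\ref{thm:weightedseparator}; the symmetric bound holds on the outside. For the crossing count: each of the $O(T)$ pieces of $\C$ is either a sub-arc of some $R \in \R$, contributing one region whose boundary meets $\C$, or a vertical segment lying inside a single cell of $\Xi$, which is crossed by only $O(\alpha n / r)$ curves of $\Gamma(\R)$ and hence meets the boundaries of at most $O(\alpha n / r)$ regions of $\R$. Summing over the $O(T)$ pieces gives $O(T \cdot \alpha n / r) = O\!\left(\sqrt{m + \alpha^2 n^2 / r}\right)$ regions intersecting $\C$, as required; the description of $\C$ as $O(T)$ arcs of $\Gamma(\R)$ plus $O(T)$ combinatorial bits is inherited from Theorem~\ref{thm:weightedseparator}. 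The main obstacle is establishing the two refined properties of $\Xi$ simultaneously — the $O(r + m r^2 / (\alpha^2 n^2))$ size and the $O(\alpha n / r)$ per-cell crossing bound — via Chazelle--Friedman resampling; the rest is bookkeeping, with the representative-point trick being the key idea that lets the weight-balance of Theorem~\ref{thm:weightedseparator} translate into the count-balance over $\R$.
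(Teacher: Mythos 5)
Your cutting-plus-separator skeleton is the same one the paper uses, and the representative-point weighting is a clean alternative to the paper's scheme (the paper instead assigns weight $1/t$ to each of the $t$ cells a region touches, so that a region lying fully inside $\C$ contributes exactly $1$ to the inside weight). But the detour through Theorem~\ref{thm:weightedseparator} introduces a genuine gap. That theorem constructs $\C$ from a \emph{fresh} trapezoidal decomposition of a random sample of the input curves --- here, a sample of the boundary arcs of the cells of $\Xi$ --- and the vertical segments of $\C$ are rays shot from endpoints of sampled arcs until they hit the next sampled arc. Those new verticals are not edges of $\Xi$ and can traverse many cells of $\Xi$. This breaks both of your verifications: the crossing count assumes each vertical piece of $\C$ ``lies inside a single cell of $\Xi$,'' which need not hold; and the balance argument asserts that if $R$ is fully inside $\C$ then $p_R$'s cell is ``itself entirely inside $\C$,'' which also need not hold, since $\C$ may cut through the cell containing $p_R$. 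At best you get a balance of $(2/3+\delta)n$ and the crossing count needs a separate repair.

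The fix is what the paper does, and what your setup is already pointing at: apply the weighted planar cycle separator theorem of~\cite{M86} directly to the planar graph whose embedding is $\Xi$, with face weights given by your representative-point counts. The separator is then a cycle in that graph, so it follows edges of $\Xi$: every cell is strictly inside or strictly outside $\C$, so the $2n/3$ balance is immediate; and the cycle has $O(\sqrt{r + m r^2/(\alpha^2 n^2)})$ edges, each lying on a cell boundary of $\Xi$ and hence crossed by $O(\alpha n/r)$ curves of $\Gamma(\R)$, giving the $O(\sqrt{m+\alpha^2 n^2/r})$ crossing bound directly. With that substitution your argument matches the paper's, differing only in the choice of face weights.
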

The technical condition that the regions are $\alpha$-simple in the theorems above can be removed~\cite{SHP14}.
\paragraph{Geometric Set Cover.} 
Let $\R = \{R_1, \ldots, R_n\}$ be a set of weighted regions (in $\Re^2$
or $\Re^3$) and let $P$ be a finite set of points in the plane. The goal is to compute a subset $\mathcal{Q} \subseteq \mathcal{R}$ minimizing the total weight $w(\mathcal{Q})$ so that $P\subseteq \cup_{Q\in \mathcal{Q}}Q$. We will denote an optimal solution $\mathcal{Q}$ for an instance of the problem given by a set of regions $\mathcal{R}$ and a set of points $P$ by $\textsc{Opt}(\mathcal{R}, P)$,
and its weight by $w(\opt(\R, P))$.
%To do so, we show the existence of a suitable separator using Theorem~\ref{thm:weightedseparator}, and show that
%it implies that the problem is quasi-time partitionable. Then 
%Theorem~\ref{thm:alg} gives the QPTAS.

\begin{claim}
\label{claim:tofame}
%Let $(\R, P)$ be the given set-system, where $R \in \R$ has weight $w(R)$.
%Then 
If there exists a QPTAS for set-systems $(\R', P')$ 
where 
$i)$ each $R \in \R'$ has weight $w(R) \geq 1$, and
$ii)$ the weight
of the optimal set-cover for $(\R', P')$ is $O(n/\eps)$, 
then there exists a QPTAS for the minimum-weight set cover
for a set-system $(\R, P)$ with arbitrary weights.
%and $w(\sum_{R \in \R'} w(R) \leq n/\epsilon$
%
%Then one can construct a new set system $(\R', P')$ in
%polynomial time, where each $R \in \R'$ has weight $w(R) \geq 1$,
%and $\sum_{R \in \R'} w(R) \leq n/\epsilon$ such that,
%from a set-cover $\Q'$ to $(\R', P')$, one can construct
%a set-cover for $(\R, P)$ in polynomial time of total weight $(1 + \eps) w_{opt}(\R)$.
\end{claim}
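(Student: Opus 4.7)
The plan is a standard guess-and-scale reduction from the arbitrary-weight problem to the normalized form required by the hypothetical QPTAS. Let $W^* = w(\opt(\R, P))$ denote the weight of the true optimum. First I would produce an estimate $\tilde{W}$ satisfying $W^* \leq \tilde{W} \leq 2 W^*$; then I would rescale all weights using $\tilde{W}$ as a reference and invoke the given QPTAS on the rescaled instance.

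For the guess, observe that if $R^\star \in \opt$ is a region of $\opt$ of maximum weight, then $w(R^\star) \leq W^* \leq n \cdot w(R^\star)$. Thus I would enumerate the $O(n \log n)$ candidates $\tilde{W} = 2^i \cdot w(R)$ for $R \in \R$ and $0 \leq i \leq \lceil \log_2 n \rceil + 1$; the right choice of $(R,i)$ gives $\tilde{W} \in [W^*, 2W^*]$. For each candidate, set $\tau = \epsilon \tilde{W}/n$ and construct $(\R', P)$ with the same regions but new weights $w'(R) = \lceil w(R)/\tau \rceil$. By construction $w'(R) \geq 1$ for every $R$. For the correct guess we also have $w'(\opt) \leq \sum_{R \in \opt}\bigl(w(R)/\tau + 1\bigr) \leq W^*/\tau + n \leq n/\epsilon + n = O(n/\epsilon)$; since feasibility does not depend on weights, the optimum $\opt'$ of the new instance satisfies $w'(\opt') \leq w'(\opt) = O(n/\epsilon)$, so both hypotheses of the assumed QPTAS hold.

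Running the given QPTAS on each of the $O(n \log n)$ rescaled instances yields solutions, and I return the one of minimum original weight. For the correct guess, the returned solution $S$ satisfies $w'(S) \leq (1+\epsilon)\, w'(\opt')$, so $w(S) \leq \tau \cdot w'(S) \leq (1+\epsilon)\,\tau \cdot w'(\opt) \leq (1+\epsilon)(W^* + n\tau) \leq (1+\epsilon)(1+2\epsilon) W^*$, which is $(1+O(\epsilon))$ of optimum; rescaling $\epsilon$ by a constant yields the desired bound, and the total runtime is quasi-polynomial since the outer loop multiplies the inner QPTAS runtime by a polynomial factor.

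The delicate point is the coupling of the guess $\tilde{W}$ to the rounding granularity $\tau = \epsilon\tilde{W}/n$: the rounding adds at most $n\tau = \epsilon \tilde{W}$ to the weight of any cover, which must be $O(\epsilon W^*)$ for the relative approximation to survive, forcing $\tilde{W} = O(W^*)$; conversely we need $\tilde{W} \geq W^*$ so that $w'(\opt) \leq \tilde{W}/\tau + n$ lies in $O(n/\epsilon)$. These two opposing constraints force a constant-factor estimate of $W^*$, which the enumeration above supplies. There is no structural geometric obstacle — the work is entirely in the accounting and in checking that $O(n \log n)$ guesses suffice.
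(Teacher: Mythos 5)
Your proof is correct and rests on the same high-level plan as the paper's: guess $W^*$ to within a constant factor by enumerating $\widetilde{W} = 2^i w(R)$ over regions $R$ and $O(\log n)$ exponents $i$, then renormalize so that the normalized instance satisfies the two hypotheses of the assumed QPTAS. The two arguments differ only in how they handle the low-weight regions. The paper discards every region of weight below $\eps w_{aprx}/n$, includes all of them in the output for a total additive cost of at most $\eps w_{aprx} = O(\eps W^*)$, and runs the QPTAS on the remaining heavy regions against the points they do not already cover; after scaling by $n/(\eps w_{aprx})$, the heavy regions have weight at least $1$ and the optimum is $O(n/\eps)$. You instead keep every region and round its weight up to $w'(R) = \lceil w(R)/\tau\rceil$ with $\tau = \eps\widetilde W/n$, which directly forces $w'(R)\ge 1$; the rounding adds at most $n\tau = \eps\widetilde W = O(\eps W^*)$ to the weight of any cover with $\le n$ sets, which is exactly the same slack the paper pays for the light regions, and $w'(\opt') \le W^*/\tau + n = O(n/\eps)$ as required. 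Your version is slightly simpler in that it never modifies the point set and uses a coarser factor-$2$ guess of $W^*$ (the paper uses a $(1+\eps/3)$-grid, which its particular accounting needs but yours does not), at the minor cost of invoking the QPTAS on all $n$ regions rather than a subset. Both are valid; the difference is purely in bookkeeping.
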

\begin{proof}
Let $\Q$ be a minimum-weight set-cover for $(\R, P)$. 
First guess the maximum weight region
in $\Q$, say of weight $w_{max}$ (there are $n$ such choices). Then by exponential search on the 
interval $[w_{max}, nw_{max}]$, one can guess the weight of $\Q$ within a $(1+\eps/3)$ factor (there are $O(\log_{1+\eps}n)$ such
choices). Let $w_{aprx}$ be this weight, satisfying
$w(\Q) \leq w_{aprx} \leq (1+\eps/3)w(\Q)$.
Set $\R' \subset \R$ to be the set of regions with
weight at least $\eps w_{aprx}/n$, and $\R'' = \R \setminus \R'$.
Let $P' \subseteq P$ be the set of points not covered by $\R''$,
and construct a $(1+\eps/3)$ approximate set-cover $\Q'$
to $(\R', P')$. Return $\Q' \cup \R''$ as a set-cover
for $(\R, P)$. Note that this is the required approximation:
\begin{eqnarray*}
w(\Q' \cup \R'') = w(\Q') + w(\R'') &\leq& (1+\eps/3)w(\opt(\R', P')) + \eps w_{aprx} \\
& \leq &(1+\eps/3) w(\Q) + \eps (1+\eps/3) w(\Q) \leq (1+\eps)w(\Q)
\end{eqnarray*}
Above we use the fact that $\Q$ is also a set-cover for $(\R', P')$. 
Scaling by $n/\epsilon w_{aprx}$, each set in $\R'$ has weight at least $1$, 
and weight of $\opt(\R', P') = O(n/\epsilon)$.
\end{proof}

Hence for the purpose of a $(1+\epsilon)$-approximation, we can assume that the minimum weight of any region is $1$ and the  weight of the optimal set-cover
is $O(n/\epsilon)$.

\hide{
For any closed Jordan curve $\C$, we denote the region bounded by it as $interior(\C)$ and unbounded region defined  by it as $exterior(\C)$. Given a set $\mathcal{R}$ of regions in the plane and a curve $\C$, we define $\R_{in}(\C)$ as the set $\{R\in \mathcal{R} : R \subseteq interior(\C)\}$. Similarly, we define $R_{ext}(\C) = \{R\in \mathcal{R} : R \subseteq exterior(\C)\}$ and $\mathcal{R}_{bd}(\C) = \mathcal{R} \setminus (\mathcal{R}_{in}(C) \cup \mathcal{R}_{ext}(C))$. 
Since we will only be considering connected regions, the regions in the set $\R_{bd}$ intersect $\C$.
When the curve $\C$ is clear from the context, we drop it from the notation, using only $\mathcal{R}_{in}$, $\mathcal{R}_{ext}$ and $\mathcal{R}_{bd}$ instead. Similarly, if $P$ is point set, we denote the subset of points in $interior(\C)$ by  $P_{in}(\C)$ and the subset of points in $exterior(\C)$ by $P_{ext}(\C)$. Again, we drop the $\C$ in the notation if it is clear from context. 
 If the regions are weighted, we denote the weight of $R\in \mathcal{R}$ by $w(R)$ and for any subset $\mathcal{R}' \subseteq \mathcal{R}$, we denote the total weight of regions in $\mathcal{R}'$ by $w(\mathcal{R}')$. We only consider regions with positive weights. 
}

%One can immediately observe that an optimal construction of such separators
%follows from $\eps$-nets (and their derivatives, cuttings): 

\hide{
For any integer $\alpha\geq 0$, a region in the plane is called an {\bf $\alpha$-simple region} if it is bounded, connected, and its boundary can be decomposed into at most $\alpha$ $x$-monotone arcs. For example (circular) disks in the plane are $2$-simple regions and so are convex sets without any vertical sides. A polygon (not necessarily convex) with at most $k$ sides and no vertical sides is a $k$-simple region. The empty set is a $0$-simple region. An axis parallel rectangle is not $\alpha$-simple for any finite $\alpha$ because of its vertical sides; in any other orientation they are $2$-simple. For any $\alpha$-simple region $R$, we denote by ${\Gamma}(R)$ a set of at most $\alpha$ $x$-monotone curves that its boundary can be 
decomposed into. For a set $\mathcal{R}$ of $\alpha$-simple regions, we define $\Gamma(\mathcal{R})$ to be the set $\bigcup_{R\in \mathcal{R}} \Gamma(R)$. 
%The union complexity of a set of regions $\R$ is the number of intersections
For all the problems considered
in this paper, since the regions $\R$ will be a set of pseudodisks  (in the case
of set-cover) or disjoint regions (in the case of independent set),
we  assume that every pair of $x$-monotone curves comprising $\alpha$-simple
regions considered throughout the paper intersect $O(1)$ times.

}

\hide{
\section{Our Results}
\label{sec:results}

In this paper, we make  progress towards closing
the status of approximability of set-cover for geometric objects,
in two ways.
We first observe (Observation~\ref{observation:lowerbound}) 
that the fact that PTAS is only known for pseudodisks
is not a coincidence; unfortunately, for general objects with linear union complexity, 
there is little hope of getting a PTAS. In particular,
there exist a set of $4s$-sided polygons of union complexity
$n \alpha(n)$ for which computing the minimum-cardinality set-cover
is APX-hard. More generally, for any integer $s$, there exist $O(s)$-sided
polygons with union complexity $n 2^{\alpha(n)^{O(s)}}$ for which
set-cover is inapproximable within $\Omega(\log s)$. 
On the positive side, we show a QPTAS for the weighted
version of the most widely-studied case of linear union complexity,
namely that of disks in the plane (in fact, more general objects called pseudodisks)
This indicates that a PTAS is likely unless $NP \subseteq DTIME(2^{poly(n)})$.
%This  the most outstanding remaining case for objects with linear
%union complexity.

We give a QPTAS for computing the minimum weight set-cover
for a set of pseudodisks in the plane.

 \begin{theorem}
 \label{thm:setcover}
 Let $\mathcal{R}$ be a set of $n$ weighted $\alpha$-simple pseudodisks in the plane and let $P$ be a finite set of points in the plane. Then there exists a $(1+\epsilon)$-approximation algorithm for computing the minimum weight subset of $\mathcal{R}$ that covers $P$, that runs in time $(\alpha n)^{O( \alpha ( \frac{1}{\epsilon} \log{\frac{n}{\epsilon}})^{O(1)} ) }$.
 \end{theorem}
 }

\section{QPTAS for Weighted Pseudodisks in $\Re^2$}
\label{sec:gsc}

Our main result in this section is:

\begin{theorem}
\label{lem:rajnikant}
Let $\mathcal{R}= \{R_1, \ldots, R_n\}$ be a set of $n$ weighted $\alpha$-simple pseudodisks with minimum weight $1$. Let $P$ be a set of points in the plane, with no point lying on the boundary of any of the pseudodisks. 
Assume also that  no pseudodisk in $\opt(\R, P)$ has weight more than $w(\opt(\R,P))/3$.
Then for any $\delta>0$, there exists a curve $\C$ such that 
\begin{itemize}
\item $w(\textsc{Opt}(\mathcal{R}, P_{in}(\C))) \leq (\frac{2}{3}+3\delta) w(\textsc{Opt}(\mathcal{R}, P))$
\item $w(\textsc{Opt}(\mathcal{R}, P_{ext}(\C))) \leq (\frac{2}{3}+3\delta)w(\textsc{Opt}(\mathcal{R}, P))$
\item $w(\textsc{Opt}(\mathcal{R}, P_{in}(\C))) + w(\textsc{Opt}(\mathcal{R}, P_{ext}(\C)))  \leq (1+2\delta)w(\textsc{Opt}(\mathcal{R}, P))$
\item The complexity of $\C$ is $O(\frac{\alpha}{\delta^2} \log w(\textsc{Opt}(\R,P))$.
\end{itemize}
\end{theorem}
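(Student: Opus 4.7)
The plan is to reduce the theorem to producing a single simple closed curve $\C$ whose relationship to the pseudodisks of $\opt := \opt(\R, P)$ (not to $\R$) controls the three stated quantities, and then to build such a curve by decomposing $\opt$ by weight class and exploiting the linear union complexity of pseudodisks.

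\textbf{Sufficient conditions on $\C$.}
Let $W = w(\opt)$, and split $\opt$ into the pseudodisks lying strictly inside $\C$ (total weight $I$), strictly outside (total weight $E$), and crossing $\C$ (total weight $B$), so $I+E+B = W$. Every point of $P_{in}(\C)$ is covered by an $\opt$-pseudodisk contained in $interior(\C)$ or crossing $\C$, hence $w(\opt(\R, P_{in}(\C))) \le I + B$; symmetrically $w(\opt(\R, P_{ext}(\C))) \le E + B$. If I can produce a $\C$ with $I \le \tfrac{2}{3}W$, $E \le \tfrac{2}{3}W$ and $B \le 2\delta W$, then all three conclusions follow immediately (the sum equals $W+B \le (1+2\delta)W$), and only the complexity bound remains.

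\textbf{Constructing $\C$.}
The immediate obstruction is that the pseudodisks in $\opt$ need not have disjoint interiors, so Theorem~\ref{thm:weightedseparator} does not apply to $\opt$ directly; meanwhile Theorem~\ref{thm:separator} is applied to a pseudodisk family in general position yields $m = \Theta(n^2)$ arrangement intersections and thus far too many crossings. To circumvent this I would first prune $\opt$ to a cover-free subfamily (which does not affect $w(\opt)$), and then decompose it into $L = O(\log W)$ weight classes $\opt^1,\dots,\opt^L$ where $\opt^\ell$ contains pseudodisks of weight in $[2^{\ell-1}, 2^\ell)$; within a class, weights differ by at most a factor two, so weighted balance/crossing on that class is equivalent (up to constant factors) to unweighted balance/crossing. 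For each class $\opt^\ell$, I would then use the linear union complexity of pseudodisks together with randomized ordering and probabilistic re-sampling (of the Chazelle--Friedman/Clarkson--Shor type cited in the preamble to the separator theorems) to produce a simple closed curve $\C^\ell$ that crosses at most an $O(\delta/L)$ fraction of the class weight, is balanced with respect to $\opt^\ell$, and has complexity $O(\alpha/\delta^2)$. Finally, the $L$ curves $\C^1,\dots,\C^L$ are stitched into a single simple closed curve $\C$ by connecting them through thin corridors of negligible complexity; the total complexity is $O(\alpha L /\delta^2) = O(\alpha \log W / \delta^2)$, the total crossing weight is at most $L \cdot (\delta/L) W = 2\delta W$, and overall balance with respect to $\opt$ is preserved (possibly after a local averaging argument that chooses, among several candidate $\C^\ell$ per class, the one making the combined balance fit).

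\textbf{Main obstacle.}
The most delicate step is the per-class construction. Straightforward application of Theorem~\ref{thm:separator} is insufficient because pseudodisks realise $m = \Theta(n_\ell^2)$ boundary intersections, yet the \emph{union} complexity is only $O(n_\ell)$; the square-root-of-union-complexity behaviour is what must be harnessed. This is where the pseudodisk-specific structure (every two boundaries cross at most twice, so a random sub-sample has union complexity linear in the sample size, and every cell of a canonical planar decomposition of the sample sees only $O(n_\ell/r)$ pseudodisks in expectation) must be plugged into the re-sampling argument to upgrade the complexity-versus-cut trade-off to $O(\alpha/\delta^2)$ complexity for $O(\delta)$ fractional cut. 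A secondary but routine challenge is verifying that the class-by-class balance, together with the corridor-stitching construction, indeed yields a single curve that is $2/3$-balanced with respect to all of $\opt$, not merely class by class.
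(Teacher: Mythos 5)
Your ``sufficient conditions on $\C$'' paragraph reduces the theorem to producing a simple closed curve that is balanced with respect to $\opt$ and \emph{crosses} $\opt$-pseudodisks of total weight at most $2\delta W$. That is the wrong target, and in general no such curve exists. As the paper points out in the introduction, if $\opt$ consists of many unit-weight pseudodisks sharing a common region (a flower with $n$ petals, which is cover-free, so your pruning step does not help), then any curve that balances $\opt$ must pass through the common region and therefore crosses $\Omega(n)$ of the pseudodisks. This is precisely the structural obstruction that distinguishes set cover from independent set, and no weight-class bucketing, re-sampling trick, or exploitation of linear union complexity can salvage a statement that is false for all curves. You have correctly flagged the per-class construction as ``the most delicate step,'' but the difficulty is not technical: the quantity you are trying to make small ($B$, the weight of $\opt$-pseudodisks crossing $\C$) cannot be made small.

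What the paper does instead is allow $\C$ to cross arbitrarily much of $\opt$, and replace each $Q\in\opt$ by a \emph{core} $\tilde{Q}\subseteq Q$ so that the cores are pairwise disjoint, still cover $P$, and satisfy $\sum_{Q}|v(\tilde{Q})|\,w(Q)=O(W\log W)$ (Lemma~\ref{lemma:disjointcoredecomposition}). The randomized ordering and Clarkson--Shor-style re-sampling you allude to are used here---to control the expected weighted complexity of the cores---\emph{not} to build a separator. After discarding the $\delta W$ weight of cores that end up with too many vertices, the surviving cores are $O(\alpha\cdot\tfrac{1}{\delta}\log W)$-simple and disjoint, so Theorem~\ref{thm:weightedseparator} applies to them in a single shot (no $m$-dependence, no weight classes, no stitching). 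Every point of $P_{in}(\C)$ then lies in a core that is inside $\C$ or crosses $\C$, and the three items follow exactly as in your arithmetic, but with cores playing the role you assigned to the original pseudodisks. Your stitching step would also be problematic even if the per-class curves existed, since class-by-class balance with respect to $L$ independently chosen curves does not compose into global balance for any single closed curve, but this is moot given the more basic issue above.
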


A QPTAS for weighted pseudodisks follows
from this theorem in similar manner
to that of  Adamaszek-Wiese~\cite{AW13, AW14}.
We first use Claim~\ref{claim:tofame} to reduce the given instance of the set cover problem to an instance $(\R,P)$ where the minimum weight of the regions is $1$ and the weight of the optimal solution $w(\textsc{Opt}(\R,P))$ is $O(n/\epsilon)$.
Assume also that no pseudodisk in $\opt(\R, P)$ has weight more than $w(\opt(\R,P))/3$.
The input instance can be easily perturbed so that no point lies on the boundary of any region. Now by applying Theorem~\ref{lem:rajnikant}
with a given $\delta$, there exists a curve $\C$
of complexity $O( 1/\delta^2 \log(n/\epsilon))$. Thus by enumeration, 
there are $O(n^{(1/\delta^2 \cdot \log n/\eps)^{O(1)}})$ such possible curves $\C'$
(the proof of Theorem~\ref{lem:rajnikant} shows that the vertices
of any such $\C'$ come from a polynomial-sized subset
that can be computed in polynomial time)
each giving two sub-problems $(\R, P_{in}(\C'))$ and $(\R, P_{ext}(\C'))$.
Thus, as $\eps$ is a constant, the problem is QPT-partitionable, which together with Lemma~\ref{thm:alg}
gives the required QPTAS.
Finally, note that there can be at most $2$ pseudodisks in the optimal solution
with weight more than $w(\opt(\R,P))/3$, and one can simply guess (by
enumerating the at most $O(n^2)$ possibilities)
these pseudodisks, and recurse on the sub-problem where the weight
of the optimal solution is reduced by a constant-factor.

%\hide{\bf We have to say what to do with pseudodisks of weight more than a third.\\}

%\begin{figure}
%\vspace{-0.4in}
%\centering
%\begin{minipage}{.3\textwidth}
%  \centering
%  \includegraphics[width=6cm]{}
%  \captionof{figure}{Pseudodisks $X$ and $Y$}
%  \label{fig:twodisks}
%\end{minipage}%
%\begin{minipage}{.3\textwidth}
%  \centering
%  \includegraphics[width=6cm]{}
%  \captionof{figure}{$X$ and $Y\setminus X$ }
%  \label{fig:twodisks1}
%\end{minipage}
%\begin{minipage}{.3\textwidth}
%	\centering
%	\includegraphics[width=6cm]{} 
%	\captionof{figure}{$X$ and $\tilde{Y}$}
%	\label{fig:twodisks2} 
%\end{minipage}
%\end{figure}

\begin{figure}
\vspace{-0.4in}
\hspace{-0.4in}
\begin{subfigure}[b]{0.23\textwidth}
\centering
  \includegraphics[width=7cm]{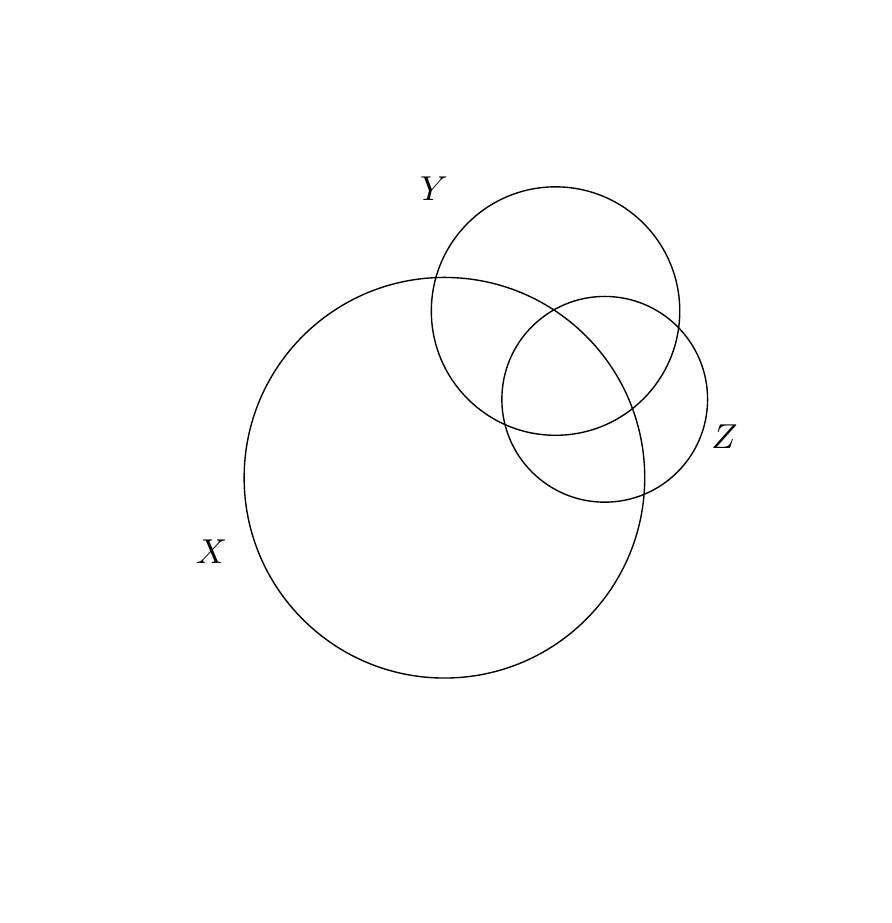} 
  \caption{$X$, $Y$, $Z$} 
  \label{fig:threedisks}
\end{subfigure}
\begin{subfigure}[b]{0.23\textwidth}
\centering
  \includegraphics[width=7cm]{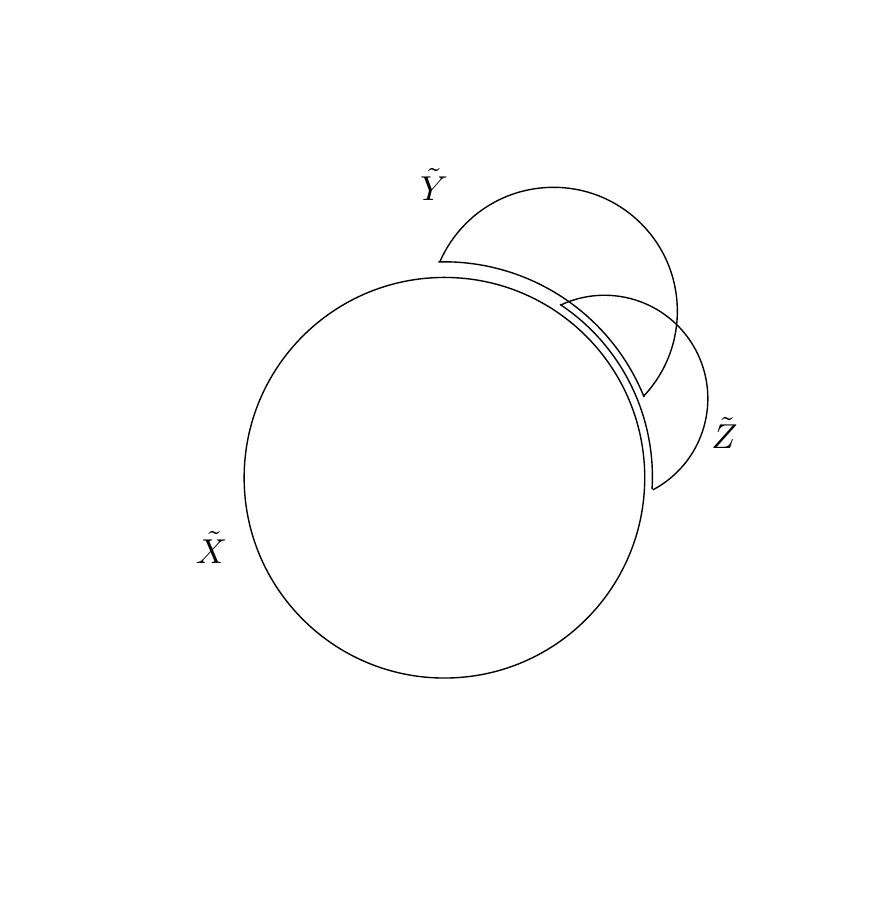} 
  \captionof{figure}{$\tilde{X}, \tilde{Y}$, $\tilde{Z}$}
  \label{fig:threedisks1}
\end{subfigure}
\begin{subfigure}[b]{0.23\textwidth}
\centering
  \includegraphics[width=7cm]{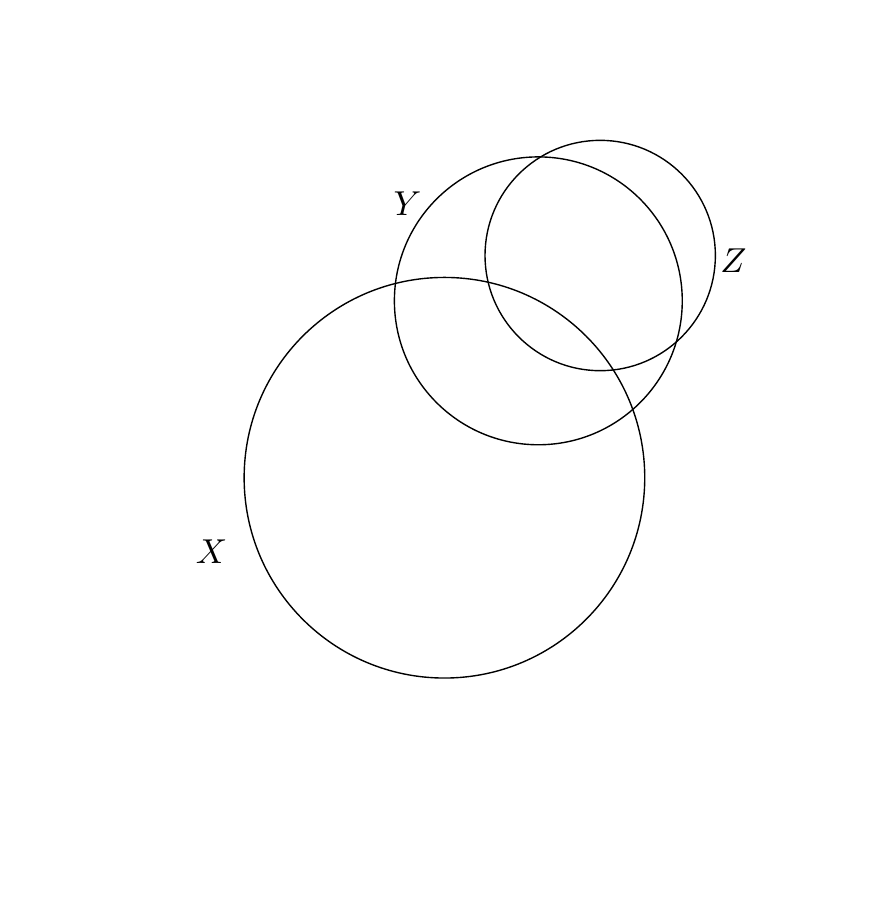} 
  \captionof{figure}{$X$, $Y$ and $Z$}
  \label{fig:threedisks2}
\end{subfigure}
\begin{subfigure}[b]{0.23\textwidth}
\centering
  \includegraphics[width=7cm]{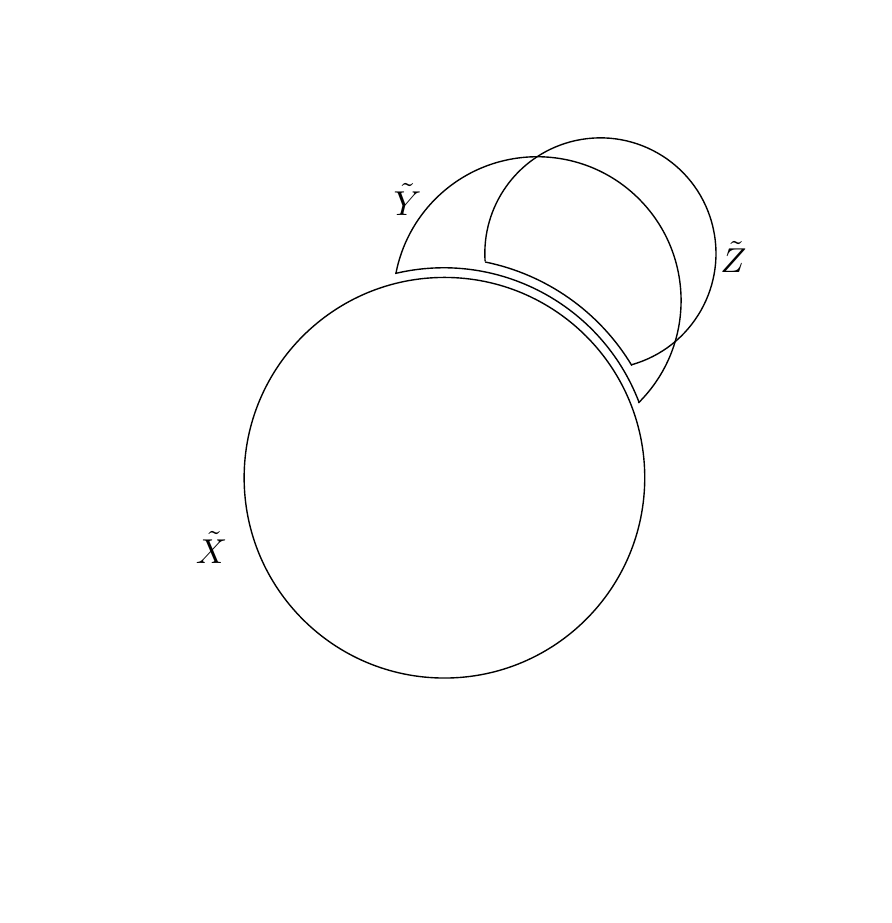}
  \captionof{figure}{$\tilde{X}, \tilde{Y}$, $\tilde{Z}$}
  \label{fig:threedisks3}
\end{subfigure}
\caption{Pushing pseudodisks}
\end{figure}

Towards proving the above theorem, we define structural decompositions
for pseudodisks in $\Re^2$ called
\emph{core decompositions}. We will also use this notion for halfspaces in $\Re^3$.
Informally, given a set of pseudodisks, our goal is to shrink them in such a way that their union remains (almost) unchanged but the number of vertices~\footnote{By vertices we mean the intersection points of the boundaries of the pseudodisks.} in the arrangement
decreases. 
\hide{
Suppose that we are given two pseudodisks as in Figure~\ref{fig:twodisks} and we want to remove both the intersections while fixing the union. We could keep $X$ as it is and replace $Y$ by $\tilde{Y} = closure(Y \setminus X)$ as shown in Figure~\ref{fig:twodisks1}. This keeps the union unchanged but $X$ and $\tilde{Y}$ now share a portion of their boundaries and thus technically do not form pseudodisks. To circumvent this problem, we replace our requirement that the union be unchanged with the requirement that any point that does not lie `too close' to the boundary of any of the pseudodisks is still covered. Then we can construct $\tilde{Y}$ as shown in Figure~\ref{fig:twodisks2}. Motivated by this, we have the definitions below. }
Denote by $B_\tau$ a closed ball of radius $\tau$. 
We denote the Minkowski sum by $\oplus$.

\begin{defn}({\bf Core Decomposition.})
Given $\R = \{R_1, \ldots, R_n\}$ and a $\beta>0$, a set of regions $\tilde{\R} = \{\tilde{R}_1, \ldots, \tilde{R}_n\}$
is called a \emph{$\beta$-core decomposition} of $\R$ (and each
$\tilde{R}_i$ a \emph{core} of $R_i$) if
(1) $\tilde{R}_i \subseteq R_i$ for all $i=1, \ldots, n$,
(2) $\bigcup_i \tilde{R}_i \supseteq \bigcup_i  {R}_i \setminus \bigcup_i(\partial R_i \oplus B_\beta) $, and
(3) each $\tilde{R}_i$ is simply connected.
\end{defn}

Each disk $\tilde{R}_i \in \tilde{\R}$ will be composed of pieces
of boundaries of the disks in $\R$. The sequence of the endpoints of these pieces
will be the vertices defining $\tilde{R}_i$ (denote this sequence
by $v(\tilde{R}_i)$, and its cardinality by $|v(\tilde{R}_i)|$).
The information needed to
uniquely determine $\tilde{R}_i$ then is the sequence of these pieces,
or equivalently, the sequence of vertices  defining $\tilde{R}_i$.
In the following, we will use the term {\em core decomposition} to mean a $\beta$-core decomposition with a suitably small $\beta > 0$ to be fixed later.
The following two lemmas show the existence of core decompositions
with specific properties. 

\begin{lemma}
\label{lemma:pusher}
Given a cover-free set $\R$ of pseudodisks, 
 a marked pseudodisk $X \in \R$ (called
the \emph{pusher}) and a $\beta > 0$, there exists a 
$\beta$-decomposition $\tilde{\R}$ of $\R$ such 
that $\tilde{X} = X$ and $\tilde{R}\cap \tilde{X} = \emptyset$
for all $R \neq X$ and $\{ \tilde{R}: R\in \R \}$ is a cover-free family of pseudodisks.
\end{lemma}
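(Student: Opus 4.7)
My plan is to construct each core $\tilde R$ for $R\neq X$ by subtracting from $R$ a small $\beta$-inflation $X^+ := X\oplus B_\beta$ of $X$, and — in the topologically annular case — also cutting through $R$ along a thin slit. By the cover-freeness of $\R$, no region $R\neq X$ can be contained in $X$ (else $X$ alone would cover $R$); combining this with the pseudodisk property that boundaries cross at most twice leaves three configurations for each such $R$: $R$ is disjoint from $X$, $R$ overlaps $X$ with $|\partial R\cap\partial X|=2$, or $X\subsetneq R$. In the first two cases $R\setminus X^+$ is simply connected for small enough $\beta$, so I define $\tilde R$ to be its closure. In the nested case, $R\setminus X^+$ is an annulus, and I remove a sufficiently narrow rectilinear slit running from $\partial X^+$ to a generically chosen point of $\partial R$, obtaining a simply connected $\tilde R$.

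Given this construction, most of the required properties are routine. Containment $\tilde R\subseteq R$ and disjointness $\tilde R\cap\tilde X=\emptyset$ are immediate from $\tilde R\subseteq R\setminus X^+$; simple connectivity holds by construction. The coverage condition $\bigcup\tilde R_i\supseteq\bigcup R_i\setminus\bigcup(\partial R_i\oplus B_\beta)$ follows by a case analysis on a point $p$ lying in $\bigcup R_i$ at distance more than $\beta$ from every $\partial R_i$: either $p\in X=\tilde X$, or $p$ lies outside $X^+$ inside some $R$ and far from $\partial R$, so it is preserved in $\tilde R$ provided the slit of $R$ avoids $p$ — which can be guaranteed because $p$ is at distance $>\beta$ from $\partial R$ while the slit lies within an $O(\beta)$-neighborhood of $\partial R\cup\partial X^+$. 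Cover-freeness of the new family is inherited: for each $R$, the original property supplies a witness point $w_R\in R\setminus\bigcup_{R'\neq R}R'$, and for small $\beta$ this witness still lies in $\tilde R$ and remains uncovered by the other cores.

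The main obstacle is to show that $\{\tilde R : R\in\R\}$ is itself a pseudodisk family — that any two core boundaries intersect at most twice. Each $\partial\tilde R$ consists of (i) arcs of the original $\partial R$, (ii) arcs on $\partial X^+$, and (iii) slit edges in the nested case. Intersections on inherited $\partial R$-arcs are controlled by the original pseudodisk property. The delicate question is whether interactions on $\partial X^+$ or between slits generate new crossings. I would handle this by a case analysis on the configurations of $R,R'$ relative to $X$: when at most one of the pair contains $X$, $\partial\tilde R$ and $\partial\tilde R'$ differ from $\partial R,\partial R'$ only by local deformations outside $X^+$, preserving the two-crossings bound; when both strictly contain $X$, I route their slits in disjoint radial directions outward from $X$ (using the cover-free witness points $w_R,w_{R'}$ to direct them), so that the slits do not interact, and then verify directly that the combined number of crossings of the two new boundaries remains at most two. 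This case analysis is the heart of the proof.
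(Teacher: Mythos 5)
Your proposal has a genuine gap, and it concerns precisely the part you label "the heart of the proof." You push every region $R \neq X$ out of the \emph{same} inflated set $X^+ = X \oplus B_\beta$. Consider two regions $Y, Z$ that both overlap $X$ and whose arcs $I_Y = Y \cap \partial X$ and $I_Z = Z \cap \partial X$ on $\partial X$ overlap. After subtracting $X^+$, both $\partial\tilde Y$ and $\partial\tilde Z$ acquire arcs lying on the \emph{same} curve $\partial X^+$, and over the overlap of $I_Y$ and $I_Z$ these arcs coincide. The resulting cores share a boundary arc, so they are not a pseudodisk family. You acknowledge this class of interaction ("differ only by local deformations outside $X^+$") but never specify a deformation that actually separates the arcs, let alone one that controls the crossing count. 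The paper's proof does not use one inflation: it assigns each $R$ its own gap $gap(R) \in (0,\mu)$ and, crucially, these gaps are distinct and \emph{ordered consistently with the inclusion partial order} on the arcs $\{I_R\}$ (via a topological sort of that order, then $gap(R) = \mu\cdot rank(R)/n$). Distinctness kills shared arcs; consistency with nesting prevents a region with a nested arc from "piercing" one with an enclosing arc. That ordering is the missing idea, and without it the case analysis you sketch does not close.

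A second, smaller issue: the annular case $X \subsetneq R$ that your slit construction is built for cannot occur. If $X \subsetneq R$ then $R$ alone covers $X$, contradicting cover-freeness of $\R$ — the same observation you used to rule out $R \subseteq X$ applies symmetrically. So the slit machinery (and the associated worries about routing slits for pairs of regions both containing $X$) is vacuous here. In the cover-free setting, each $R \neq X$ is either disjoint from $X$ or crosses $\partial X$ in exactly two points, $R \setminus (X\oplus B_{gap(R)})$ is already simply connected, and the whole difficulty sits in choosing the gaps, not in restoring simple connectivity.
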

\begin{proof}
\hide{
In the case of two pseudodisks we can set $\tilde{Y}$ to be $closure(Y \setminus (X \oplus B_\mu))$ where $\mu$ is a small enough real less than $\beta$ chosen so that $\tilde{Y}$ remains a pseudodisk. In this case we say that $X$ pushes $Y$ with gap $\mu$. Now, suppose that we have three pseudodisks as shown in Figure~\ref{fig:threedisks}. We now want to push $Y$ and $Z$ out of $X$. However, if we push with the same gap $\mu$ then $\tilde{Y}$ and $\tilde{Z}$ would share a portion of their boundaries. We therefore need to push different gaps $\mu_Y$ and $\mu_Z$ to ensure that $\tilde{Y}$ and $\tilde{Z}$ are pseudodisks. In this case it does not matter whether $\mu_Y$ is smaller or $\mu_Z$ is smaller. However if the pseudodisks were as shown in Figure~\ref{fig:threedisks2}, then we must pick $\mu_Y$ to be smaller than $\mu_Z$ otherwise $\tilde{Y}$ and $\tilde{Z}$ would be piercing each other and hence not form a set of pseudodisks. Now we formally describe the way to push when there are an arbitrary number of pseudodisks.}

Set $\tilde{X} = X$. For each $R \in \R \setminus \{X\}$, we compute a number $gap(R) \in (0,\mu)$, where $\mu \in (0,\beta)$ is a suitably small number, and set $\tilde{R} = closure(R \setminus (X \oplus B_{gap(R)}))$ (we say that $X$ pushes $R$ with gap $gap(R)$).
For any $R \in \R \setminus \{X\}$, let $I_R$ be the interval $R \cap \partial X$ on the boundary of $X$.
As no pseudodisk in $\R$ is completely contained in any other
pseudodisk of $\R$ (cover-free), the intervals $I_R$ are well-defined. 
Consider the partial order $\preceq$ on these intervals
defined by inclusion ($I_R \preceq I_S$ if $I_R \subset I_S$). 
By a topological sorting of this partial order we can assign a distinct rank $rank(R) \in \{1,\ldots,n\}$ to each pseudodisk $R \in \R$ such that if $I_R \subset I_{R'}$ then $rank(R) > rank(R')$. 
We set $gap(R) = \mu \frac{rank(R)}{n}$.

%\begin{figure}
%\centering
%\begin{minipage}{.5\textwidth}
%  \centering
%  \includegraphics[width=9cm]{threedisks2.pdf}
%  \captionof{figure}{Pseudodisks $X$, $Y$ and $Z$}
%  \label{fig:threedisks2}
%\end{minipage}%
%\begin{minipage}{.5\textwidth}
%  \centering
%  \includegraphics[width=9cm]{threedisks3.pdf}
%  \captionof{figure}{$\tilde{X}, \tilde{Y}$ and $\tilde{Z}$}
%  \label{fig:threedisks3}
%\end{minipage}
%\end{figure}

%%%
%%%and let $Chain(R)$
%%%be length of the longest chain starting at $I_R$ in this partial order
%%%(i.e., $Chain(R) =i$ iff there are exactly $i$ other nested intervals containing $I_R$). Set $\tilde{R} = R \setminus \{ X \oplus B(\lambda \cdot Chain(R)) \}$,
%%%where $B(r)$ is a ball of radius $r$, and $\lambda$ is an arbitrarily
%%%small constant.
%%%See Figure~\ref{fig:coredecomposition} for an illustration.
%%%
%%%\begin{figure}[tb]
%%%\begin{center}
%%%\includegraphics[height=5cm]{} 
%%%\end{center}
%%%\label{fig:coredecomposition} 
%%%\end{figure}

Clearly each core is contained in its corresponding pseudodisk and 
for a small-enough $\mu$, it is simply connected. Also, since the points we may have removed from the union, due to the gaps, lie in $X \oplus B_\mu$, the second condition in the definition of  a $\beta$-core decomposition is satisfied. The cores obtained are also cover-free because the union of cores cover the union of the original regions (except close to boundaries). Since the input set is cover-free each pseudodisk has a {\em free} portion that is not covered by others. The core corresponding to a pseudodisk then must cover the free portion in that pseudodisk which is not covered by the other cores. Thus no core is covered by the union of other cores.

It remains to show that the cores form pseudodisks. Let $Y$ and $Z$ be any two pseudodisks in $\R$, and we now finish
the proof by showing
that the boundaries of $\tilde{Y}$ and $\tilde{Z}$ intersect at most twice. The possible cases are the following:
(1) $I_Y \cap I_Z = \emptyset$, 
(2) $I_Y \preceq I_Z$ and 
(3) $I_Y \cap I_Z \neq \emptyset$ and $I_Y \npreceq I_Z$.

In case $1$, since the intervals are disjoint, $\partial \tilde{Y}$ and $\partial \tilde{Z}$ do not have any new intersection that $\partial Y$ and $\partial Z$ did not have. They may have lost intersections lying in $X$. In any case, $\partial \tilde{Y}$ and $\partial \tilde{Z}$ intersect at most twice. In case $2$, $Y$ gets pushed with a smaller gap than $Z$ and the situation is exactly as shown in Figures ~\ref{fig:threedisks2} and ~\ref{fig:threedisks3}. In case $3$, $Y$ and $Z$ get pushed with different gaps and the situation is exactly as shown in Figures ~\ref{fig:threedisks} and ~\ref{fig:threedisks1}.
\end{proof}
\textbf{Remark:} Note that for each pseudodisk $R$ intersecting $X$, 
the boundary of $\tilde{R}$ now has two new vertices corresponding to the two intersections of $\partial R$ with $\partial X$. These 
vertices are slightly perturbed (and arbitrarily close) 
from the intersections because of 
the gap. We say that each such new vertex \emph{corresponds} to the
original intersection vertex between $\partial R$ and $\partial X$. 
When the context is clear, we will not distinguish between this new vertex
and the vertex it corresponds to.
\\

Before we prove our next main result on core decompositions,
we will need the following technical result. 
For clarity,  vertex $v$, the intersection point
of $R_i$ and $R_j$, is written as $(v,i,j)$.
Given $\R$, the \emph{depth} of a vertex $(v,i,j)$, denoted $d_v$,
is the total weight of the regions in $\R$ containing $(v,i,j)$ in the interior
(thus it excludes the weight of $R_i$ and $R_j$).

\begin{claim}
\label{claim:cs}
Let $\R = \{R_1, \ldots, R_n\}$ be a set of $n$ weighted pseudodisks,
and $k>0$ a given parameter. 
Assume $R_i$ has weight $w_i$, and $W = \sum_i w_i$. Further let 
$(v,i,j)$ denote a vertex $v$ in the arrangement of $\R$ defined by $R_i$
and $R_j$. Then
$$ \sum_{\substack{(v,i,j) \text{ s.t. } \\ k \leq d_v < 2k}} 
\frac{w_i \cdot w_j}{w_i+w_j+k} = O(W) $$
\end{claim}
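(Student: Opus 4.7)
The plan is to apply a weighted Clarkson--Shor random sampling argument, leveraging the linear union-complexity of pseudodisks. I would sample each $R_l \in \R$ independently with probability $p_l = w_l/(w_l+k)$ to form a sample $S$. First, $E[|S|] = \sum_l p_l \le 2W/k$: split into light regions ($w_l \le k$, where $p_l \le w_l/k$ contributes at most $W/k$) and heavy regions ($w_l > k$, of which there are at most $W/k$, each with $p_l < 1$). Since pseudodisks have linear union complexity, $E[|\partial(\bigcup S)|] = O(W/k)$.

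For a vertex $(v,i,j)$ with $k \le d_v < 2k$, the probability it appears on $\partial(\bigcup S)$ equals $p_i p_j \prod_{l \in A_v}(1-p_l)$. Writing $1 - p_l = 1/(1+w_l/k)$ and using $1+x \le e^x$,
\[
\prod_{l \in A_v}(1-p_l) \;\ge\; \exp\!\Bigl(-\!\!\sum_{l \in A_v}\! w_l/k\Bigr) \;=\; e^{-d_v/k} \;\ge\; e^{-2}.
\]
Hence the boundary probability is at least $e^{-2}\, w_i w_j / ((w_i+k)(w_j+k))$, and Clarkson--Shor gives
\[
\sum_{k \le d_v < 2k} \frac{k\, w_i w_j}{(w_i+k)(w_j+k)} \;=\; O(W).
\]

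To pass from this to the claim, expand $(w_i+k)(w_j+k) = w_i w_j + k(w_i+w_j+k)$. Whenever $(w_i-k)(w_j-k) \le 2k^2$---which holds automatically in the ``light--light'' and ``light--heavy'' cases, i.e.\ when $w_i \le k$ or $w_j \le k$---one has $\frac{w_i w_j}{w_i + w_j + k} \le 2 \cdot \frac{k\, w_i w_j}{(w_i+k)(w_j+k)}$, so these contributions sum to $O(W)$ directly from the Clarkson--Shor bound.

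The main obstacle is the ``heavy--heavy'' case where both $w_i, w_j > k$ and $\min(w_i,w_j) \gg k$, since there $\frac{w_i w_j}{w_i + w_j + k} \approx \min(w_i,w_j)$. I would handle this by exploiting the depth constraint: since every heavy region has weight exceeding $k$, the condition $d_v < 2k$ forces at most one other heavy region to contain $v$ in its interior, so the heavy--heavy vertices all lie on the $\le 1$-level of the arrangement restricted to $H = \{R_l : w_l > k\}$. The classical $\le k$-level bound for pseudodisks then gives only $O(|H|) = O(W/k)$ such vertices, and a charging argument---charging each such vertex's $\min(w_i,w_j)$ contribution to its lighter defining region and applying the $\le 1$-level bound to suitable weight-ordered subarrangements of $H$---yields $\sum_{\text{heavy--heavy}} \frac{w_i w_j}{w_i + w_j + k} = O(W)$, completing the proof.
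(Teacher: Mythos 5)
Your proposal is correct and takes essentially the same route as the paper: a weighted Clarkson--Shor sampling argument (regions sampled with weight-dependent probabilities) that charges the low-depth vertices against the $O(W/k)$ expected union complexity, together with a heavy/light split in which the ``heavy--heavy'' pairs are handled by a combinatorial charging rather than by the random sample. Your sampling probability $p_l = w_l/(w_l+k)$ is a minor cosmetic variant of the paper's scheme (the paper includes every region of weight $\geq 2k$ deterministically and samples the rest with $p_i = w_i/4k$), and your algebraic conversion from $\sum k w_i w_j/((w_i+k)(w_j+k)) = O(W)$ to the claimed sum is clean and correct, since the ratio of the two summands is $1 + w_i w_j/(k(w_i+w_j+k)) \leq 1 + \min(w_i,w_j)/k$, which is $O(1)$ exactly in the non-heavy-heavy cases.

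The one place you should tighten is the final heavy--heavy charging. Your phrase ``applying the $\leq 1$-level bound to suitable weight-ordered subarrangements'' is not quite the right mechanism: the cleanest way is the degeneracy/orientation argument the paper hints at. Concretely, let $H$ be the heavy regions; the relevant vertices lie at level $\leq 1$ in the arrangement of $H$, and for \emph{every} subset $T \subseteq H$ the number of level-$\leq 1$ vertices of $T$ defined by pairs in $T$ is $O(|T|)$ (by the shallow-cutting / Clarkson--Shor bound for pseudodisks). Hence the multigraph on $H$ whose edges are these vertices is $O(1)$-degenerate, so its edges can be oriented with $O(1)$ in-degree, and charging each vertex's $\min(w_i,w_j)$ to the head of its oriented edge gives $\sum \min(w_i,w_j) = O(\sum_{R\in H} w_R) = O(W)$. (The paper's choice of threshold $2k$ places these vertices on the level-$0$ union boundary, which makes the induced-subgraph sparsity immediate; with your threshold $k$ you land on level $\leq 1$, which still satisfies the needed hereditary sparsity, so the argument goes through.)
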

\begin{proof}
The proof follows from melding the Clarkson-Shor technique
with a charging argument.
Let $\R_1 \subset \R$ be the set of disks with weight $w_i \geq 2k$, and $\R_2 = \R \setminus \R_1$.
Note that $|\R_1| = O(W/k)$, and that any vertex $v$ with depth less than $2k$
and defined by two disks in $\R_1$ must lie on the boundary of the union
of the regions in $\R_1$. This implies 
\begin{equation} 
\sum_{\substack{(v,i,j) \text{ s.t. } R_i,R_j \in \R_1 \\ k \leq d_v < 2k}} 
\frac{w_i \cdot w_j}{w_i+w_j+k} \leq \sum_{\substack{(v,i,j) \text{ s.t. } R_i,R_j \in \R_1 \\ k \leq d_v < 2k}} 
\min \{w_i, w_j \} =  O\left(\sum_{R_i \in \R_1} w_i \right) = O(W)
\end{equation}
where the second-last inequality follows from a charging
argument: the number of vertices in the union for $t$ pseudodisks is $O(t)$~\cite{PS09}, 
and so each vertex can be assigned to one of its two disks such that
each disk gets $O(1)$ assigned vertices. 

Set $S = \R_1$, and further add each disk $R_i \in \R_2$ into $S$
with probability $p_i = w_i/4k$. Then
$$ \text{Expected union complexity of S} = E[ O( |S| ) ] = O(|\R_1| + \sum_{R_i \in \R_2} \frac{w_i}{4k}) = O(W/k)$$
On the other hand, the expected number of vertices
defined by the intersection
of a disk in $\R_2$ and a disk in $\R_1$, and of depth less than $2k$,
that end up as vertices in the union of $S$:
\begin{eqnarray*}
\sum_{\substack{(v,i,j) \text{ s.t. } R_i \in \R_2, R_j \in \R_1 
\\ k \leq d_v < 2k}} \frac{w_i}{4k} \prod_{\substack{R_l \text{ contains } v
\\ R_l \in \R_2}}
(1-p_l) 
\geq \sum \frac{w_i}{4k} e^{-2\sum_l w_l/4k}  
\geq 
\sum \frac{w_i}{4ek}
\end{eqnarray*}
using the fact that $1-x \geq e^{-2x}$ for $0 \leq x \leq 0.5$,
and that $\sum w_l \leq 2k$ as all such $R_l$ contain $v$, which has
depth at most $2k$.
The expected number of vertices in the union of $S$ 
defined by two disks in $\R_2$, and of depth at most $2k$:
\begin{eqnarray*}
\sum_{\substack{(v,i,j) \text{ s.t. } R_i \in \R_2, R_j \in \R_2 
\\ k \leq d_v < 2k}} \frac{w_i}{4k} \cdot \frac{w_j}{4k} \prod_{\substack{R_l \text{ contains } v \\ R_l \in \R_2}} (1-p_l) 
\geq \sum \frac{w_i \cdot w_j}{16ek^2}
\end{eqnarray*}
Putting the lower- and upper-bounds together, we arrive at:
\begin{equation}
\label{eq:1}
\sum_{\substack{(v,i,j) \text{ s.t. } R_i \in \R_2, R_j \in \R_1}} \frac{w_i}{4ek} +
\sum_{\substack{(v,i,j) \text{ s.t. } R_i \in \R_2, R_j \in \R_2}} \frac{w_i \cdot w_j}{16ek^2} 
\leq O(W/k)
\end{equation}

Finally, 
\begin{eqnarray*}
\sum_{\substack{(v,i,j) \text{ s.t. } \\ k \leq d_v < 2k}} 
\frac{w_i \cdot w_j}{w_i+w_j+k} &=& 
\sum_{\substack{(v,i,j) \\ R_i \in \R_2, R_j \in \R_1}}
 \frac{w_i \cdot w_j}{w_i+w_j+k} + \sum_{\substack{(v,i,j) \\ R_i \in \R_2, R_j \in \R_2}}
 \frac{w_i \cdot w_j}{w_i+w_j+k}  \\ 
 && + \sum_{\substack{(v,i,j) \\ R_i \in \R_1, R_j \in \R_1}}
 \frac{w_i \cdot w_j}{w_i+w_j+k} \\
 & \leq & 
 \sum_{\substack{(v,i,j) \\ R_i \in \R_2, R_j \in \R_1}}
 w_i + \sum_{\substack{(v,i,j) \\ R_i \in \R_2, R_j \in \R_2}}
 \frac{w_i \cdot w_j}{k} + O(W) = O(W)
\end{eqnarray*}
where the last inequality follows from Equation~(\ref{eq:1}).

\end{proof}

\begin{figure}
\vspace{-0.4in}
\centering
\begin{minipage}{.3\textwidth}
  \centering
  \includegraphics[width=7cm]{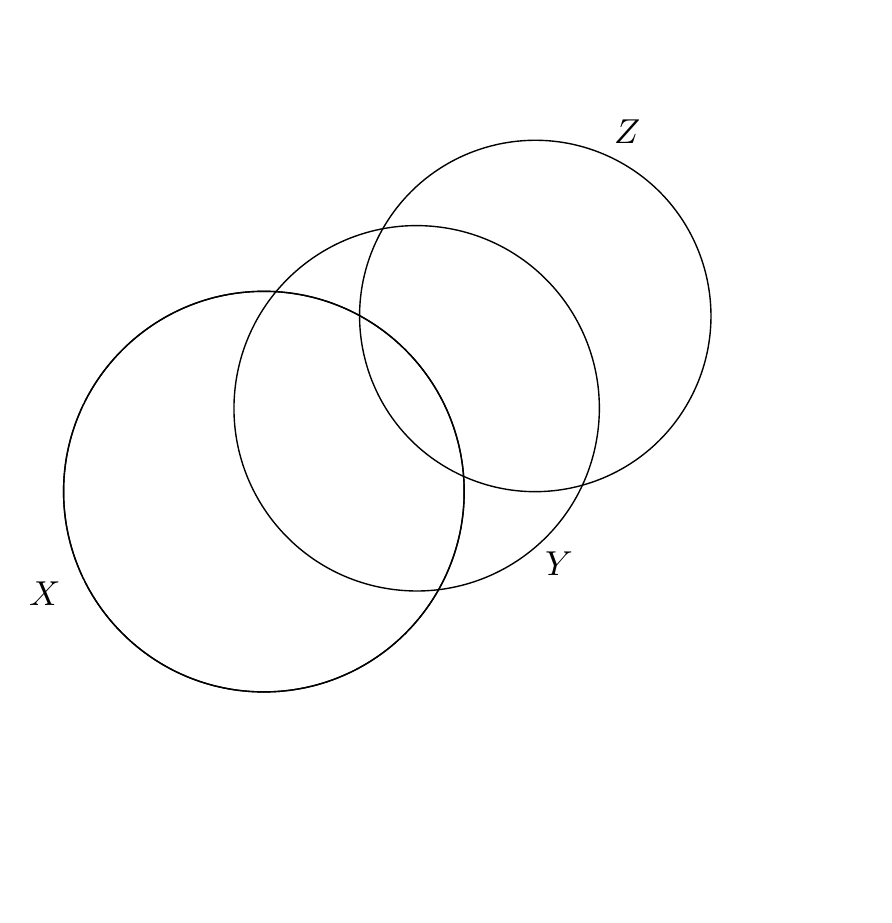}
\end{minipage}
\begin{minipage}{.3\textwidth}
  \centering
  \includegraphics[width=7cm]{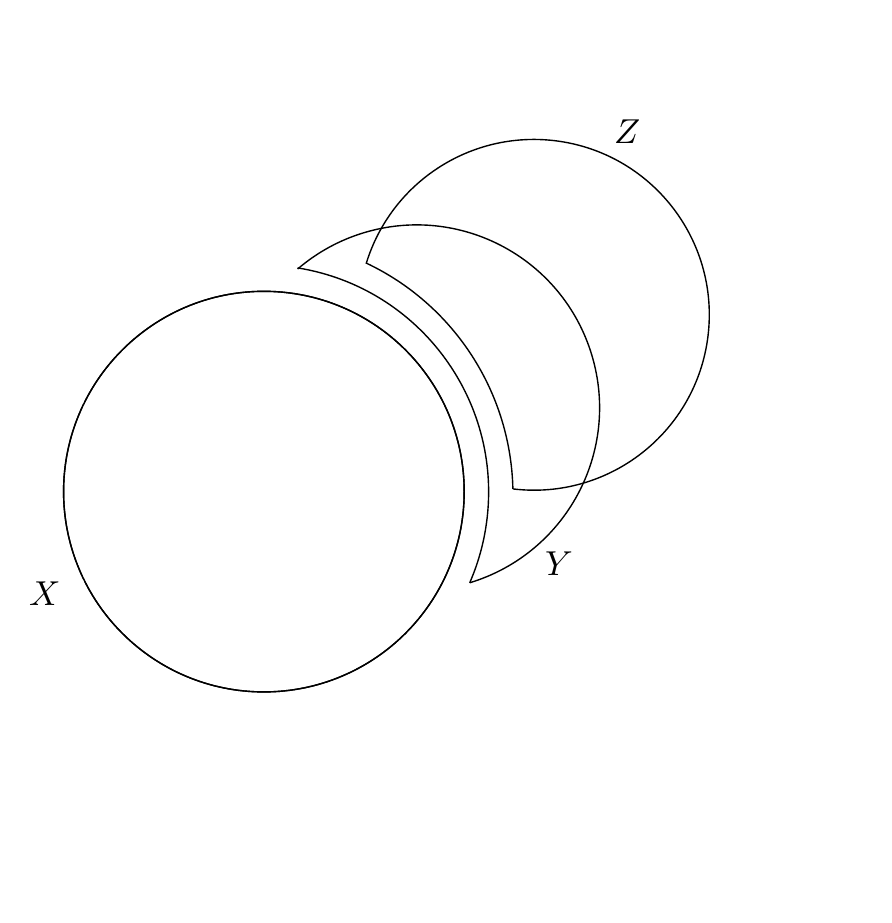}
\end{minipage}
\begin{minipage}{.3\textwidth}
  \centering
  \includegraphics[width=7cm]{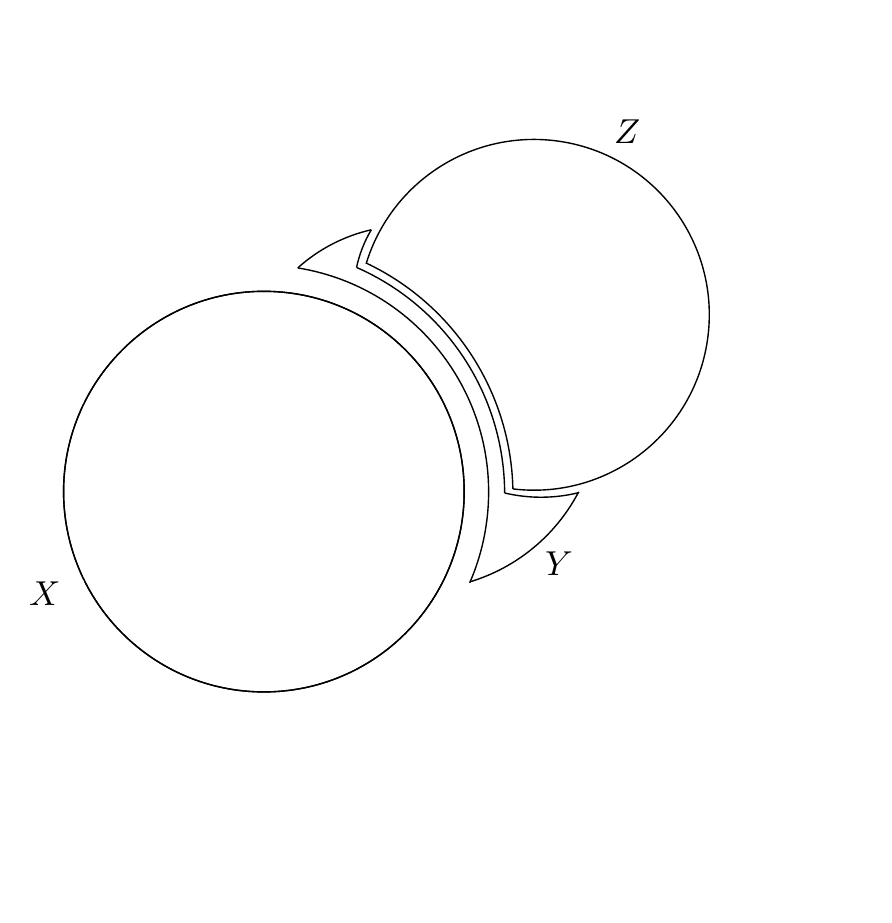}
\end{minipage}
\vspace{-0.7in}
\caption{$X$ is the first pusher, $Z$ is the second pusher and $Y$ is the third pusher.}
\label{fig:tunnels}
\end{figure}

\begin{lemma}
\label{lemma:disjointcoredecomposition}
Let $\R = \{R_1, \ldots, R_n\}$ be a set of $n$ weighted pseudodisks,
where $R_i$ has weight $w_i$, and $W = \sum_i w_i$. 
Then there exists
a core decomposition of $\R$, say the set $\tilde{\R} = \{\tilde{R}_1,
\ldots, \tilde{R}_n\}$, such that the pseudodisks in $\tilde{\R}$ are pairwise disjoint, and
$$\sum_{i} |v(\tilde{R}_i)| \cdot w_i = O(W \log W)$$
\end{lemma}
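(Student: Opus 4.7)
The approach is to invoke Lemma~\ref{lemma:pusher} iteratively according to a randomized order. I would first preprocess by deleting any pseudodisk covered by the union of the others (so the input is cover-free, as Lemma~\ref{lemma:pusher} requires), then assign each surviving $R_i$ an independent exponential clock $T_i$ with rate $w_i$ and process the pseudodisks in increasing order of $T_i$. At the $t$-th step, $\pi_t$ is made the pusher against the remaining set $\{\pi_{t+1},\ldots,\pi_n\}$, with a gap chosen small enough that the cumulative Minkowski-sum loss across all steps stays below $\beta$. The weighted exponential order has the effect that heavier pseudodisks tend to be processed earlier and therefore tend to keep their original (vertex-free) boundaries intact.

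The final cores are pairwise disjoint: at step $t$, Lemma~\ref{lemma:pusher} makes $\tilde{\pi}_t$ disjoint from every $\tilde{\pi}_s$ with $s>t$; since $\tilde{\pi}_t$ is untouched in later steps and each $\tilde{\pi}_s$ only shrinks, disjointness persists. The three defining conditions of a $\beta$-core decomposition (containment, near-coverage of the union, and simple connectivity of each core) follow by induction over the iterations, using the guarantees provided by Lemma~\ref{lemma:pusher} at each step and the fact that the cumulative gap is at most $\beta$.

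To bound $\sum_i |v(\tilde{R}_i)|\cdot w_i$, I would use the fact that each non-trivial push adds at most $2$ vertices to the pushed pseudodisk's core, so the sum equals $w_j$ summed over each newly created vertex on $\tilde{R}_j$. I would then charge each new vertex to a vertex of the original arrangement of $\R$: a new vertex on $\tilde{R}_j$ created when $R_i$ pushes $R_j$ lies at $\partial R_i\cap\partial Z$ where $Z$ is either $R_j$ itself (the ``direct'' case, charged to the arrangement vertex $(v,i,j)$) or a previous pusher $R_k$ of $R_j$ whose inherited arc along $\partial\tilde{R}_j$ is cut at $v$ (the ``inherited'' case, charged to $(v,i,k)$). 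In the direct case the event requires $R_i$ and $R_j$ to be the two earliest exponentials among the regions containing $v$ (total rate $w_i+w_j+d_v$), so the expected contribution per arrangement vertex $(v,i,j)$ is at most $\tfrac{2 w_i w_j}{w_i+w_j+d_v}$. An analogous domination argument, obtained by peeling the history into two independent ``first-processed'' events among regions containing $v$, yields the same asymptotic bound per inherited-case charge.

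I would then group the arrangement vertices into dyadic depth classes $d_v\in[2^{s-1},2^s)$ for $s=0,1,\ldots,O(\log W)$ and invoke Claim~\ref{claim:cs} in each class to bound the sum $\sum \tfrac{w_i w_j}{w_i+w_j+d_v}$ by $O(W)$, plus an $O(W)$ contribution from $d_v=0$ vertices handled via the linear union complexity of pseudodisks. Summing gives $E\!\left[\sum_i |v(\tilde{R}_i)|\, w_i\right]=O(W\log W)$, and standard derandomization (method of conditional expectations over the exponential priorities) produces the deterministic ordering asserted in the lemma. The main obstacle I anticipate is the inherited-case charging: making it injective across successive pushes so no arrangement vertex is over-counted, and controlling the probability of the compound event that $R_k$ first installs a surviving arc on $\partial\tilde{R}_j$ passing through $v$ and then $R_i$ cuts that arc precisely at $v$ before any other pusher containing $v$ intervenes.
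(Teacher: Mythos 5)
Your approach is essentially the paper's: draw a weight-proportional random order (your independent exponential clocks $T_i$ with rate $w_i$ produce exactly the same distribution on orderings as the paper's sequential weighted draw), iterate Lemma~\ref{lemma:pusher} in that order, and bound the expected weighted vertex count dyadically via Claim~\ref{claim:cs}. The disjointness and $\beta$-core arguments are likewise the same, and you correctly note the need for a cover-free preprocessing step (which the paper leaves implicit but which is automatic in the application, since an optimal set cover is cover-free).

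Where you diverge is in the charging scheme, and this is where your flagged obstacle (the inherited-case injectivity) is a self-imposed difficulty that the paper's framing sidesteps. You charge each \emph{newly created} vertex at the step it is created and split into a ``direct'' case ($\partial R_i \cap \partial R_j$) and an ``inherited'' case ($\partial R_i$ cutting an arc of a previous pusher $R_k$). The paper instead charges \emph{final} core vertices directly: every vertex of $\tilde{R}_l$ is a perturbation of a unique original arrangement vertex $(v,i,j)$, and the single uniform necessary condition for $(v,i,j)$ to survive on $\tilde{R}_l$ is stated in terms of the permutation alone — roughly, both $R_i$ and $R_j$ must precede every region containing $v$ in its interior, with the additional ordering constraint $\pi_j < \pi_i$ (resp.\ $\pi_i < \pi_j$) when $l=i$ (resp.\ $l=j$). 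This automatically covers your ``inherited'' case (there $R_j$ is just a previous pusher, and the condition $\pi_j<\pi_i<\min_{R_m\ni v}\pi_m$ is exactly what your ``peeling the history'' is trying to reconstruct). It also dissolves the injectivity worry: the quantity $\sum_i |v(\tilde{R}_i)|\,w_i$ is by definition a sum over (arrangement-vertex, core) incidences, so a vertex appearing on several cores, or a core gaining and later losing a vertex, requires no disjointness of charges. So your plan would work, but you should replace the create-time accounting by the paper's survive-time accounting and you will find the compound ``inherited'' event collapses to the same probability expression $\tfrac{w_j}{w_i+w_j+d_v}\cdot\tfrac{w_i}{w_i+d_v}$ without any history peeling.

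One further simplification: the derandomization via conditional expectations is unnecessary. The lemma only asserts \emph{existence} of a suitable core decomposition, and in the algorithm (Theorem~\ref{lem:rajnikant} combined with Lemma~\ref{thm:alg}) one enumerates candidate separator curves rather than constructing the decomposition; the expectation bound $O(W\log W)$ already implies some ordering achieves it.
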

\begin{proof}
Recall that $v(\tilde{R}_i)$ is the sequence of vertices 
defining $\tilde{R}_i$ (which will be composed
of pieces of boundaries of regions in $\R$), and determines $\tilde{R}_i$.
%Set
%the weight of each piece (and vertex) defining $\tilde{R}_i$ to be the weight of $R$,
%and the weight of $\tilde{R}$ to be the sum of the weights of its pieces (equivalently, sum of the weight of its vertices).
%, namely
%$$ weight(\tilde{R}_i) = |v(\tilde{R}_i)| \cdot w_i $$

The algorithm to construct $\tilde{\R}$ is the following.
Construct a permutation, say $\pi$, of $\R$ randomly w.r.t. to the weight distribution
of the disks as follows. Pick a random disk, where $R_i$ is picked with probability 
$w_i/W$. Set this disk to be the first disk in the permutation, and recursively
construct the rest of the permutation on the remaining disks. Let $\pi_i$
denote the position of $R_i$ in this permutation. 
Let $\R_0 = \R$. Apply Lemma~\ref{lemma:pusher} (with $R_{\pi^{-1}(1)} \in \R_0$ as the pusher) 
to get a core decomposition, denoted by $\R_1$,
of $\R_0$. Now apply Lemma~\ref{lemma:pusher} on $\R_1$
(with $\tilde{R}_{\pi^{-1}(2)} \in \R_1$ as the pusher) to get the set
$\R_2$. 
Continuing iteratively by applying
Lemma~\ref{lemma:pusher} with the successive core of each $R_{\pi^{-1}(i)}$ in
$\R_i$ as the pusher,
we get the set $\tilde{\R} = \R_n$. See Figure~\ref{fig:tunnels} for an example.

%Let $\mathcal{E}_{(v,i,j)}$ be the event that $\max \{\pi_i, \pi_j\} 
%< \min_{R_m \ni v} \pi_m$, and the random variable
%$X_{(v,i,j)} = 1$ iff $\mathcal{E}_{(v,i,j)}$ occurs.
We have the following crucial fact:
\begin{claim}
If $v(\tilde{R}_l)$ has a vertex corresponding to $(v,i,j)$, 
 then $(v,i,j) \in R_l$ and furthermore,
\[ \left\{
  \begin{array}{l l}
    \max \{\pi_i, \pi_j \} < \min_{R_m \ni v} \pi_m & \quad \text{if } l \neq i,j \\
    \pi_j < \pi_i & \quad \text{if } l = i \\
    \pi_i < \pi_j & \quad \text{if } l=j
  \end{array} \right.\]
\end{claim}
\begin{proof}
First consider the case when $l \neq i,j$. 
The proof relies on the observation that by the proof of Lemma~\ref{lemma:pusher},
if $v$ is a vertex in the arrangement of $\R$, and at any point a region
containing $v$ in its interior is used as a pusher, then
any  $\tilde{R} \in \tilde{\R}$ cannot have a vertex that corresponds
to $v$. Thus the only way a vertex corresponding to $(v,i,j)$
can be part of the boundary of any $\tilde{R}$  is if both
the regions $R_i$ and $R_j$ occur earlier in $\pi$ than
any of the regions containing $v$.
For the case when $l = i$ (same for when $l=j$), the proof follows from the fact that 
if at any point there are two regions $X$ and $Y$ and $X$ is used as a pusher
before $Y$, then the core of $X$ cannot have a vertex that corresponds to 
a vertex defined by $X$ and $Y$ (see Figure~\ref{fig:tunnels}).
%vertices to the core of $X$.
%
%Then if some $R_m$ such that
%$(v,i,j) \in R_m$ pushed before $R_i$ or $R_j$, then by the proof of Lemma~\ref{lemma:pusher},
%observe that all vertices inside $R_m$ will not appear as corresponding vertices
%to any other $\tilde{R}$, a contradiction. 
\end{proof}

Then
$$ \sum_{i}
|v(\tilde{R}_i)| \cdot w_i = 
\sum_{(v,i,j)} \sum_{l \text{ s.t. } v \in v(\tilde{R}_l)} w_l
= \sum_{(v,i,j)} \sum_{l \text{ s.t. } v \in R_l} X_{(v,i,j,l)} \cdot w_l $$ 
where the indicator variable $X_{(v,i,j,l)} =1 $ iff $v \in v(\tilde{R}_l)$ (more precisely,
the vertex corresponding to $v$ is in $v(\tilde{R}_l)$).
Using the above Claim and Claim~\ref{claim:cs}, we calculate the expected value of the required bound:

\begin{eqnarray*}
\sum_{i} E \big[ |v(\tilde{R}_i)|  \big] \cdot w_i & = & \sum_{(v,i,j)} \sum_{l \text{ s.t. } v \in R_l} E \big[ X_{(v,i,j,l)}  \big]\cdot w_l \\
& = & \sum_{(v,i,j)} \sum_{l \text{ s.t. } v \in R_l} \Pr \big[ v \in \tilde{R}_l  \big]\cdot w_l \\
& = & \sum_{(v,i,j)} \Bigl( \Pr[ v \in \tilde{R}_i]w_i + \Pr[ v \in \tilde{R}_j]w_j
+ \sum_{\substack{i,j \neq l \\ v \in R_l}}  \Pr[ v \in \tilde{R}_l]w_l \Bigr) \\
& = & \sum_{(v,i,j)} \Bigl( \frac{w_j}{w_i+w_j+d_v}\frac{w_i}{w_i+d_v}w_i  
+ \frac{w_i}{w_i+w_j+d_v}\frac{w_j}{w_j+d_v}w_j \\
& & 
+ \sum_{\substack{i,j \neq l \\ v \in R_l}}  (\frac{w_j}{w_i+w_j+d_v}\frac{w_i}{w_i+d_v}  + \frac{w_i}{w_i+w_j+d_v}\frac{w_j}{w_j+d_v})   w_l  \Bigr) \\
&=& \sum_{(v,i,j)} \Bigl(  
( \frac{w_j}{w_i+w_j+d_v}\frac{w_i}{w_i+d_v} ) \cdot (w_i + \sum_{R_l \ni v} w_l) \\
&& + (\frac{w_i}{w_i+w_j+d_v}\frac{w_j}{w_j+d_v}) \cdot (w_j + \sum_{R_l \ni v} w_l) \Bigr) \\
& = &  2 \sum_{(v,i,j)}  \frac{w_i \cdot w_j}{w_i+w_j+d_v}
 = 2 \sum_k \sum_{\substack{(v,i,j) \\ 2^k \leq d_v < 2^{k+1}}}  \frac{w_i \cdot w_j}{w_i+w_j+d_v} = O(W \log W).
\end{eqnarray*}
where the last inequality follows from Claim~\ref{claim:cs}.
\end{proof}

We can now finish the proof of Theorem~\ref{lem:rajnikant}.

%\begin{theorem}
%\label{lem:rajnikant}
%Let $\mathcal{R}$ be a set of $n$ weighted $\alpha$-simple pseudodisks with each pseudodisk having a minimum weight $1$ and no pseudodisk having a weight more than a third of the total weight. Let $P$ be a set of points in the plane, with no point lying on the boundary of any of the pseudodisks. Then for $\delta$, there exists a curve $\C$ such that 
%\begin{itemize}
%\item $w(\textsc{Opt}(\mathcal{R}, P_{in}(\C))) \leq (\frac{2}{3}+3\delta) w(\textsc{Opt}(\mathcal{R}, P))$
%\item $w(\textsc{Opt}(\mathcal{R}, P_{ext}(\C))) \leq (\frac{2}{3}+3\delta)w(\textsc{Opt}(\mathcal{R}, P))$
%\item $w(\textsc{Opt}(\mathcal{R}, P_{in}(\C))) + w(\textsc{Opt}(\mathcal{R}, P_{ext}(\C)))  \leq (1+2\delta)w(\textsc{Opt}(\mathcal{R}, P))$
%\item The complexity of $\C$ is $O(\frac{\alpha}{\delta^2} \log w(\textsc{Opt}(\R,P))$.
%\end{itemize}
%where $w(\mathcal{S})$ denotes the total weight of the regions in $\mathcal{S}$.
%\end{theorem}
\begin{proof}[Proof of Theorem~\ref{lem:rajnikant}]
%%%%%%\textbf{The proof is very similar to the proof of Lemma~
%%%%%%\ref{lem:dayofthejackal}.}
We use
Lemma~\ref{lemma:disjointcoredecomposition} to obtain a core $\tilde{Q}$ for each $Q 
\in \Q = \textsc{Opt}(\R,P)$ and we assign to $\tilde{Q}$ the same weight as $Q$. 
Since the regions in $\Q$ cover $P$ their cores also cover $P$.~\footnote{Since no point lies on the boundary of any of the regions, there is a suitable choice of $\beta$ so that using $\beta$-core decompositions, we do not miss any of the points.}
As before, we denote the number of vertices in $\tilde{Q}$ by $|v(\tilde{Q})|$. 
By Lemma~\ref{lemma:disjointcoredecomposition}, $\sum_{Q \in \Q} |v(\tilde{Q})| w(Q) = O(w(\Q)
 \log{w(\Q)}))$. 
We set $\tau = C\cdot\frac{1}{\delta}\log{w(\Q)}$ 
for some large enough constant $C$.
Then by averaging,
$\sum_{Q \in \Q: |v(\tilde{Q})| >\tau } w(Q) < \delta w(\Q)$.
 
  Let $\Q_s = \{ Q \in \Q: |v(\tilde{Q})| \leq \tau \}$ and let $\tilde{\Q}_s = \{ \tilde{Q} : Q \in \Q_s \}$.
  The regions in $\tilde{\Q}_s$ are $\alpha\tau$-simple since they have at most $\tau$ sides and 
each of the sides is a portion of the boundary of a single $\alpha$-simple region in $\R$.
These regions have a total weight of $w(\Q_s)$. Thus applying Theorem~\ref{thm:weightedseparator}, we get separator $\C$ so that the total weight of the regions of $\Q_s$ whose cores lie in 
$interior(\C)$ ($exterior(\C)$) is at most $\frac{2}{3}w(\Q_s)$. Since the total weight of the regions in
  $\Q \setminus \Q_s$ is at most $\delta w(\Q)$, the total weight of all the cores that lie in
   $interior(\C)$ ($exterior(\C)$) is at most $(\frac{2}{3}+\delta)w(\Q)$. Also, the
    total weight of the cores in $\Q_s$ that intersect $\C$ is at most $\delta w(\Q_s)$. Thus the
     total weight of all the cores in $\Q$ that intersect $\C$ is at most $2\delta w(\Q)$. The complexity of $\C$ is $O(\alpha\tau/\delta) = O(\frac{\alpha}{\delta^2} \log{w(\Q)} )$, satisfying the fourth item in the statement of the theorem.

Let $\mathcal{Q}_1$ ($\mathcal{Q}_2$) be the set of regions whose cores are in $interior(\C)$ ($exterior(\C)$).
Let $\mathcal{Q}_3 = \mathcal{Q} \setminus \{\mathcal{Q}_1 \cup \mathcal{Q}_2\}$. 
Observe that the cores of the regions in $\mathcal{Q}_{1} \cup \mathcal{Q}_{3}$ cover all the points in $P_{in}(\C)$ and therefore the regions in $\mathcal{Q}_{1} \cup \mathcal{Q}_{3}$ themselves cover the points in $P_{in}(\C)$. Similarly the regions in $Q_{2} \cup \mathcal{Q}_{3} $ cover the points in $P_{ext}(\C)$.
 Therefore, 
$$w(\textsc{Opt}(\mathcal{R}, P_{in})) \leq w(\mathcal{Q}_{1} \cup \mathcal{Q}_{3}) = w(\mathcal{Q}_{1}) + w(\mathcal{Q}_{3}) \leq (\frac{2}{3}+3\delta)w(\mathcal{Q})$$
This proves the first item in the statement of the theorem. The second item is proved analogously. For the third item, we combine the inequalities $w(\textsc{Opt}(\mathcal{R}, P_{in})) \leq w(\mathcal{Q}_{1}) + w(\mathcal{Q}_{3})$ and $w(\textsc{Opt}(\mathcal{R},P_{ext})) \leq w(\mathcal{Q}_{2}) + w(\mathcal{Q}_{3})$. We get 
$$w(\textsc{Opt}(\mathcal{R}, P_{in})) + w(\textsc{Opt}(\mathcal{R}, P_{ext})) \leq w(\mathcal{Q}_{1}) + w(\mathcal{Q}_{2}) + 2w(\mathcal{Q}_{3}) \leq w(\mathcal{Q}) + w(\mathcal{Q}_{3}) \leq (1+2\delta)w(\mathcal{Q})$$ 
That proves the third item.

\end{proof}

% As in Section~\ref{sec:mis}, we set $\delta =c \epsilon/\log{(n/\epsilon)}$ with a suitable constant $c$ and $\delta < 1/9$ 
%to ensures that the size of the optimal solution goes down by a constant factor in both the subproblems created by the separator curve given by Lemma~\ref{lem:rajnikant}.
% The complexity of the separator is $O(\frac{\alpha}{\delta^2}\log{\frac{n}{\epsilon}})$.
%Thus there are $(\alpha n)^{O(\frac{\alpha}{\delta^2}\log{\frac{n}{\epsilon}})} = (\alpha n)^{O(\frac{\alpha}{\epsilon^2}\log^3{\frac{n}{\epsilon}})}$ candidate separators for each subproblem. By an algorithm similar to that in Section~\ref{sec:mis}, we conclude  Theorem~\ref{thm:setcover}.\\

{\bf Remark:} The above QPTAS can be extended to work for more general regions called {\em non-piercing regions} or {$r$-admissible regions}. For this only Lemma~\ref{lemma:pusher} needs to be extended to work for these regions. In this case, a region $R$ may intersect the boundary of pusher $X$ in more than one interval. To ensure that after pushing the new regions are still non-piercing, different gaps are required in different intervals for the same region $R$. This makes it technically more complicated. The details are given in Appendix~\ref{appendix:r-admissible}.

\section{QPTAS for Weighted Halfspaces in $\Re^3$}
Let $\H = \{H_1, \cdots, H_n\}$ be a set of halfspaces in $\Re^3$ where the halfspace $H_i$ has weight $w_i \geq 0$ and $W$ total weight. Let  $P$ be a set of points in $\Re^3$. Given $\H$ and $P$, we show that the problem of computing a subset of $\H$ of minimum weight whose union covers $P$ is QPT-partitionable, and then
Lemma~\ref{thm:alg} implies the QPTAS.

Consider the optimal solution $\opt = \opt(\H, P)$ for the problem, and let 
$W$ be the total weight of the halfspaces in $\opt$. For any halfspace $H$, define $\overline{H}$ to be other halfspace defined by its boundary $\partial H$ i.e., $\overline{H} = closure (\Re^3 \setminus H)$. For any set of halfspaces $\S$,
define  $\overline{\S} = \{\overline{H}: H \in \S \}$. 
\begin{lemma}
If $\bigcup_{H \in \opt} H = \Re^3$, then one can compute $\opt(\H, P)$ in polynomial time.
\end{lemma}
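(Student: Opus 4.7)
The plan is to exploit the hypothesis via Helly's theorem in $\Re^3$ applied to the complementary halfspaces. Note first that $\bigcup_{H \in \opt} H = \Re^3$ is equivalent to $\bigcap_{H \in \opt} \overline{H} = \emptyset$. Since each $\overline{H}$ is a convex set in $\Re^3$, Helly's theorem (in its contrapositive form: if a finite family of convex sets in $\Re^d$ has empty intersection, some $d+1$ of them do) guarantees the existence of four halfspaces $H_1, H_2, H_3, H_4 \in \opt$ with $\bigcap_{i=1}^{4} \overline{H_i} = \emptyset$, i.e.\ $\bigcup_{i=1}^{4} H_i = \Re^3$.

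In particular, the 4-element subset $\S = \{H_1,H_2,H_3,H_4\}$ covers $P$. Since $\S \subseteq \opt$ and halfspace weights are nonnegative, $w(\S) \leq w(\opt)$. But $\opt$ is a minimum-weight cover of $P$ and $\S$ is itself a feasible cover of $P$, so $w(\S) \geq w(\opt)$. Hence $w(\S) = w(\opt)$ and $\S$ is itself an optimal cover. Thus the hypothesis forces the existence of an optimal cover consisting of at most four halfspaces.

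This immediately suggests the algorithm: enumerate all $\binom{n}{4} = O(n^4)$ four-element subsets $\mathcal{T} \subseteq \H$; for each, test whether $\bigcup_{H \in \mathcal{T}} H = \Re^3$ (equivalently, whether $\bigcap_{H \in \mathcal{T}}\overline{H} = \emptyset$), which is a constant-size linear-programming feasibility question solvable in $O(1)$ time; among all $\mathcal{T}$ passing the test, return one of minimum total weight. (Smaller tuples of size $1,2,3$ are handled similarly or subsumed by padding.)

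To verify correctness, let $\mathcal{T}^*$ be the minimum-weight tuple returned. Because $\mathcal{T}^*$ covers $\Re^3 \supseteq P$, it is a feasible cover of $P$, giving $w(\mathcal{T}^*) \geq w(\opt)$. Conversely, the four halfspaces $\S$ produced by Helly lie in $\H$, form a valid enumerated candidate, and satisfy $w(\S) = w(\opt)$, so $w(\mathcal{T}^*) \leq w(\S) = w(\opt)$. Hence $\mathcal{T}^*$ is optimal and is found in $O(n^4)$ time. There is no real obstacle here beyond correctly identifying Helly's theorem as the right tool; the only minor care needed is to note that ``covers $\Re^3$'' for a constant-size family of halfspaces can be decided in constant time, and that $\opt$ is minimum-weight so any feasible subset of it of equal weight is itself optimal.
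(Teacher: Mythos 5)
Your proof is correct and takes essentially the same route as the paper: both reduce to Helly's theorem applied to the complementary halfspaces to obtain a covering subfamily of at most four halfspaces, then enumerate $O(n^4)$ tuples. Your version is slightly tidier in that it avoids asserting $|\opt|\leq 4$ (which would need strictly positive weights) and instead argues directly that some $4$-tuple matches $w(\opt)$; the enumeration over tuples covering $\Re^3$ rather than over tuples covering $P$ is an immaterial variant.
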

\begin{proof}
If $\bigcup_{H \in \opt} H = \Re^3$, then by definition $\bigcap_{H \in \opt} \overline{H} = \emptyset$.
By Helly's theorem~\cite{M02} applied to the set of convex
regions in $\overline{\opt}$, it follows that then there must
be a subset $\opt' \subset \opt$ of at most $4$ halfspaces
such that $\bigcap_{H \in \opt'} \overline{H} = \emptyset$. In other words,
$\opt'$ covers $\Re^3$. As $\opt$ was a minimal-weight set cover,
it follows that $|\opt| \leq 4$. By enumerating all $4$-tuples of halfspaces in $\H$,
one can compute the optimal set-cover in polynomial time.
\end{proof}
%If the halfspaces in $\bar{\opt}$ do not have a common intersection then by the contrapositive of Helly's theorem, four of them do not have a common intersection. This implies that the four halfspaces corresponding to these in $\opt$ cover $\Re^3$, meaning that the cardinality of $\opt$ is at most $4$. In this case we can compute $\opt$ in polynomial time by trying all sets of at most four halfspaces in $\H$. 
From now on we assume that there is a point $o$ that does not lie in any of the halfspaces in $\opt$ (say the origin).
We will also assume without loss of generality that the intersection of halfspaces in $\overline{\opt}$  is a bounded polytope. This can be easily done by adding to the input four halfspaces with weight $0$ which do not contain any of the points in $P$ whose complements intersect in a bounded simplex. These four dummy halfspaces can then be included in any optimal solution without affecting the weight of the solution. 
Note also that each halfspace $H \in \opt$ must be part of some facet (in
fact, a unique facet)
of this polytope; otherwise $H$ is contained in the union of $\opt \setminus \{H\}$,
contradicting the set-cover minimiality of $\opt$.

We now define a core decomposition for the halfspaces in $\opt$ that allows a cheap balanced polyhedral separator. 

Consider the set system in which the base set are the halfspaces in $\opt$ and subsets are defined by taking any segment $ox$ with one end-point at $o$ and taking the set of halfspaces whose boundaries intersect the segment. More formally, for any $x \in \Re^3$, let $R_x = \{H \in \opt : \partial H \cap ox \neq \emptyset \} $. We now define $\R$ as the set $\{R_x : x\in \Re^3 \}$. Consider the weighted set 
system $(\H, \R)$, where the weight of $w(R)$ of any $R \in \R$ is the sum of the weights of the halfspaces in $R$. 

\begin{lemma}
The VC-dimension of $(\H, \R)$ is at most $3$.
\end{lemma}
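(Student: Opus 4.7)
The plan is to exploit the hypothesis that no halfspace in $\opt$ contains the origin $o$, which collapses the segment condition defining $R_x$ into ordinary halfspace membership, and then invoke the standard cell count for hyperplane arrangements in $\Re^3$.

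First I would establish the reduction: for every $x \in \Re^3$ and every $H \in \opt$, one has $H \in R_x$ iff $x \in H$. This is a one-line convexity check --- the open complement of $H$ is a halfspace, hence convex, and by assumption it contains $o$; so the segment $ox$ lies entirely outside $H$ when $x \notin H$, and it must cross $\partial H$ when $x \in H$. Consequently, for any candidate shattered set $\{H_1,\ldots,H_k\}\subseteq \opt$, the trace $R_x \cap \{H_1,\ldots,H_k\}$ is exactly $\{H_i : x \in H_i\}$. Note also that any shattered set must lie inside $\opt$, since every range is a subset of $\opt$.

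Next I would argue by contradiction. Suppose some $\{H_1,H_2,H_3,H_4\}\subseteq \opt$ is shattered. By the reduction, each of the $2^4=16$ subsets $T$ would have to arise as $\{H_i : x_T \in H_i\}$ for some point $x_T \in \Re^3$; equivalently, the sign vector $(\mathbf{1}[x\in H_1],\ldots,\mathbf{1}[x\in H_4])$ must take all $16$ possible values as $x$ ranges over $\Re^3$.

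The main obstacle then dissolves via a classical fact: this sign vector is constant on each full-dimensional cell of the arrangement of the four hyperplanes $\partial H_1,\ldots,\partial H_4$ in $\Re^3$, and the number of cells in an arrangement of $n$ hyperplanes in $\Re^d$ is at most $\sum_{i=0}^{d}\binom{n}{i}$. For $n=4$ and $d=3$ this evaluates to $1+4+6+4=15 < 16$, producing the required contradiction. So the VC-dimension is at most $3$. The only subtlety is the initial reduction from segment-crossing to halfspace-containment, which is why the assumption $o\notin H$ for every $H\in \opt$ is crucial; once that is in hand the bound is pure arrangement combinatorics.
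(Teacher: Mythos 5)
Your proof is correct and takes essentially the same route as the paper's: observe that $R_x$ is determined by the cell of the hyperplane arrangement containing $x$, then invoke the $\sum_{i\le 3}\binom{4}{i}=15<16$ cell count for four hyperplanes in $\Re^3$. The only difference is cosmetic: you spell out the reduction in the sharper form ``$H\in R_x$ iff $x\in H$'' using the standing assumption $o\notin H$ for all $H\in\opt$, whereas the paper states directly (and slightly more generally, without needing that assumption) that $R_x=R_y$ whenever $x,y$ lie in the same cell; both then finish with the identical Harding-type count.
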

\begin{proof}
For two distinct points $x$ and $y$ that lie in the same cell of the arrangement of the halfspaces in $\opt$, $R_x = R_y$. 
So for any subset of $\opt$ of size $k$, the number of induced subsets is at most the number of cells in an arrangement of these $k$ halfspaces, which is at most
 $\binom{k}{3}+\binom{k}{2}+\binom{k}{1}+\binom{k}{0}$. For $k = 4$, this number is less than $2^k$, implying that no subset of size $4$ is shattered. Thus the VC-dimension this set system is at most $3$. 
\end{proof}
Thus, by the $\epsilon$-net theorem~\cite{HW87}, there is an $\epsilon$-net for this set system of size $O(\frac{1}{\epsilon}\log{\frac{1}{\epsilon}})$. 
Let  $N$ be an $\epsilon$-net for this set system for a value of $\epsilon$ to be
fixed later. As before, we will assume that the intersection of halfspaces in $\overline{N}$ is a bounded polytope $\P$. This can be ensured by including in $N$ the dummy halfspaces.

For any set $S \subset \Re^3$, define $cone(S)$ to be the set 
$\{\lambda x: x \in S, \lambda \geq 0 \}$. For any halfspace $H \in N$, we define the core of $H$ to be $\tilde{H} = cone(f) \cap H$ where $f$ is the facet of $\P$ corresponding to $H$ i.e., the facet contained in $\partial H$. Note that each halfspace in $N$ (with the exception of dummy halfspaces) has a unique facet of $\P$ corresponding to it. 
 For any halfspace $H \in \opt \setminus N$, we defined the core as $\tilde{H} = \P\cap H$. The core of each halfspace is clearly contained in the halfspace and the union of these cores is clearly the same as the union of the halfspaces in $\opt$. 

We now assign a weight to each of the facets of $\P$ by distributing the weights of the halfspaces in $\opt$ to the facets so that the total weight of the faces is the same as the total weight of the halfspaces. The weight of each halfspace in $N$ is assigned to facet corresponding it. For a halfspace in $H \in \opt\setminus N$, we distribute its weight equally among all faces $f$ s.t. $cone(f)$ intersects the core $\tilde{H}$ of $H$. 

The $1$-skeleton of $\P$ is a planar graph $G$ and we have assigned weights to its faces. Let $n'$ denote the number of vertices in this graph; note that
$n' = O(|N|) = O(\frac{1}{\epsilon}\log{\frac{1}{\epsilon}})$. 
By \cite{M86}, there exists in this graph a cycle separator $\C$ of size $O(\sqrt{n'})$  so that the total weight of the faces in the interior (exterior) of $\C$ is at most two thirds of the total weight. We show that the polytope $\hat{\C} = cone(\C)$ is the desired cheap balanced separator for the cores we have defined. $\hat{\C}$ splits $\Re^3$ into two connected pieces whose closures we call the interior and the exterior of $\hat{\C}$. The choice is arbitrary. Note that $interior(\hat{\C}) \cap exterior(\hat{\C}) = \hat{C}$.

First note that for each core that lies in the interior (exterior), the weight of the corresponding halfspaces is distributed only among the faces of $\P$ lying in the interior (exterior) of $\hat{\C}$. Hence the total weight of all cores that lie in the interior (exterior) of $\hat{C}$ is at most two thirds of the total weight of all halfspaces. 

We now need to bound the total weight of the cores that cross $\hat{\C}$. None of the cores of the halfspaces in $N$ cross $\hat{C}$. Consider a halfspace $H \in \opt \setminus N$. Its core is defined as $H \cap \P$. If this core intersects $\hat{\C}$, then $H \cap \partial \P$ intersects $\C$. It follows that $H$ intersects an edge of $\C$ and thus must contain a vertex $v$ of $\C$. In other words, $\partial H$ intersects the segment $ov$. However, since $N$ is an $\epsilon$-net, for any vertex $v$, the total weight of halfspaces whose boundaries intersects $ov$ is at most $\epsilon W$. Since $\C$ has $O(\sqrt{n'})$ vertices, the total weight of all cores crossing $\hat{C}$ is $O(\sqrt{n'}\cdot \epsilon W)$. We set $\epsilon =A \delta^2 / \log {\delta^{-2}}$ for a suitable constant $A$ so that $\C$ has $O(\frac{1}{\delta}\log{\frac{1}{\delta}})$ vertices and the total weight of cores intersecting is at most $\delta W$. 

Finally, observe that the complexity of $\hat{\C}$ is determined by
the complexity of $\C$, and the point $o$. The vertices of
$\C$ are determined by intersections of $3$ halfspaces of $\H$, and so there
are $O(n^3)$ choices for each vertex of $\C$. To guess the point $o$,
it suffices to guess the cell of the arrangement of $\H$ in which it lies (there
are $O(n^3)$ such choices),
and pick any point in that cell. 

{\bf Remark:} It may appear that the set cover problem for halfspaces may 
be reduced to the problem for pseudodisks using techniques used in~\cite{MSW90}.
Unfortunately, that does not work because (i) we are in the weighted setting and (ii) because we cannot tolerate the loss of a constant factor when looking for a
$(1+\epsilon)$-approximation algorithm. It is also tempting to think that the technique used for halfspaces may be used for pseudodisks in the plane. That would mean taking an $\epsilon$-net $N$ for a suitable range space and then defining the core for each pseudodisks $R\notin N$ by removing from $R$ the portion of it covered by the union of pseudodisks in $N$. However, the problem in doing this is that the resulting cores may not be connected. This causes problems because if the cores are not connected then the cores not intersecting a separator curve $\C$ may still cover points in both $interior(\C)$ and $exterior(\C)$. 

\section{Conclusion}
In this paper we demonstrated the versatility of separator-based
algorithmic design on a problem seemingly unrelated
to the packing problems for which the separator had previously
been successfully applied. Getting a polynomial-time approximation 
scheme for the set-cover problem for weighted pseudodisks 
in the plane and weighted halfspaces in $\Re^3$
remains a very interesting open problem.

\newpage

\bibliographystyle{plain}
\bibliography{separators}

\newpage
\appendix

\section{QPTAS for Uniform Pseudodisks}
\label{appendix:uniformpseudodisks}

For the uniform pseudodisk case, there is an easier proof, which
we present now.

\begin{lemma}
\label{lemma:coredecomposition}
Given a set of pseudodisks $\R = \{R_1, \ldots, R_n\}$
and a parameter $\eta > 0$, there exists
a core decomposition of $\R$, say the set $\tilde{\R} = \{\tilde{R}_1,
\ldots, \tilde{R}_n\}$, such that 
\begin{enumerate}
%\item the maximum depth of the arrangement induced by $\tilde{\R}$ is $O(\delta^2 n / \log^2 n)$, and
\item the number of intersections in the arrangement
induced by $\tilde{\R}$ is $O(1/\eta^2 + \eta n^2)$, and
\item the number of vertices of each pseudodisk in $\tilde{\R}$ is $O(1/\eta^2)$.
\end{enumerate}
\end{lemma}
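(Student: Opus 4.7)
The plan is to mirror the approach of Lemma~\ref{lemma:disjointcoredecomposition}, but since all regions have unit weight I would replace the weighted random permutation by uniform random sampling. First I would set $r = \lceil 1/\eta \rceil$ and draw a uniform random sample $S \subseteq \R$ of size $r$, and then apply Lemma~\ref{lemma:pusher} sequentially with each element of $S$ used once as a pusher (in arbitrary order). This produces a core decomposition $\tilde{\R}$ in which, by Lemma~\ref{lemma:pusher}, the cores $\{\tilde{R}_i : R_i \in S\}$ are pairwise disjoint, and moreover each $\tilde{R}_j$ with $j \notin S$ is disjoint from every $\tilde{R}_i$ with $i \in S$.

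To bound the per-region vertex count (item~(2)), observe that the boundary of each core $\tilde{R}_i$ is built from arcs of $\partial R_i$ together with, for every pusher $R' \in S$ that intersects $R_i$, at most two new vertices at the entry/exit points of $\partial R' \cap \partial R_i$. Since $|S| = r$ and any two pseudodisks meet in at most two points, this immediately gives $|v(\tilde{R}_i)| \leq 2r + O(1) = O(1/\eta)$, which is within the claimed $O(1/\eta^2)$.

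For the total arrangement complexity (item~(1)), all crossings come from pairs $(\tilde{R}_i, \tilde{R}_j)$ with $i, j \notin S$, so I would split them into two classes. The \emph{old} class contains original vertices $v \in \partial R_i \cap \partial R_j$ that survive, i.e.\ lie outside $\bigcup_{R' \in S} R'$; using the shallow-levels bound for pseudodisks (the number of arrangement vertices of depth at most $k$ is $O(nk)$, a consequence of linear union complexity), a Clarkson--Shor calculation
\[
\mathbb{E}\bigl[|\{v : v \text{ survives}\}|\bigr] \;=\; \sum_v \frac{\binom{n - d_v - 2}{r}}{\binom{n}{r}} \;=\; O(n^2/r) \;=\; O(\eta n^2)
\]
bounds its expected size. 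The \emph{new} class contains crossings where at least one of the two arcs lies on a pushed boundary $\partial R'$ with $R' \in S$; I would argue that every such crossing is arbitrarily close to an actual intersection point of two sample boundaries $\partial R' \cap \partial R''$, of which there are at most $\binom{r}{2}\cdot O(1) = O(1/\eta^2)$. Summing the two classes and applying Markov's inequality then yields the existence of a sample realising $O(1/\eta^2 + \eta n^2)$ arrangement intersections.

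The main obstacle will be a careful accounting of the new class. The small gap-perturbations introduced by Lemma~\ref{lemma:pusher} cause different non-sample cores to push the same pusher $R'$ by slightly different offsets, so a naive count of crossings between pushed arcs of $\partial R'$ and $\partial R''$ lying on distinct cores could blow up. I would handle this by invoking the ranking construction from the proof of Lemma~\ref{lemma:pusher} (which forces the offsets along $\partial R'$ to be nested rather than interleaved), together with the fact that the family of cores remains a pseudodisk family; these together imply that any cluster of crossings near a common point $v \in \partial R' \cap \partial R''$ contributes $O(1)$ in total, yielding the desired $O(r^2)$ bound on the new class and completing the proof.
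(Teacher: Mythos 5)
Your overall strategy — a sample of size $O(1/\eta)$ whose members are used one by one as pushers via Lemma~\ref{lemma:pusher}, followed by a Clarkson--Shor computation — is the same as the paper's. Two differences are worth flagging, one cosmetic and one substantive. Cosmetically, the paper uses a deterministic $\eta$-net $Q$ of size $O(1/\eta)$ (from Clarkson--Varadarajan) and, after the pushing is done, \emph{replaces the pushed cores of the $Q$-elements by the original pseudodisks}; you use a uniform random sample of size $r=\lceil 1/\eta\rceil$ and keep all pushed cores. Note that a uniform random sample of size $1/\eta$ is not, by itself, an $\eta$-net for pseudodisks (that takes $O(\tfrac{1}{\eta}\log\tfrac{1}{\eta})$, or the Clarkson--Varadarajan construction to remove the log), so you cannot invoke the ``every point outside $\bigcup S$ has depth $\le \eta n$'' guarantee — you correctly route around this for the old vertices via the expectation calculation, but it costs you elsewhere (below). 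Your item~(2) bound of $2r = O(1/\eta)$ per core is fine and in fact tighter than the paper's $O(1/\eta^2)$.

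The substantive gap is in your accounting of the ``new'' class. You assert that every crossing involving a pushed arc is arbitrarily close to an intersection point of \emph{two sample} boundaries $\partial R' \cap \partial R''$. That is false. Consider a pushed arc of $\partial\tilde R_j$ hugging $\partial Q_k$ ($Q_k\in S$, $j\notin S$) crossing an \emph{original} arc of $\partial\tilde R_i$ ($i\notin S$); the crossing sits arbitrarily close to a point of $\partial Q_k \cap \partial R_i$, which is a sample--nonsample intersection. There are $\Theta(n\cdot|S|) = \Theta(n/\eta)$ such candidate locations, not $O(1/\eta^2)$, and each can host a crossing, so the nesting/$O(1)$-cluster argument you sketch does not by itself bring the new class down to $O(1/\eta^2)$. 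The paper's proof sidesteps this case entirely: after restoring the net elements to their original pseudodisks, the non-net cores form a pseudodisk family (by Lemma~\ref{lemma:pusher}) every point of which has depth at most $\eta n$ (by the $\eta$-net property), and the Clarkson--Shor $\le k$-level bound for linear-union-complexity families, applied directly \emph{to the cores themselves}, bounds the whole arrangement of non-net cores by $O(n\cdot \eta n)$ — old and new crossings together, without a case split — while net$\times$net contributes $O(|Q|^2)=O(1/\eta^2)$. To repair your version you would either need to prove a high-probability depth guarantee for the random sample (effectively re-proving the $\eta$-net bound) or apply the level bound to the core family and bound its maximum depth in expectation; the old/new decomposition as written does not close.
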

\begin{proof}
Construct an $\eta$-net, say the set $Q = \{Q_0, \ldots,
Q_t\}$, for $\R$. By
the result of Clarkson-Varadarajan~\cite{CV05}, $Q$ 
has size $O(1/\eta)$.

Let $\R_0 = \R$. Apply Lemma~\ref{lemma:pusher} (with $Q_0 \in \R_0$ as the pusher) 
to get a core decomposition, denoted by $\R_1$,
of $\R_0$. Now apply Lemma~\ref{lemma:pusher} on $\R_1$
(with $\tilde{Q}_1 \in \R_1$ as the pusher) to get the set
$\R_2$. 
Continuing iteratively by applying
Lemma~\ref{lemma:pusher} with the successive core of each $Q_i$ in
$\R_i$ as the pusher,
we get the set $\R_t$. Replace all 
the cores of pseudodisks $R \in Q$ in $\R_t$ by $R$ to get the set $\tilde{\R}$.

Observe that each $\tilde{R} \in \tilde{\R}$ 
is disjoint from each object in $Q$ (it became
disjoint from $Q_i$ latest at the $i$-th iteration). 
As $Q$ was an $\eta$-net, any point
in the plane not covered by the union of $Q$ has
depth at most $\eta n$.
Recall that by the Clarkson-Shor technique, as pseudodisks have linear union complexity,
the maximum number of vertices at depth at most $k$ is $O(nk)$.
Therefore the total number of intersections in $\tilde{\R}$
is 
$$O(|Q|^2 + |\R_t| \cdot \eta n) = O(1/\eta^2 + n \cdot \eta n)$$

This proves condition $1$.

For condition $2.$, from the proof of Lemma~\ref{lemma:pusher},
each boundary vertex of any core object corresponds to a vertex of the arrangement
induced by the objects in $Q$. As every pair of pseudodisks
can intersect at most twice, there are $O(1/\eta^2)$ vertices
in the arrangement of $Q$.
\end{proof}

We now give a proof of the existence of the separator for uniformly weighted pseudodisks.

\begin{lemma}\label{lem:dayofthejackal}
Given a set of $\mathcal{R}$ of $n$ uniformly weighted $\alpha$-simple pseudodisks (each with weight $1$) and set $P$ of points in the plane, no point lying on the boundary of any of the regions, and any parameter $\delta$, there exists a curve $\C$ such that 
\begin{itemize}
\item $w(\textsc{Opt}(\mathcal{R}, P_{in}(\C))) \leq (\frac{2}{3}+\delta) w(\textsc{Opt}(\mathcal{R}, P))$
\item $w(\textsc{Opt}(\mathcal{R}, P_{ext}(\C))) \leq (\frac{2}{3}+\delta)w(\textsc{Opt}(\mathcal{R}, P))$
\item $w(\textsc{Opt}(\mathcal{R}, P_{in}(\C))) + w(\textsc{Opt}(\mathcal{R}, P_{ext}(\C)))  \leq (1+\delta)w(\textsc{Opt}(\mathcal{R}, P))$
\item the complexity of $\C$ is  $O(\log^5 n/\delta^5)$ 
\end{itemize}
where $w(\mathcal{S})$ denotes the total weight of the regions in $\mathcal{S}$.
\end{lemma}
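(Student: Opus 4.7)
The plan is to adapt the argument of Theorem~\ref{lem:rajnikant} to the uniform setting, replacing the weighted disjoint core decomposition (Lemma~\ref{lemma:disjointcoredecomposition}) by the simpler Lemma~\ref{lemma:coredecomposition}, and the disjoint weighted separator (Theorem~\ref{thm:weightedseparator}) by the non-disjoint separator of Theorem~\ref{thm:separator}. Write $\Q=\textsc{Opt}(\R,P)$ and $n^{\ast}=|\Q|=w(\Q)$. Because no point of $P$ lies on the boundary of any region in $\R$, we may fix $\beta<\min_{p\in P,\,R\in\R}\mathrm{dist}(p,\partial R)$, which guarantees that any $\beta$-core decomposition of $\Q$ still covers every point of $P$.

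First, I would apply Lemma~\ref{lemma:coredecomposition} to $\Q$ with parameter $\eta=c\delta^{2}$ for a sufficiently large constant $c$. This produces cores $\tilde{\Q}=\{\tilde{Q}:Q\in\Q\}$ such that each $\tilde{Q}$ has $O(1/\eta^{2})=O(1/\delta^{4})$ vertices and is therefore $\alpha'$-simple with $\alpha'=O(1/\delta^{4})$ (each boundary piece of a core is a sub-arc of some $x$-monotone arc in $\Gamma(\R)$, and sub-arcs of $x$-monotone arcs remain $x$-monotone), and the arrangement of $\tilde{\Q}$ contains $m=O(1/\delta^{4}+\delta^{2}(n^{\ast})^{2})$ pairwise intersections. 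The degenerate regime $n^{\ast}=O(1/\delta^{3})$ can be handled by a trivial bounding curve (or by enumeration), so assume $n^{\ast}$ exceeds that threshold.

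Second, apply Theorem~\ref{thm:separator} to the family $\tilde{\Q}$ (viewed as a collection of $n^{\ast}$ regions of simplicity $\alpha'$ with $m$ intersections) with parameter $r=\Theta(1/\delta^{10})$. Direct substitution into the bounds of that theorem yields a simple closed curve $\C$ that crosses
\[
O\!\left(\sqrt{m+\tfrac{\alpha'^{2}(n^{\ast})^{2}}{r}}\right)\le\delta n^{\ast}
\]
of the cores, leaves at most $2n^{\ast}/3$ cores completely inside or outside, and has complexity
\[
O\!\left(\sqrt{r+\tfrac{mr^{2}}{\alpha'^{2}(n^{\ast})^{2}}}\right),
\]
which collapses to the claimed $O(\log^{5} n/\delta^{5})$ after absorbing the (at most polylogarithmic) slack hidden in the size of the underlying net used inside Lemma~\ref{lemma:coredecomposition}. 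Moreover, the vertices of $\C$ are intersections of arcs in $\Gamma(\R)$, so $\C$ is drawn from a polynomial-size universe (useful later for enumeration).

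Finally, partition $\Q=\Q_{1}\cup\Q_{2}\cup\Q_{3}$, where $\Q_{1}$, $\Q_{2}$, $\Q_{3}$ consist of the regions whose cores lie in $interior(\C)$, in $exterior(\C)$, or cross $\C$, respectively. Because $\tilde{\Q}$ covers $P$, every point of $P_{in}(\C)$ is covered by a core belonging either to $\Q_{1}$ or to $\Q_{3}$, so the \emph{original} regions in $\Q_{1}\cup\Q_{3}$ cover $P_{in}(\C)$; analogously for $\Q_{2}\cup\Q_{3}$. Therefore
\[
w(\textsc{Opt}(\R,P_{in}(\C)))\le|\Q_{1}|+|\Q_{3}|\le\tfrac{2}{3}n^{\ast}+\delta n^{\ast},
\]
the symmetric bound holds for the exterior, and summing gives $w(\textsc{Opt}(\R,P_{in}(\C)))+w(\textsc{Opt}(\R,P_{ext}(\C)))\le|\Q_{1}|+|\Q_{2}|+2|\Q_{3}|\le(1+\delta)n^{\ast}$. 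The main technical obstacle is the joint tuning of $\eta$ and $r$: $\eta$ must be small enough that the $\eta(n^{\ast})^{2}$ contribution to $m$ stays below $\delta^{2}(n^{\ast})^{2}$, yet making $\eta$ small simultaneously inflates the per-core simplicity $\alpha'=O(1/\eta^{2})$, which feeds back into both the crossing count and the final complexity of $\C$ through Theorem~\ref{thm:separator}; balancing all three budgets is what pins down the polynomial dependence on $1/\delta$.
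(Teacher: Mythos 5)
Your proposal follows exactly the paper's route for this lemma: apply Lemma~\ref{lemma:coredecomposition} to $\Q=\textsc{Opt}(\R,P)$ to obtain low-complexity cores, invoke Theorem~\ref{thm:separator} on those cores, and then close with the $\Q_1,\Q_2,\Q_3$ covering argument. The paper merely makes a different (and arguably clumsier) parameter choice, taking $\eta=\Theta(\delta^2/\log^2|\Q|)$ and $r=\Theta(\alpha^2\log^{10}n/\delta^{10})$, which yields the stated $O(\log^5 n/\delta^5)$; your $\eta=\Theta(\delta^2)$, $r=\Theta(1/\delta^{10})$ actually produces the cleaner $O(1/\delta^5)$, so your closing hedge about ``polylogarithmic slack in the net'' is unnecessary (Lemma~\ref{lemma:coredecomposition} already uses the size-$O(1/\eta)$ Clarkson--Varadarajan nets for pseudodisks, so there is no slack to absorb). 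Two small slips: (i) you want $c$ \emph{sufficiently small}, not large --- the crossing count comes out as $\delta n^*\sqrt{\,c+O(1)/(Cc^4)\,}$, so one must take $c<1$ and the constant in $r$ large in terms of $1/c$; and (ii) a boundary piece of a core between consecutive core vertices can traverse up to $\alpha$ monotone arcs of the underlying pseudodisk, so the cores are $O(\alpha/\eta^2)$-simple, not $O(1/\eta^2)$-simple (immaterial since $\alpha=O(1)$, and the paper records the $\alpha$ factor explicitly). Also note that your ``trivial bounding curve'' does not actually discharge the small-$n^*$ regime, since it makes one of the first two inequalities fail; but the paper does not address this corner case either, so this is not a deviation from its argument.
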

\begin{proof}
Consider the set $\mathcal{Q} = \textsc{Opt}(\mathcal{R}, P)$. We apply Lemma~\ref{lemma:coredecomposition} to $\mathcal{Q}$ and get
a core decomposition with 
 a core $\tilde{Q}$ for each $Q \in \mathcal{Q}$. Since, the regions in $Q$ cover $P$, their cores also cover $P$.\footnote{Since no point lies on the boundary of any of the regions, there is suitable choice of $\beta$ so that using $\beta$-core decompositions, we do not miss any of the points.} 
By the property of the core decomposition of Lemma~\ref{lemma:coredecomposition},
we have the total number of intersections to be $m = O(\delta^2 n^2 /\log^2 n)$,
and each core has $O(\log^4 n/\delta^4)$ vertices.
Since each of the curves forming the boundary of any core is composed of at most $\alpha$ $x$-monotone curves, the cores are $\alpha_1$-simple for $\alpha_1 = O(\alpha \log^4 n/\delta^4)$.
Applying Theorem~\ref{thm:separator} to these cores with parameter 
$r = \alpha^2 \log^{10} n / \delta^{10}$, we get a curve $\C$, where
\begin{eqnarray*}
\text{Number of cores intersected by }  \C &=& O( \sqrt{m+\alpha_1^2 n^2/r} )
= O(  \sqrt{ \frac{\delta^2 n^2}{\log^2 n} } )  = O( \frac{\delta n}{\log n} ) \\
\text{Complexity of } \C &=& O( \sqrt{r +\frac{mr^2}{\alpha_1^2 n^2}} ) 
= O( \sqrt{\frac{\log^{10}n}{\delta^{10}}} ) = O(\frac{\alpha \log^5 n}{\delta^5})
\end{eqnarray*}

Let $\mathcal{Q}_1$ ($\mathcal{Q}_2$) be the set of regions whose cores are in $interior(\C)$ ($exterior(\C)$).
Let $\mathcal{Q}_3 = \mathcal{Q} \setminus \{\mathcal{Q}_1 \cup \mathcal{Q}_2\}$. 
Observe that the cores of the regions in $\mathcal{Q}_{1} \cup \mathcal{Q}_{3}$ cover all the points in $P_{in}$ and therefore the regions in $\mathcal{Q}_{1} \cup \mathcal{Q}_{3}$ themselves cover the points in $P_{in}$. Similarly the regions in $Q_{2} \cup \mathcal{Q}_{3} $ cover the points in $P_{ext}$.
Theorem~\ref{thm:weightedseparator} guarantees that $w(\mathcal{Q}_{1})$ and $w(\mathcal{Q}_{2})$ are at most $\frac{2}{3}w(\mathcal{Q})$ and $w(\mathcal{Q}_{3}) \leq \delta w(\mathcal{Q})$. Therefore, 
$$w(\textsc{Opt}(\mathcal{R}, P_{in})) \leq w(\mathcal{Q}_{1} \cup \mathcal{Q}_{3}) = w(\mathcal{Q}_{1}) + w(\mathcal{Q}_{3}) \leq (\frac{2}{3}+\delta)w(\mathcal{Q})$$
This proves the first item in the statement of the theorem. The second item is proved analogously. For the third item, we combine the inequalities $w(\textsc{Opt}(\mathcal{R}, P_{in})) \leq w(\mathcal{Q}_{1}) + w(\mathcal{Q}_{3})$ and $w(\textsc{Opt}(\mathcal{R},P_{ext})) \leq w(\mathcal{Q}_{2}) + w(\mathcal{Q}_{3})$. We get 
$$w(\textsc{Opt}(\mathcal{R}, P_{in})) + w(\textsc{Opt}(\mathcal{R}, P_{ext})) \leq w(\mathcal{Q}_{1}) + w(\mathcal{Q}_{2}) + 2w(\mathcal{Q}_{3}) \leq w(\mathcal{Q}) + w(\mathcal{Q}_{3}) \leq (1+\delta)w(\mathcal{Q})$$ 
That proves the third item. 
Lemma~\ref{lemma:coredecomposition} also gives a set $\Gamma(\tilde{Q})$ for each $Q\in \mathcal{Q}$. Let $Y = \cup_{Q\in \mathcal{Q}} \Gamma(\tilde{Q})$. The separator $\C$ given by Theorem~\ref{thm:separator} can be described by a sequence of $O(\alpha\log^5 n/\delta^5)$ curves in $Y$ and $O(\alpha\log^5 n/\delta^5)$ additional bits. Since, by Lemma~\ref{lemma:coredecomposition} each curve in $Y$ can be described by a sequence of at most three curves in $\Gamma(\mathcal{R})$ and a constant number of bits, $\C$ can also be described by a sequence of  $O(\alpha\log^5 n/\delta^5)$ curves in $\Gamma(\mathcal{R})$ and $O(\alpha\log^5 n/\delta^5)$ additional bits of information.
\end{proof}

Now the existence of $\C$ immediately implies that the problem
is QPT-partitionable, which together with Lemma~\ref{thm:alg}
yields a QPTAS for the case with uniform weights.

%Following what we did in Section~\ref{sec:mis}, we set $\delta =c \epsilon/\log{\frac{n}{\epsilon}}$ with a suitable constant $c$ such that $\delta < 1/3$. 
%This ensures that the size of the optimal solution goes down by a constant factor in both the subproblems created by the separator given by Lemma~\ref{lem:dayofthejackal}.
%This time there are $(\alpha n)^{O(\alpha\log^5{n}/\delta^5)}$ candidate separators for each subproblem. Thus, by an algorithm similar to that described in Section~\ref{sec:mis}, we conclude  Theorem~\ref{thm:setcover} for the case of uniform weights.
\newpage
\section{QPTAS for $r$-admissible regions}
\label{appendix:r-admissible}
Two regions $A$ and $B$ are said to be non-piercing if they are simply connected, their boundaries intersect at most a finite number of times, and the regions $A \setminus B$ and $B \setminus A$ are connected. A finite set of regions is said to be non-piercing if they are pairwise non-piercing. Figure~\ref{fig:threeobjects} shows three regions that form a  non-piercing set. The intersection of any two regions $A$ and $B$ that are non-piercing consists of a disjoint union of {\em lenses} formed by the regions. Each lens is a connected component of $A\cap B$ which lies between two intersection points of $\partial A$ and $\partial B$ that are consecutive along both the boundaries. The shaded areas in Figure~\ref{fig:lenses} show the lenses in the intersection of a non-piercing pair of regions. Figure~\ref{fig:lens-cutting} shows a modification of the region $B$ is Figure~\ref{fig:lenses} so that one of the lenses is bypassed. The boundary of $B$ is modified so that the portion of $\partial B$ that lies inside $A$ is replaced by the dashed curve shown in the figure that is arbitrarily close to $\partial A$ but outside it. We will use such operations of {\em bypassing a lens} in the proof of the next lemma.

\begin{lemma}\label{lemma:pusher2}
Given a set $\R$ of non-piercing regions, 
and a marked region $X \in \R$ (called
the \emph{pusher}), there exists a 
core decomposition $\tilde{\R}$ of $\R$ such 
that i) $\tilde{X} = X$ and $\tilde{R}\cap \tilde{X} = \emptyset$ for all $R \neq X$ and ii)  $\tilde{\R}$ forms a family of non-piercing regions.
\end{lemma}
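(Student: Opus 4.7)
The approach is to extend the pushing construction of Lemma~\ref{lemma:pusher} from pseudodisks to non-piercing regions by operating at the granularity of \emph{individual lenses} rather than of entire regions. Set $\tilde{X}=X$. For each $R\in\R\setminus\{X\}$, decompose the intersection $R\cap X$ into its disjoint lenses $L_R^1,\dots,L_R^{k_R}$; each lens $L$ is bounded by one arc of $\partial R$ and one arc of $\partial X$, and I denote by $I_L$ the latter. Let $\Lambda$ be the collection of all such lenses over all $R\neq X$.

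The first step is to assign, for each $L\in\Lambda$, a distinct gap $gap(L)\in(0,\mu)$ for a suitably small $\mu$. As in Lemma~\ref{lemma:pusher}, consider the partial order $L\preceq L'$ whenever $I_L\subseteq I_{L'}$, extend it by a topological sort to a total order of rank $1,\ldots,|\Lambda|$, and set $gap(L)=\mu\cdot rank(L)/|\Lambda|$. The core $\tilde{R}$ is then produced by \emph{bypassing} each lens $L_R^j$ of $R$: the arc of $\partial R$ lying inside $L_R^j$ is replaced by a curve contained in the $gap(L_R^j)$-tubular neighborhood of $\partial X$ but strictly outside $X$, so each lens is shrunk to a thin outer strip hugging $\partial X$. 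The pusher itself is left unchanged.

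Three verifications remain. First, by construction $\tilde{R}\subseteq R$, $\tilde{R}\cap\tilde{X}=\emptyset$, and for a sufficiently small $\mu$ each $\tilde{R}$ is still simply connected, since a shallow bypass only flattens lenses against $\partial X$ and cannot create holes or disconnections. Second, every point removed from $\bigcup_R R$ by a bypass lies within distance $\mu$ of $\partial X\subseteq\bigcup_R\partial R$, which is harmless for the usage in Lemma~\ref{lemma:disjointcoredecomposition} (and, if a $\beta$-core condition is asked for, one chooses $\mu<\beta$). Third---and this is the main obstacle---I must show that $\{\tilde{R}:R\in\R\}$ is a non-piercing family, i.e., that each $\tilde{Y}\setminus\tilde{Z}$ and $\tilde{Z}\setminus\tilde{Y}$ is connected and that $\partial\tilde{Y}\cap\partial\tilde{Z}$ is finite (transverse bypass curves contribute only finitely many crossings).

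Outside $X$ the regions are untouched, so non-piercing can only fail through the interplay of bypass curves along $\partial X$, reducing the verification to a local case analysis on pairs of lenses $L_Y$ and $L_Z$ of $Y$ and $Z$: if $I_{L_Y}$ and $I_{L_Z}$ are disjoint, the bypasses live on separate sub-arcs of $\partial X$ and do not interact; if $I_{L_Y}\subsetneq I_{L_Z}$, the rank construction forces $gap(L_Y)<gap(L_Z)$, so the bypass curve of $Y$ lies strictly between $\partial X$ and the bypass curve of $Z$, mirroring the pseudodisk configuration in Figures~\ref{fig:threedisks2}--\ref{fig:threedisks3} and preserving the non-piercing condition; and if the two arcs overlap with neither contained in the other, the distinctness of $gap(L_Y)$ and $gap(L_Z)$ still places the two bypass curves at different ``altitudes'' above $\partial X$, so that the local picture of $\partial\tilde{Y}$ and $\partial\tilde{Z}$ near that lens pair is a small perturbation of the original $(\partial Y,\partial Z)$ in the exterior of $X$, which was non-piercing by hypothesis. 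The hard part is just carrying out this lens-level bookkeeping uniformly across all pairs; because the ranks on $\Lambda$ are consistent with arc inclusion, the reasoning is purely local along $\partial X$, and the template of Lemma~\ref{lemma:pusher} then carries through, yielding a non-piercing core decomposition with the required properties.
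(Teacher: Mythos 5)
Your proof follows the paper's construction at the same level of detail: pushing operates lens-by-lens (your $\Lambda$ is exactly the paper's set $\I$ of intervals of $\partial X$, since each lens of a non-piercing pair is bounded by exactly one arc of $\partial X$), ranks are assigned by a linearization of the inclusion order on those arcs, and the non-piercing verification reduces to a case analysis on pairs of lenses. However, you have the \emph{direction} of the gap assignment reversed, and this is load-bearing. You extend $\preceq$ by a standard topological sort, so $I_L \subsetneq I_{L'}$ gives $rank(L) < rank(L')$ and hence $gap(L) < gap(L')$: the lens with the smaller arc of $\partial X$ gets the \emph{smaller} gap. The paper does the opposite: nested (smaller) arcs receive \emph{larger} rank and hence \emph{larger} gap, and the appendix argument explicitly pushes $B$ along $I' \subset I$ with a \emph{larger} gap than $A$ along $I$, precisely to rule out the configuration of Figure~\ref{fig:forbidden-cutting}.

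Why your direction breaks: take lenses with $I_{L_B} \subsetneq I_{L_A}$ and $L_B$ nested inside $L_A$. With $gap(L_B) < gap(L_A)$, the thin region of $\tilde B$ over $I_{L_B}$ lying between distance $gap(L_B)$ and distance $gap(L_A)$ from $X$ is inside $B$ and inside $A$ but outside $\tilde A$ (too close to $X$). It is sealed above by $\partial \tilde A$, below by the bypass curve of $B$, and at the two ends of $I_{L_B}$ by $\partial B$, so it forms an isolated component of $\tilde B \setminus \tilde A$, disconnected from the free portion of $B$ outside $A$; hence $\tilde A$ and $\tilde B$ are piercing. The paper's choice ($gap$ increasing along inclusion of nested arcs) puts the bypass curve of $B$ strictly farther from $\partial X$ than that of $A$, so $\tilde B$ near $\partial X$ is contained in $\tilde A$ and no new component appears. (You were arguably misled into the opposite choice by an apparent typo in the Case 2 prose of the pseudodisk pushing lemma, which says ``smaller gap'' even though the formal $rank$ definition there and the text of this lemma's proof both say the reverse.) Reversing your topological sort — assigning larger ranks, hence larger gaps, to arcs that are nested inside others — restores the argument and makes it match the paper's.
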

\begin{proof}
The idea for the proof is the same as with pseudodisks. We set $\tilde{X}=X$ and $X' = X \oplus B_\mu$ for a suitably small $\mu$. For sufficiently small $\mu$, $X'$ forms a non-piercing family with the regions in  $\R \setminus \{X\}$.

The intersection of each region in  $\R \setminus \{X\}$ forms a set of intervals on the boundary of $X$. Let $\I$ be the set of all these intervals corresponding to all the regions. We then consider the partial order among them defined by inclusion, just a we did in the case of pseudodisks. This time, instead of assigning ranks to the pseudodisks, we assign distinct ranks in the range $1$ to $m$ to the intervals, where $m$ is the total number of intervals in $\I$. Different intervals corresponding to the same region are given different ranks by this procedure.

Unlike with pseudodisks, we do not push a region $R$ with the single gap. Instead we push it with different gaps along the different intervals in which it intersects $\partial X$. The gap along an interval $I$ is $gap(I) = \mu \frac{rank(I)}{m}$. 
To better understand the pushing procedure, we imagine pushing along the intervals in $\I$ one by one in decreasing order of their ranks.

\begin{figure}
\centering
\begin{minipage}{.4\textwidth}
  \centering
  \includegraphics[width=7cm]{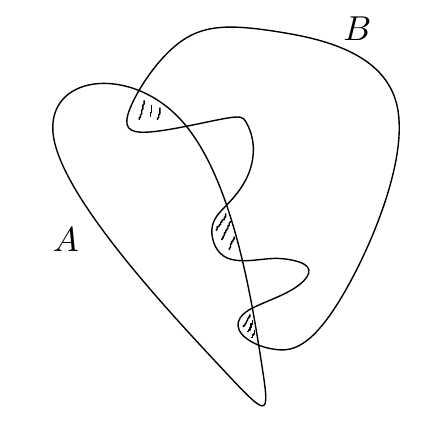}
  \captionof{figure}{Lenses in the intersection\\
   of two non-piercing regions.}
  \label{fig:lenses}
\end{minipage}
\begin{minipage}{.5\textwidth}
  \centering
  \includegraphics[width=7cm]{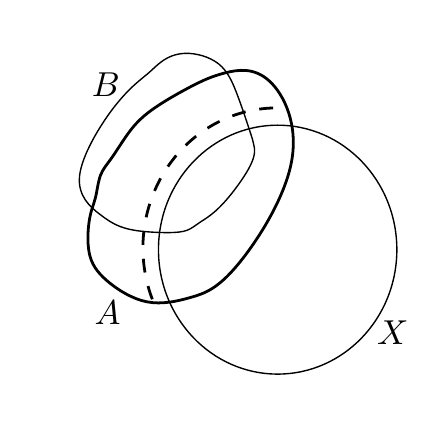}
  \captionof{figure}{The algorithm pushes $B$ before $A$.\\ Therefore the situation shown in the figure\\ cannot happen.}
  \label{fig:forbidden-cutting}
\end{minipage}%
\end{figure}

\begin{figure}
\centering
\includegraphics[width=7cm]{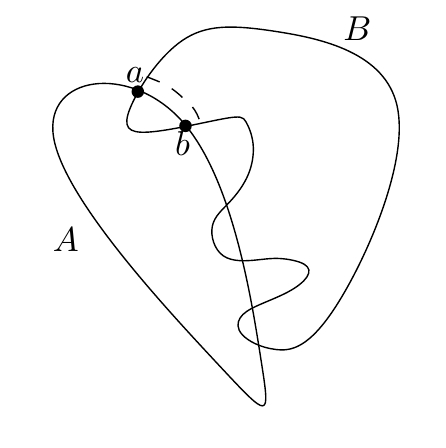}
\caption{$B$ is modified to bypass a lens.}
  \label{fig:lens-cutting}
\end{figure}

Figures~\ref{fig:retraction-to-boundary} and ~\ref{fig:retraction-to-boundary2} show the effect of pushing the region $A$ along the interval $I = [a,b]$ which is one of two intervals on $\partial X$ in which $A$ intersects $\partial X$. The part of $\partial A$ inside $X$ joining $a$ and $b$ is replaced by the dashed curve shown in Figure~\ref{fig:retraction-to-boundary2} that lies on the boundary of $X \oplus B_{gap(I)}$. 

\begin{figure}
%\centering
\begin{minipage}{.5\textwidth}
  \centering
  \includegraphics[width=7cm]{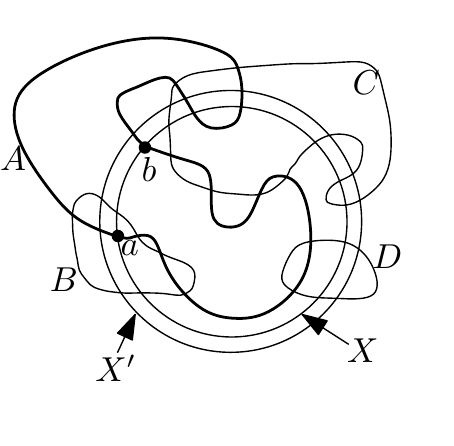}
  \captionof{figure}{Before $X$ pushes $A$ and $B$}
  \label{fig:retraction-to-boundary}
\end{minipage}%
\begin{minipage}{.5\textwidth}
  \centering
  \includegraphics[width=7cm]{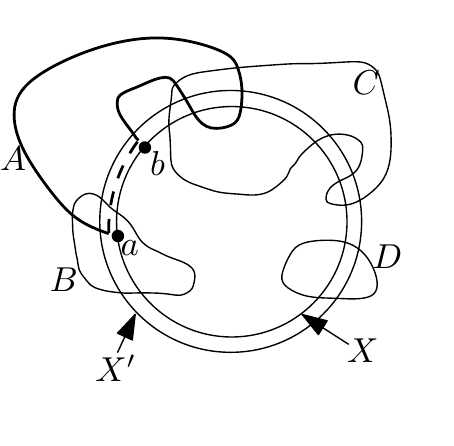}
  \captionof{figure}{After $X$ pushes $A$ and $B$}
  \label{fig:retraction-to-boundary2}
\end{minipage}
\end{figure}

We argue that this change to a region $A$ keeps the set of regions non-piercing. That is, the modified $A$ forms a non-piercing pair with every other object $B$ (which remains unchanged). To see this, imagine the change in $A$ as the net result of a sequence of changes. 
Within $X$, $A$ may form $0$ or more lenses with $B$. We first make $A$ bypass each of these lenses one by one.
 As discussed before, each of these changes keeps $A$ and $B$ non-piercing.
  After this, the number of intersections between the boundaries of $A$ and $B$ within $X$ is at most $1$. At this point we may move the boundary of $A$ to its final position on the boundary of $X \oplus B_{gap(I)}$. Observe that this final step does not change the number of intersections between $\partial A$ and $\partial B$ unless $B$ intersects $\partial X$ in a sub-interval  $I'\subset I$, as shown in Figure~\ref{fig:forbidden-cutting}. However, in that case, we push $B$ along the interval $I'$ with a larger gap, before pushing $A$ along $I$ since $I'$ gets a larger rank than $I$. 
 Thus, the modified $A$ bypasses some of the lenses with other objects (one with $B$, two with $C$ and one with $D$ in Figure~\ref{fig:retraction-to-boundary}) but this does not affect the non-piercing property of the family of regions. As another example,
  Figures~\ref{fig:threeobjects} and ~\ref{fig:threeobjects2} show the effect of pushing two regions $A$ and $B$ with a third one $X$. 

\end{proof}

\textbf{Remark:} Note that, as in the case of pseudodisks, for each $R \in \R$, the boundary of $\tilde{R}$ gains as many vertices as the number of intersections between $\partial X$ and $\partial R$. As before, these vertices correspond to the intersections but are slightly perturbed from the intersections because of the pushing with non-zero gaps.\\

\begin{figure}
%\centering
\begin{minipage}{.5\textwidth}
  \centering
  \includegraphics[width=7cm]{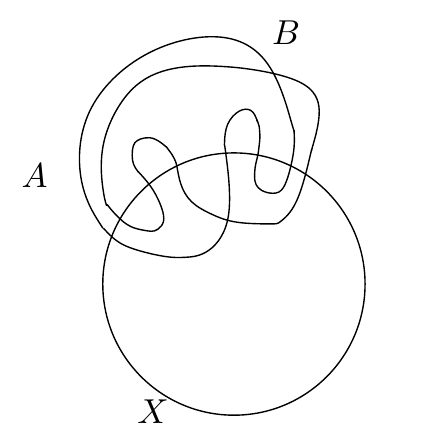}
  \captionof{figure}{Before $X$ pushes $A$ and $B$}
  \label{fig:threeobjects}
\end{minipage}%
\begin{minipage}{.5\textwidth}
  \centering
  \includegraphics[width=7cm]{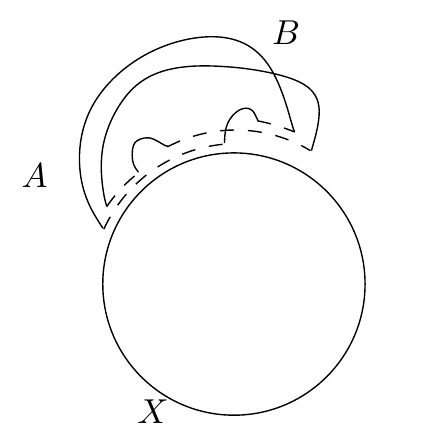}
  \captionof{figure}{After $X$ pushes $A$ and $B$}
  \label{fig:threeobjects2}
\end{minipage}
\end{figure}

\newpage
\section{Lower-bounds}
\label{appendix:lowerbounds}

In this section we give examples of regions of low (but superlinear) union complexity that do not admit a PTAS.

\begin{observation}
\label{observation:lowerbound}
The problem of approximating minimum-size set-cover is:
\begin{enumerate}
\item APX-hard for a set $\R$ of $n$ $4$-sided polygons in the plane
of union complexity $n \alpha (n)$.
\item inapproximable within $o(\log s)$ factor for a set $\R$ of $n$ $4s$-sided polygons in the plane
of union complexity $n 2^{\alpha(n)^{O(s)}}$, for any integer $s>3$.
\item inapproximable within $o(\log d)$ factor for a set $\R$ of $n$ halfspaces
in $\Re^d$, for any integer $d>3$.
\end{enumerate}
\end{observation}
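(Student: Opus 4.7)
The plan for all three parts is to reduce from set-cover instances with bounded-size sets, using the fact that covering a universe by a family in which every set has size at most $s$ is $\Omega(\log s)$-inapproximable (and already APX-hard for a small constant $s$) by Trevisan/Feige-type hardness results. The geometric content of each part is then a realization lemma that encodes such an abstract instance by objects of the specified type while controlling the union complexity.

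For part $(3)$, I would show that any abstract set-system $(X,\mathcal{F})$ with $|F|\leq s$ for all $F\in\mathcal{F}$ is realizable as a halfspace instance in $\Re^{2s}$. Embed $X$ as $n$ points on the moment curve $\gamma(t)=(t,t^2,\ldots,t^{2s})$ at parameters $t_1<\cdots<t_n$. For each $F$, split it into $b\leq|F|\leq s$ maximal blocks of consecutively-indexed elements and set $p_F(t)=\prod_{B}(t-\tau_B^-)(\tau_B^+-t)$, where $\tau_B^-$ and $\tau_B^+$ are placed just before and after each block $B$; then $p_F(t_i)>0$ iff $i\in F$, and $\deg p_F\leq 2b\leq 2s$. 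Since $p_F$ lies in the span of $1,t,\ldots,t^{2s}$, it corresponds to a halfspace in $\Re^{2s}$ that cuts the moment curve in exactly $\{\gamma(t_i):i\in F\}$, so the $\Omega(\log s)$ hardness of $s$-bounded set-cover gives $\Omega(\log d)$ inapproximability for halfspace set-cover in $\Re^d$ whenever $d\geq 4$.

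For part $(2)$ I would push this construction down to the plane. Place $X$ on the $x$-axis, and for each $F\in\mathcal{F}$ build the polygon $R_F=\{(x,y):-M\leq y\leq \tilde{p}_F(x),\,x_{\min}\leq x\leq x_{\max}\}$, where $\tilde{p}_F$ is a piecewise-linear approximation of $p_F$ with $O(s)$ segments chosen so that $\tilde{p}_F$ shares the sign of $p_F$ at every $t_i$. Then $R_F$ has $4s+O(1)$ sides and meets the $x$-axis exactly at the points of $F$. For two such polygons only the upper boundaries can cross, and two piecewise-linear approximations of degree-$2s$ polynomials cross $O(s)$ times; hence the union of $n$ such polygons has complexity $O(\lambda_{O(s)}(n))=n\cdot 2^{\alpha(n)^{O(s)}}$ by the Davenport-Schinzel bound. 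For part $(1)$ I would specialize to the APX-hardness of minimum line cover in the plane and realize each covering line as a thin $4$-sided parallelogram containing the same subset of input points; pairs of such parallelograms behave like pseudo-segments, so their boundaries cross $O(1)$ times pairwise and the union has complexity $O(\lambda_3(n))=O(n\alpha(n))$.

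The main technical obstacle lies in part $(2)$: the piecewise-linear approximation $\tilde{p}_F$ must simultaneously use only $O(s)$ segments, preserve the sign of $p_F$ at every input point $t_i$, and avoid introducing extra pairwise crossings beyond the $O(s)$ inherited from the polynomial difference $p_F-p_{F'}$. This can be arranged by placing breakpoints of $\tilde{p}_F$ inside each sign-interval of $p_F$ with steep slopes across the zeros of $p_F$ and near-constant values elsewhere, and then perturbing breakpoints across different $F$'s into general position so that $\tilde{p}_F$ and $\tilde{p}_{F'}$ only cross near the zeros of $p_F-p_{F'}$.
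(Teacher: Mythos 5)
Your part (3) matches the paper's argument almost exactly: both embed the universe on the moment curve in $\Re^{2s}$ and use Trevisan's $\Omega(\log s)$-inapproximability for set cover with sets of bounded size; your block-wise degree bound is a fine way to see that any $\leq s$-element subset of moment-curve points is a halfspace intersection, equivalent to the paper's ``$d/2$-sized subsets are realizable in $\Re^d$.'' For part (2) you are in the same spirit as the paper but take a detour that is more complicated than necessary and generates a worry that is in fact a non-issue: any two $x$-monotone piecewise-linear curves with $O(s)$ pieces each cross $O(s)$ times automatically, because their difference is piecewise-linear with $O(s)$ pieces, so you never need to approximate polynomials or carefully place breakpoints; the paper simply draws, for each set $S_i$, an $x$-monotone ``step curve'' $\gamma_i$ with $2s+1$ horizontal and $2s$ nearly-vertical segments separating the points of $S_i$ (above $\gamma_i$) from the rest (below $\gamma_i$), and takes the region above $\gamma_i$. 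The Davenport--Schinzel bound on the lower envelope then directly gives union complexity $n\,2^{\alpha(n)^{O(s)}}$.

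Part (1) has a genuine gap. You propose realizing a covering line as a thin $4$-sided parallelogram, but the union of $n$ thin parallelograms around lines (or long segments) in general position has complexity $\Theta(n^2)$, not $O(n\alpha(n))$: each bounded face of the underlying line arrangement survives as a hole in the union, and there are $\Theta(n^2)$ such faces, each contributing to the union boundary. The observation specifically requires union complexity $n\alpha(n)$, so this construction does not deliver the claimed statement. The paper instead invokes Chan--Grant's APX-hardness for covering by \emph{shadows} of line segments---downward regions bounded above by a segment, by a common horizontal line below, and by two vertical sides (hence $4$-sided). Because shadows all extend to the same bottom level, their union has no interior holes: its boundary is the upper envelope of the segments plus the flat bottom and two vertical sides, which by Davenport--Schinzel is $O(n\alpha(n))$. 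The key idea you are missing is the use of ``bottom-filled'' regions whose union complexity is governed by an envelope, rather than thin neighborhoods of curves whose union inherits the quadratic face count of the arrangement.
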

\begin{proof}
\emph{1.} Chan-Grant~\cite{CG2014} showed that computing minimum size set-covers
for objects defined by shadows of line-segments in the plane is APX-hard. 
By Davenport-Schinzel sequences, the union complexity of $n$ line-segments in the plane 
is $O(n \alpha(n))$. These shadows can be `closed off' without any further
intersections to derive the $4$-sided polygons.

\emph{2.} Trevisan~\cite{Trevisan01} showed that computing minimum size set-covers
for general set systems $(V, \{S_1, \ldots, S_m\})$ where each $S_i$ has size $O(s)$
is inapproximable with factor $o(\log s)$ unless $P = NP$. 
These sets can be easily implemented using regions whose boundaries cross at most $O(s)$ times. To see this place a point corresponding to each vertex on the $x$-axis. Then for each set $S_i$ construct $x$-monotone curve $\gamma_i$ with $2s+1$ horizontal segments and $2s$ nearly vertical segments so that the points corresponding to the vertices in $S_i$ lie above $\gamma_i$ and all other points lies below $\gamma_i$. 
These curves can easily be drawn in such a way that any two of them intersect at most $O(s)$ times. Thus by bounds known on Davenport-Shinzel sequences, the lower envelope of the curves has complexity $O(n 2^{\alpha(n)^{O(s)}})$.
 Thus if we consider the regions $R_i$ defined by the set of points above $\gamma_i$ we get a set of regions with small union complexity. These regions can be
 made bounded without increasing the union complexity.

\emph{3.} There exist a set of points in $\Re^d$ (points on the moment curve; 
see Matousek~\cite{M02})
such that every $d/2$-sized subset can be obtained by intersection with a halfspace.
Thus a set-cover problem where every set has size  at most $d/2$ can be realized
with halfspaces in $\Re^d$, which together with the Trevisan bound~\cite{Trevisan01}
implies the lower-bound.
\end{proof}

\newpage

\section{Separators. Proof of Theorem~\ref{thm:weightedseparator}
and Theorem~\ref{thm:separator}}
\label{appendix:separators}

In this section we prove Theorem~\ref{thm:weightedseparator}
and Theorem~\ref{thm:separator}.
The proof of both these statements follow from a suitable
subdivision of the plane, and the application of a
variant of the planar graph separator theorem.
Our proof can also be seen as a generalization of the
separator theorem of Fox-Pach~\cite{FP11} where, given a set
of curves with $m$ intersections, they show the existence
of a separator that intersects $O(\sqrt{m})$ curves: this is obtained
by applying the planar graph separator theorem on the arrangement
induced by these curves (where each intersection is taken as a vertex).
We   also apply the planar graph separator theorem, but instead on
a coarser subdivision of the plane.
This subdivision is similar to a structure
for the case of lines  in the plane,
called \emph{cuttings}~\cite{M92}.
%{\bf To Do: Are there shallow cuttings for lines? Also, do we mean line segments?}
%is a generalization
%of the proof of a 
%Note that their statement follows from Theorem~\ref{thm:separator} by setting $r=\alpha^2 n^2/m$.

%\subsection*{A Partitioning Lemma}

\begin{lemma}
\label{thm:mpartition}
Given a set $S$ of $n$ $x$-monotone curves in the plane  with $m$ intersections (and
where every pair of curves intersect $O(1)$ times),
and a parameter $r$, there exists a partition
of $\Re^2$ into $O(r + \frac{mr^2}{n^2})$ regions (each
of constant descriptive complexity, defined by a constant number
of curves of $\Gamma(S)$ together with constant number of vertical line-segments)
such that the interior of any region in this partition intersects
curves of total weight $O(n/r)$. 
\end{lemma}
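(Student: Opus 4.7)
The plan is to build the partition as a vertical decomposition of a random sub-arrangement, and then to refine any over-crowded cell using the Chazelle--Friedman resampling technique so as to shave off the extra $\log r$ factor that a vanilla $\eps$-net argument would leave behind.

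First I would draw a random sample $S' \subseteq S$ of size $\Theta(r)$, form the arrangement $\mathcal{A}(S')$, and take its vertical decomposition $\mathcal{V}(S')$, i.e.\ extend a vertical segment up and down from every vertex of $\mathcal{A}(S')$ until it meets a curve of $S'$ (or $\pm\infty$). Each resulting trapezoidal-style cell is bounded by at most two arcs from $\Gamma(S)$ and two vertical segments, so it has the required constant descriptive complexity. The size of $\mathcal{V}(S')$ is $O(r+m')$, where $m'$ is the number of intersections inside $\mathcal{A}(S')$. Since each of the $m$ intersection points of $S$ survives in $\mathcal{A}(S')$ only if both of its defining curves are sampled, the Clarkson--Shor sampling bound gives $E[m']=O(m (r/n)^2)=O(mr^2/n^2)$, matching the target size $O(r+mr^2/n^2)$.

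The main work is bounding, for every cell $\Delta \in \mathcal{V}(S')$, the number of curves of $S$ whose interior crosses $\Delta$. Consider the range space whose ranges are ``vertical trapezoids'' definable by $O(1)$ $x$-monotone curves from $\Gamma(S)$ and vertical segments: because any two curves meet $O(1)$ times, this range space has bounded VC-dimension, so the Haussler--Welzl $\eps$-net theorem applied with $\eps = 1/r$ to $S$ yields that, with positive probability, a random sample of size $\Theta(r \log r)$ hits every trapezoid crossed by more than $n/r$ curves. Running the above construction on such a sample already gives a partition with $O(n/r)$ crossings per cell, but its size blows up by $\log^2 r$.

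The hard part is therefore removing the $\log r$ factor to obtain exactly the claimed size. I would follow Chazelle--Friedman: take a random sample $S'$ of size $\Theta(r)$ as above, and for every cell $\Delta$ of $\mathcal{V}(S')$ that is crossed by more than $c\cdot n/r$ curves (for a suitable constant $c$) re-triangulate $\Delta$ by locally recursing on the list $S_\Delta$ of curves crossing $\Delta$, replacing $\Delta$ by the vertical decomposition of a further random sample of $O(|S_\Delta| r/n)$ of those curves. A standard exponential-decay / moment argument (each ``heavy'' cell is produced only when its $\Theta(r)$ defining curves are sampled while many crossing curves are not) shows that the expected total number of cells produced by this refinement is still $O(r + mr^2/n^2)$, while after one level of local refinement every cell is crossed by at most $c\cdot n/r$ curves. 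Derandomizing by picking a sample achieving the expected bounds yields the claimed deterministic partition. The only subtlety to watch is that each refined cell is still described by $O(1)$ curves from $\Gamma(S)$ and $O(1)$ vertical segments, which is preserved because the local samples are of constant size.
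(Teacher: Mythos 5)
Your proposal takes essentially the same route as the paper: a first-level random sample of size $\Theta(r)$, a trapezoidal (vertical) decomposition of it, and a Chazelle--Friedman second-level resampling inside heavy cells, with an exponential-decay bound on the number of cells crossed by $\geq t n/r$ curves ensuring the total count stays $O(r + mr^2/n^2)$. One small imprecision: inside a heavy cell $\Delta$ crossed by $t_\Delta n/r$ curves, a plain random sample of size $O(|S_\Delta|\cdot r/n)=O(t_\Delta)$ does not in general guarantee that every refined sub-cell is crossed by only $O(n/r)$ curves; the paper instead places a $(1/t_\Delta)$-net of size $O(t_\Delta\log t_\Delta)$ inside $\Delta$, producing $O(t_\Delta^2\log^2 t_\Delta)$ sub-cells, which are then absorbed by the exponential decay. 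With that correction your argument matches the paper's.
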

{\bf Proof sketch.} We present the proof of the above lemma  in Appendix~\ref{appendix:mpartition}. 
Here we observe that a near-optimal (within log factors) result follows immediately from $\eps$-nets
(the proof in Appendix A gets rid of these log factors using standard techniques from the theory of $\eps$-nets). For the purpose of designing QPTAS, however, it is not necessary to get rid of the log factors and the near-optimal bounds suffice at the expense of a slightly higher, but still quasi-polynomial, running time.

Given $S$, consider the set-system $(S, \F)$ induced
by intersection with  segments in the plane, i.e.,
$ F \in \F  \text{ iff there exists
a line segment $l$ s.t. } F = \{ s \in S \ | \ s \cap l \neq \emptyset\}. $
This set system can be shown to have a finite VC dimension~\cite{AS40}.
Pick a random set $R$ by uniformly adding each curve of $S$
with probability $p = (C r \log r)/n$, where $C$
is a large constant. Then
 $R$ is a $(1/r)$-net for $(S, \F)$ with probability at least $9/10$~\cite{M02}.
The  expected size of $R$ is $np$,
and the expected number of intersections of curves  in $R$ is $mp^2$.
By Markov's inequality, with probability at least $9/10$,
the size of $R$ is at most $10np$,
and the number of intersections in $R$ is at most $10mp^2$.
Therefore with probability at least $8/10$, $R$
is a  $(1/r)$-net \emph{and} the size of the trapezoidal decomposition
of $R$ is $O(r\log r + (mr^2 \log^2 r)/n^2)$.
Note that any  open line-segment $l$ in this trapezoidal decomposition
must intersect at most $n/r$ segments of $S$, as otherwise
the set of curves intersecting $l$ would not be 
hit by a curve from $R$, contradicting the fact
that $R$ is a $(1/r)$-net.

\subsection*{Proof of Theorem~\ref{thm:weightedseparator}}

Let $\R = \{R_1, \ldots, R_n\}$ be the set of $n$ weighted
$\alpha$-simple regions, where weight of $R_i$ is $w_i$,
and $R_i$ can be decomposed into $v_i = v(R_i) \leq \alpha$ $x$-monotone curves.
In the standard way, by scaling, one can assume the weights are integral.
The total weight is $W = \sum_i w_i$. 
First decompose each region $R_i$ into
$v_i$ $x$-monotone curves of weight $w_i$ (and let $S$ be the set of all
such curves), and then each of these $v_i$ curves for region $R_i$
is replaced by $\alpha w_i/v_i$ copies of weight $1$
to get the set $S'$. Note that the total 
number of curves in $S'$ from region $R_i$ is $\alpha w_i$,
and $|S'| = \sum_i \alpha w_i = \alpha W$.
%First decompose the regions
%in $\R$ into a set $S$ of $\sum_i v_i = O(\alpha n)$ weighted $x$-monotone curves
%(where each curve $s_i \in S$ has weight $w(s_i)$ equal to the weight of its region).
%Then make $w_i$ copies of the curve $s_i \in S$
%to get a set $S'$ of $O(\alpha W)$ curves, each of weight $1$. 
Apply Lemma~\ref{thm:mpartition} to $S'$ 
with the parameter $r$ to be set later; if the curves
in $S'$ are disjoint, the proof of
Lemma~\ref{thm:mpartition} shows the existence
of a subset $R'$ of $S'$ of total size $O(r)$  such that
the trapezoidal decomposition of $R'$ gives a partition $\T$
where the interior of any region in $\T$
intersects $O(\alpha W/r)$ curves of $S'$. Remove copies of the same curve
in $R'$ to get a subset $R$ of $S$ of total size $O(r)$. 
Now replace each   $s \in R$   with
a small expanded copy of $s$ which contains $s$ in its interior and modify $\T$ accordingly
to use edges of this new region instead of $s$. 
As each curve $s \in S$ of weight $\Omega(\alpha W/r)$ must be present
in $R$ (otherwise it would intersect the interior of some
region of $\T$ contradicting the partitioning property), this ensures that the
new triangulation $\T'$ has the properties
that $i)$ every $s \in S$ with weight $\Omega(\alpha W/r)$ lies in the interior
of a single face of $\T'$,
$ii)$ the number of vertices of $\T'$ is $O(r)$, and 
$iii$) each edge
in $\T'$  intersects curves in $S$ of total weight $O(\alpha W/r)$
(for small-enough replacing regions around each $s \in R$).

$\T'$ can be seen as an embedding of an underlying planar graph $G$.
Give weights to each face of $\T'$: 
if a curve $s \in S$ intersects $t$ faces of $\T'$,
add weight $w(s)/t$ to the weight of each of these $t$ faces.
A variant of the planar graph separator theorem~\cite{M86} now implies
the existence of a simple cycle $\C$ in $\T'$ of $O(\sqrt{r})$ vertices such that 
faces completely inside (and outside) have total weight at most $2 \alpha W/3$, and hence so do the curves of $S$ inside (and outside) $\C$.
Let $\R_{int} \subset \R$ be set of regions completely inside $\C$.
Then as $R_i$ produced curves of total weight $\alpha w_i$, we have
$\sum_{R_i \in \R_{int}} \alpha w_i \leq 2\alpha W/3$, implying
that stated bound on the total weight of regions in $\R$ completely inside $\C$, 
$\sum_{R_i \in \R_{int}} w_i \leq 2W/3$.
Finally, the weight of the regions of $\R$ intersected by $\C$ is at most 
$O(\sqrt{r}) \cdot O(\alpha W/r) = O(\alpha W/\sqrt{r})$. Setting $r = \alpha^2/\delta^2$
concludes the upper-bound.

Note that $\C$ consists of alternating pieces of curves of $R$
and vertical line-segments from the trapezoidal decomposition of $R$. 
Each vertical line-segment in this decomposition is between an
endpoint of a curve of $S$ and another curve  of $S$. Thus $\C$
can be specified by giving a sequence of $O(\alpha/\delta)$ curves
of $S$ together with specifying which endpoint is used
for every set of consecutive curves in the sequence.
These specifications require an additional $O(\alpha/\delta)$ bits.
The  optimality of this statement can be seen by the following
  construction where $\R$ consists of a set of disjoint line
  segments of weight $1$.
Take a regular polygon $\P$ with $c/\delta$ vertices, and place $\delta n/c$ copies of $\P$ concentrically, each shrunk slightly more than the previous one so that there are no intersections between any two copies. Note that one can choose 
the scaling factor small-enough such that
any closed curve separating two different copies of $\P$ 
must also have at least $c/\delta$ vertices.
Finally replace each polygon with $c/\delta$
line segments corresponding to its sides (slightly perturbed so that they are disjoint). 
Take any balanced closed curve $\C'$ in the plane. If it contains at least one
copy of $\P$ completely inside, and one copy completely outside, 
then by construction it has at least $c/\delta$ vertices. Otherwise,
say there is no copy of $\P$ completely inside $\C'$. As $\C'$ is balanced,
it contains at least $n/3$ curves inside or intersecting
its boundary; these curves
belong to at least $(n/3)/(c/\delta) = \delta n/(3c)$ different copies of $\P$,
and each of these copies must intersect $\C'$ in at least one curve.
%, resulting in  $\C'$
%intersecting at least $\delta n/(3c)$ curves of $S$.

\subsection*{Proof of Theorem~\ref{thm:separator}}

Given the set $\R$ of $n$ $\alpha$-simple regions with $m$ intersections,
construct from it the set $S$ of $O(\alpha n)$ $x$-monotone curves.
Apply Lemma~\ref{thm:mpartition} 
to $S$ to get
a partition $\T$ of $\Re^2$ into $O(r+mr^2/\alpha^2 n^2)$ regions.  
$\T$ can be seen as an embedding of an underlying planar graph $G$.
Give weights to each face of $\T$: 
if a region $s \in \R$ intersects $t$ faces of $\T$,
add weight $1/t$ to the weight of each of these $t$ faces.
Now from~\cite{M86} we get a simple cycle $\C$ in $\T$ of $O(\sqrt{r + mr^2/\alpha^2 n^2})$ vertices such that 
faces completely inside (and outside) have total weight at most $2n/3$, and hence so do
the regions of $\R$ inside (and outside) $\C$.
The weight of the regions of $\R$ intersected by $\C$ is at most 
$O(\sqrt{r + mr^2/\alpha^2 n^2}) \cdot O(\alpha n/r) = O(\sqrt{m+\alpha^2 n^2/r})$.

\newpage

\section{Proof of Lemma~\ref{thm:mpartition}}
\label{appendix:mpartition}

\begin{proof}
We briefly now review the basic 
partitioning method of using \emph{trapezoidal decompositions}.
Given a set $R \subseteq S$ of $x$-monotone curves, one can
 partition the space (say inside a large-enough rectangle
containing all the curves of $S$) as follows.
For each
endpoint of a curve in $R$ or an intersection-point between curves
in $R$, shoot a vertical ray upwards (and downwards) till it
hits another curve  (or the bounding rectangle). 
The union of all these vertical segments together with $R$
partitions the bounding rectangle into a set of regions.
A crucial fact is that each region $\Delta$ in this partition
 is determined by a constant
($2$, $3$ or $4$) number of curves in $R$.
Call such regions \emph{trapezoidal regions} (or trapezoids for brevity),
and the partition is called a \emph{trapezoidal decomposition}\footnote{We refer the reader to~\cite{M02} for a nice exposition
on trapezoidal decompositions.}.
Denote by $\Xi(R)$ this set of trapezoidal regions
in the trapezoidal decomposition of $R$.
The size, $|\Xi(R)|$, of the trapezoidal decomposition of $R$ is the
number of trapezoids  in  $\Xi(R)$; it is, within a constant-factor, 
equal to the total number of end- and intersection- points in $R$.
A trapezoid present in the trapezoidal decomposition
of any subset $R$ of $S$ is called a \emph{canonical trapezoid}.
For a canonical trapezoid $\Delta$, let $|\Delta|$ denote the set of curves
of $S$ intersected by $\Delta$.
 A trapezoid $\Delta$
is present in the trapezoidal decomposition of $R$
if and only if its determining curves are present in $R$,
and $R$ does not contain any of the curves of $S$ that intersect $\Delta$.
 For the rest of the proof, we only
work with canonical 
trapezoids determined by $4$ curves. The case for canonical
trapezoids determined by $2$ and $3$ curves is similar.

%The proof is by an application of the `sampling refinement' technique for
%constructing $\eps$-nets. 
First note that a slightly weaker bound (within
logarithmic factors) follows immediately from $\eps$-nets.
Given $S$, consider the set-system $(S, \F)$ induced
by intersection with  segments in the plane, i.e.,
$$ F \in \F  \text{ iff there exists
a line segment $l$ s.t. } F = \{ s \in S \ | \ s \cap l \neq \emptyset\} $$
Pick a random set $R$ by uniformly adding each curve of $S$
with probability $p = (C r \log r)/n$, where $C$
is a large constant. Then
 $R$ is a $(1/r)$-net for $(S, \F)$ with probability at least $9/10$~\cite{M02}.
The  expected size of $R$ is $np$,
and the expected number of intersections of curves  in $R$ is $mp^2$.
By Markov's inequality, with probability at least $9/10$,
the size of $R$ is at most $10np$,
and the number of intersections in $R$ is at most $10mp^2$.
Therefore with probability at least $8/10$, $R$
is a  $(1/r)$-net \emph{and} the size of the trapezoidal decomposition
of $R$ is $O(r\log r + (mr^2 \log^2 r)/n^2)$.
Note that any  open line-segment $l$ in this trapezoidal decomposition
must intersect at most $n/r$ segments of $S$, as otherwise
the set of curves intersecting $l$ would not be 
hit by a curve from $R$, contradicting the fact
that $R$ is a $(1/r)$-net. 
%By dividing
%each trapezoid of $\Xi(R)$ into two triangles, we
%get a partition into $O(r\log r + (mr^2 \log^2 r)/n^2)$ triangles
%where the interior of each triangle intersects at most $O(n/r)$ curves
%of $S$.

%The way to improve the above construction
%is by the so-called ``sampling refinement'' technique in the study
%of $\eps$-nets~\cite{CV05}. 
Set $p = Cr/n$ (for a small-enough constant $C$ to be set later), and pick each  curve in $S$
with probability $p$ to get a random sample $R$.
Construct the trapezoidal decomposition $\Xi(R)$ of $R$.
If all trapezoids $\Delta \in \Xi(R)$ intersect at most $n/r$ curves
in $S$, we are done. Otherwise we will further partition
each violating $\Delta$, based on two ideas.
First, the expected number of trapezoids in $\Xi(R)$
intersecting more than $n/r$ curves are few.
In particular, we will show  (Lemma~\ref{alemma:expo}) that
the expected number of trapezoids intersecting at least $tn/r$ 
curves in $S$ is \emph{exponentially} decreasing as a function
of $t$. Second, consider a $\Delta$ intersecting a set, say $S_{\Delta}$, of 
$n_{\Delta} = tn/r$ curves of $S$.
%, and let $n_{\Delta}$ be these $tn/r$ curves intersecting $\Delta$.
Use the weaker bound on $S_{\Delta}$ with parameter $t$ to get a
partition inside $\Delta$ of $O(t\log t + (m_{\Delta} t^2 \log^2 t)/n_{\Delta}^2) 
= O(t^2 \log^2 t)$ trapezoids. By definition, each such
trapezoids intersects at most $n_{\Delta}/t$ = $n/r$ curves of $S_{\Delta}$ 
(and hence of $S$). Thus refining each $\Delta$ gives the required
partition on $S$ with parameter $r$. It remains to bound the overall 
expected size of this partition.

\begin{lemma} 
\label{alemma:cssegments}
Given a set $S$ of $n$ $x$-monotone curves in the plane with $m$ intersections,
the number of canonical trapezoids defined by $S$ that intersect
at most $k$ curves of $S$ is $O(nk^3 + mk^2)$.
\end{lemma}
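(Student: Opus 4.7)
The plan is to invoke the standard Clarkson--Shor random-sampling framework, specialized to trapezoidal decompositions. I will rely on two structural facts about canonical trapezoids: (i) every canonical trapezoid $\Delta$ has a defining set $D(\Delta)\subseteq S$ of size at most four (a top curve, a bottom curve, and the at most two curves stabbed by the vertical walls at the left and right endpoints), and (ii) the trapezoidal decomposition $\Xi(R)$ of any subset $R\subseteq S$ has combinatorial complexity $O(|R|+m_R)$, where $m_R$ is the number of pairwise intersections among curves of $R$.

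First I will form a random sample $R$ by including each curve of $S$ independently with probability $p:=1/(2k)$. For a canonical trapezoid $\Delta$, let $K(\Delta)=\{s\in S:\; s\cap \mathrm{int}(\Delta)\neq\emptyset\}$ denote its conflict set. Then $\Delta$ appears in $\Xi(R)$ exactly when $D(\Delta)\subseteq R$ and $R\cap K(\Delta)=\emptyset$, an event whose probability is $p^{|D(\Delta)|}(1-p)^{|K(\Delta)|}$.

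Next, for any canonical trapezoid $\Delta$ with $|D(\Delta)|\leq 4$ and $|K(\Delta)|\leq k$, the survival probability is at least $p^{4}(1-p)^{k}\geq (2k)^{-4}\cdot e^{-1/2}=\Omega(1/k^{4})$. Letting $T_{\leq k}$ denote the number of canonical trapezoids with conflict set of size at most $k$, summing this lower bound over all such $\Delta$ and comparing with the upper bound
\[
\mathbb{E}\bigl[|\Xi(R)|\bigr]\;\leq\;C\bigl(\mathbb{E}[|R|]+\mathbb{E}[m_R]\bigr)\;=\;C\!\left(np+mp^{2}\right)\;=\;O\!\left(\frac{n}{k}+\frac{m}{k^{2}}\right)
\]
yields $T_{\leq k}\cdot\Omega(1/k^{4})\leq O(n/k+m/k^{2})$, which rearranges to the desired bound $T_{\leq k}=O(nk^{3}+mk^{2})$.

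The argument is essentially routine; the only genuinely combinatorial ingredient is the $|D(\Delta)|\leq 4$ bound, which the excerpt has already singled out as the structural feature worth focusing on. I will also need to dispatch the few trapezoids with $|D(\Delta)|<4$ that abut the enclosing bounding rectangle: their total number is $O(n)$, hence they are dominated by the main estimate and can simply be absorbed. A minor technical check is that the general-position assumption on $S$ keeps the $O(|R|+m_R)$ complexity bound on $\Xi(R)$ clean, but this is standard. The heart of the proof is the single counting inequality displayed above.
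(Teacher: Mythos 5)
Your proof is correct and is essentially the same Clarkson--Shor double-counting argument the paper uses, down to the choice $p=1/(2k)$, the survival probability $p^4(1-p)^k$, and the two-way comparison against $E[|\Xi(R)|]=O(np+mp^2)$. The only cosmetic difference is that you absorb the boundary trapezoids with $|D(\Delta)|<4$ up front, whereas the paper simply remarks that those cases are handled similarly.
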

\begin{proof}
Let $\Xi_{\leq k}$ be the set of canonical trapezoids defined by $S$ that intersect
at most $k$ curves of $S$.
The proof is standard via the Clarkson-Shor technique.
Construct a sample $T$ by adding each curve 
of $S$ with probability $p_0$; the expected total
number of picked curves is $np_0$ and the expected number
of intersections between the curves of $T$ is $mp_0^2$. 
The trick is to count the expected 
size of $\Xi(T)$ in two ways.
On one hand, it is at most $O(np_0 + mp_0^2)$ (i.e., the 
expected number of vertices present in $\Xi(T)$). 
On the other hand, as the probability of a canonical trapezoid
$\Delta$ being in $\Xi(T)$ is $p_0^4(1-p_0)^{|\Delta \cap S|}$, it is at least 
$$ \sum_{\Delta \in \Xi_{\leq k}} p_0^4 (1-p)^{|\Delta \cap S|} \geq
\sum_{\Delta \in \Xi_{\leq k}} p_0^4 (1-p_0)^{k}$$
where the sum is over all canonical trapezoids 
$\Delta$ which intersect at most $k$ curves of $S$.  Therefore, 
\begin{eqnarray*}
\sum_{\Delta} p_0^4 (1-p_0)^{k} = & |\Xi_{\leq k}| & \cdot \ p_0^4 (1-p_0)^{k} \leq E[|\Xi(T)|] =np_0 + mp_0^2 \\
& |\Xi_{\leq k}| & \leq \frac{np_0 + mp_0^2}{p_0^4(1-p_0)^k} = O(nk^3 + mk^2)
\end{eqnarray*}
for $p_0=1/2k$.
\end{proof}

\begin{lemma} 
\label{alemma:expo}
Expected number of trapezoids in $\Xi(R)$ intersecting
at least $tn/r$ curves of $S$ is  
$$O \left( (t^3 r + \frac{mr^2t^2}{n^2}) e^{-t} \right)$$ 
\end{lemma}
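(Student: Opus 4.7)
The plan is a standard Clarkson–Shor style exponential decay argument, applied dyadically and combined with Lemma~\ref{alemma:cssegments}. Fix a canonical trapezoid $\Delta$ determined by $4$ curves of $S$, and let $|\Delta| = |\Delta \cap S|$ be the number of curves of $S$ crossing the open interior of $\Delta$. Then $\Delta$ appears in $\Xi(R)$ if and only if its four defining curves are in $R$ and none of the $|\Delta|$ crossing curves are in $R$. Since each curve is included independently with probability $p = Cr/n$, this event has probability $p^{4}(1-p)^{|\Delta|} \leq p^{4} e^{-p|\Delta|}$.

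Next I would partition the set of canonical trapezoids with $|\Delta| \geq tn/r$ into dyadic levels: for each integer $j \geq 0$, let $\Xi_j$ be the canonical trapezoids with $k_j \leq |\Delta| < 2k_j$, where $k_j = 2^{j}\cdot tn/r$. Applying Lemma~\ref{alemma:cssegments} with threshold $2k_j$ gives $|\Xi_j| = O(n\,k_j^{3} + m\,k_j^{2})$. Summing the per-trapezoid bound over each level yields
\begin{equation*}
\mathbb{E}[\,\#\{\Delta \in \Xi(R) : |\Delta| \geq tn/r\}\,]
\;\leq\; \sum_{j \geq 0} \bigl(n\,k_j^{3} + m\,k_j^{2}\bigr)\, p^{4}\, e^{-p k_j}.
\end{equation*}

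The key routine computation is to substitute $p = Cr/n$ and $k_j = 2^{j}tn/r$: this gives $n\,k_j^{3}\,p^{4} = O(2^{3j} t^{3} r)$, $m\,k_j^{2}\,p^{4} = O(2^{2j} t^{2} m r^{2}/n^{2})$, and $p k_j = C\,t\, 2^{j}$. Hence the sum becomes
\begin{equation*}
\sum_{j \geq 0} O\!\left( 2^{3j} t^{3} r + 2^{2j} t^{2}\frac{m r^{2}}{n^{2}} \right) e^{-C t 2^{j}},
\end{equation*}
which (provided $C$ is a sufficiently large constant, as is the case since we chose $C$ large enough to ensure the $\eps$-net property earlier) is dominated by the $j=0$ term because the double-exponential factor $e^{-Ct\,2^{j}}$ crushes the polynomial growth $2^{3j}$ in $j$. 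This leaves $O\!\bigl((t^{3} r + t^{2} m r^{2}/n^{2})\, e^{-t}\bigr)$, as required.

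The only non-routine step is the dyadic stratification combined with verifying that the geometric series in $j$ really does collapse to its first term; this is where the constant $C$ in the sampling probability $p = Cr/n$ must be chosen large enough to overwhelm the $2^{3j}$ blowup in the count of canonical trapezoids. Everything else is bookkeeping with the Clarkson–Shor bound of Lemma~\ref{alemma:cssegments}.
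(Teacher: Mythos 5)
Your proof is correct and follows essentially the same Clarkson--Shor exponential-decay argument as the paper: bound the per-trapezoid probability by $p^4(1-p)^{|\Delta|}$, control the number of candidate trapezoids via Lemma~\ref{alemma:cssegments}, and observe the resulting geometric collapse. Your explicit dyadic stratification over $|\Delta|$ is in fact a cleaner rendering of the paper's terser ``the bound follows by summing up'' step, and your aside about the constant $C$ in $p=Cr/n$ is a legitimate observation (the paper's stated $e^{-t}$ is really $e^{-\Omega(t)}$ unless $C\geq 1$).
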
 
\begin{proof}
By definition:
$$ E[ |\Delta \in \Xi(R) \ s.t. \ |\Delta \cap S| = tn/r| ]
= |\Delta \ s.t. \ \ |\Delta \cap S| = tn/r| \cdot p^4 (1-p)^{tn/r} $$

Using Lemma~\ref{alemma:cssegments},
\begin{eqnarray*}
E[ |\Delta \in \Xi(R) \ s.t. \ |\Delta \cap S| = tn/r| ] & \leq & O\left( n(tn/r)^3 + m(tn/r)^2 \right) p^4 (1-p)^{tn/r} \\
%& \leq & \left( n(tn/r)^3 + m(tn/r)^2 \right) (r/n)^4 e^{-t} \\
& = & O\left( (t^3 r + \frac{mr^2t^2}{n^2})e^{-t} \right)
\end{eqnarray*}
The bound follows by summing up over all trapezoids intersecting at least $tn/r$
curves in $S$.
\end{proof}

Now we can complete the proof of the theorem. Let $n_{\Delta} = t_{\Delta} n/r$
be the number of   curves in $S$ intersected by
each trapezoid $\Delta \in \Xi(R)$ (and $m_{\Delta}$ the number of their intersections).
Using the weaker bound, refine
trapezoid $\Delta$ by adding a  $(1/t_{\Delta})$-net $R_{\Delta}$
for all the $t_{\Delta} n/r$ curves of $S$ intersected by $\Delta$.
The resulting expected total size of the trapezoidal partition is:
\begin{eqnarray*}
&=&  |R| + \sum_{\Delta} Pr[\Delta \in \Xi(R)] \cdot \text{ Size of trapezoidal decomposition of   $(1/t_{\Delta})$-net within $\Delta$} \\
&=& |R| + \sum_{\Delta} Pr[\Delta \in \Xi(R)] \cdot O\left( t_{\Delta} \log t_{\Delta} + \frac{m_{\Delta}t_{\Delta}^2 \log^2 t_{\Delta}}{n_{\Delta}^2} \right) \text{ (using the weaker bound)} \\
&\leq&  |R| + \sum_{\Delta} Pr[\Delta \in \Xi(R)] \cdot  O\left( t_{\Delta}^2 \log^2 t_{\Delta} \right) \text{ (as $m_{\Delta} \leq n_{\Delta}^2$)}\\
&=& |R| + \sum_j \sum_{\substack{\Delta \ s.t. \ \\ 2^j \leq t_{\Delta} \leq 2^{j+1}}} Pr[\Delta \in \Xi(R)] \cdot  O\left( t_{\Delta}^2 \log^2 t_{\Delta} \right) \\
&\leq& |R| + \sum_j E[ \text{ Number of trapezoids $\Delta$ in $\Xi(R)$ with $2^j \leq t_{\Delta}$ } ] \cdot O\left( 2^{2(j+1)} \log^2 2^{j+1} \right) \\
&\leq& |R| + \sum_j O\left( (2^{3j} r + \frac{mr^22^{2j}}{n^2}) e^{-2^j} \right) \cdot O\left( 2^{2(j+1)} \log^2 2^{j+1} \right) \text{ (Lemma~\ref{alemma:expo})}\\
&=& |R| + r \sum_j O\left( 2^{3j}  e^{-2^j} \right) \cdot O\left( 2^{2(j+1)} \log^2 2^{j+1} \right) +  \frac{mr^2}{n^2} \sum_j O\left( 2^{2j} e^{-2^j} \right) \cdot O\left( 2^{2(j+1)} \log^2 2^{j+1} \right)\\
&=& np + mp^2 + O(r) + O(\frac{mr^2}{n^2})  = O( r+ \frac{mr^2}{n^2}) \text{ (the summands form a geometric series)}
\end{eqnarray*}
as required. This finishes the proof of Lemma~\ref{thm:mpartition}.
\end{proof}

\newpage

\end{document}